\documentclass[letterpaper,11pt]{article}

\newcommand{\blind}{0} 

\usepackage{listings}
\lstset{
	basicstyle=\small\ttfamily,
	columns=flexible,
	breaklines=true,
    breakindent=0em,
}

\usepackage[noend]{algpseudocode}
\algnewcommand\algorithmicinput{\textbf{Input:}}
\algnewcommand\Input{\item[\algorithmicinput]}
\algnewcommand\algorithmicoutput{\textbf{Output:}}
\algnewcommand\Output{\item[\algorithmicoutput]}
\algblockdefx[PFOR]{PFor}{EndPFor}[1]{\textbf{Parallel For} #1 \algorithmicdo}{\algorithmicend} 
\algnewcommand{\IIf}[1]{\State\algorithmicif\ #1\ \algorithmicthen}
\makeatletter
\ifthenelse{\equal{\ALG@noend}{t}}%
  {\algtext*{EndPFor}}
  {}%
\makeatother
\makeatletter
\algrenewcommand\ALG@beginalgorithmic{\footnotesize}
\makeatother
\makeatletter
\newcounter{HALG@line}
\renewcommand{\theHALG@line}{\thealgorithm.\arabic{ALG@line}}
\makeatother

\usepackage[font=footnotesize]{caption} 
\captionsetup[algorithm]{font=footnotesize}

\usepackage{pifont}

\usepackage[autonum,colorhypersetup]{shortex}
\usepackage{natbib}
\usepackage{enumitem}
\setlist{noitemsep,topsep=3pt} 

\newcommand{\statespace}{\mathscr{X}}
\newcommand{\prodss}{\widetilde{\statespace}}
\newcommand{\prop}[1]{#1_{\mathrm{prop}}}
\newcommand{\xprop}{\prop{x}}
\newcommand{\iprop}{\prop{i}}
\newcommand{\epsprop}{\prop{\epsilon}}
\newcommand{\target}{\pi} 
\newcommand{\targetv}{\target^{(V)}} 
\newcommand{\pibeta}{\target^{(\beta)}} 
\newcommand{\lifted}{\widetilde{\pi}} 
\newcommand{\sigalg}{\Sigma} 
\newcommand{\prodsigalg}{\widetilde{\sigalg}} 
\newcommand{\mespace}{(\statespace,\sigalg)} 
\newcommand{\prodmespace}{(\prodss,\prodsigalg)} 
\newcommand{\tdxpath}{\boldsymbol{\tdx}} 
\newcommand{\pathspace}{\prodss^\infty} 
\newcommand{\pathmespace}{(\pathspace, \prodsigalg^\infty)} 
\newcommand{\indset}{\{0,\dots,N\}} 
\newcommand{\indsetone}{\{1,\dots,N\}} 
\newcommand{\epsset}{\{-1,+1\}} 
\newcommand{\kexp}{\mathbf{K}_{\text{e}}} 
\newcommand{\ktem}{\mathbf{K}_{\text{t}}} 
\newcommand{\knrst}{\mathbf{K}} 
\newcommand{\EEi}[1]{\EE^{(#1)}} 
\newcommand{\colldist}{\{\pi_i\}_{i=0}^N}
\newcommand{\atom}{\mcA} 

\newcommand{\rest}{\widehat{R}} 
\newcommand{\acc}[1]{\alpha_{#1}} 
\newcommand{\rej}[1]{\rho_{#1}} 
\newcommand{\dbeta}{\Delta\beta} 
\newcommand{\rtt}{T_{\mathrm{RT}}} 
\DeclareMathOperator{\TE}{TE} 
\newcommand{\TEest}{\widehat{\TE}} 
 %
 %
\newcommand{\TEinfty}{\TE^{\infty}} 
\newcommand{\Kmin}{K_{\text{min}}} 
\newcommand{\mknrst}{\mathbf{k}} 
\DeclareMathOperator{\med}{med} 
\DeclareMathOperator{\LSE}{LSE} 
\newcommand{\Vs}{\mathbf{V}} 
\DeclareMathOperator{\Id}{Id} 

\usepackage{amsmath,amssymb,amsthm,bm,bbm,amsfonts,mathtools,thmtools} 
\usepackage[capitalize,sort,compress]{cleveref} 

\newcommand{\plotwidth}{\textwidth}
\newcommand{\shorterplotwidth}{0.73\textwidth}

\author{Miguel Biron-Lattes\\
    \and
    Trevor Campbell \\
    \and 
    Alexandre Bouchard-C{\^o}t{\'e} \\
    Department of Statistics, University of British Columbia}

\title{Automatic regenerative simulation via non-reversible simulated tempering}

\begin{document}

\maketitle

\begin{abstract}
Simulated Tempering (ST) is an MCMC algorithm for complex target distributions 
that operates on a path between the target and a more amenable reference
distribution. Crucially, if the reference enables \iid sampling, 
ST is \emph{regenerative} and can be parallelized across independent tours. 
However, the difficulty of tuning ST has hindered
its widespread adoption. 
In this work, we develop a simple \emph{nonreversible} ST (NRST) algorithm,
a general theoretical analysis of ST, and an automated tuning procedure
for ST. A core contribution that arises from the analysis is a
novel performance metric---\emph{Tour Effectiveness (TE)}---that 
controls the asymptotic variance of estimates from ST for bounded test functions. 
We use the TE to show that NRST dominates its reversible counterpart.
We then develop an automated tuning procedure for NRST algorithms that targets the TE 
while minimizing computational cost. This procedure enables
straightforward integration of NRST into existing probabilistic programming languages.  
We provide extensive experimental evidence that our tuning scheme improves the performance
and robustness of NRST algorithms on a diverse set of probabilistic models.
\end{abstract}

\section{Introduction}\label{sec:intro}

Regenerative Markov chain Monte Carlo (MCMC) is an approach to MCMC that 
splits the Markov chain into independent tours at 
\emph{regeneration times}---steps at which the state of the chain is 
independent of the previous \citep[see][\S 6.4]{ripley1987stochastic}. 
There are two important advantages to regenerative simulation: 1) computation can be 
trivially parallelized by running tours concurrently,
and 2) it enables straightforward estimation of Monte Carlo error based
on the fact that the tours are \iid; indeed, regenerative 
MCMC is one of few ways to obtain an ``honest'' estimate \citep{jones_honest_2001}.
Despite the benefits of regenerative MCMC, several 
practical difficulties have prevented its widespread adoption.
First, existing methods typically require model-specific mathematical derivations to identify
regeneration times. For example, many methods rely on 
minorization conditions on the transition kernel of the Markov chain 
\citep{mykland1995regeneration,hobert_applicability_2002}, which 
must be derived on a case-by-case and are often difficult to obtain \citep{jones_honest_2001}.
Some other approaches avoid model-specific derivations, both in the discrete time setting 
\citep{brockwell2005identification,minh_regenerative_2012,%
nguyen_regenerative_2017,douc2022kickkac}, and more recently using Markov 
processes 
\citep{wang2021regeneration,mckimm2022sampling}, but still are not at the stage where they can be 
used in an automatic fashion.

The starting point of this paper is the observation that Simulated Tempering 
(ST) \citep{geyer1995annealing} can be used to automate regenerative MCMC in a large
class of probabilistic programs such as BUGS \citep{Lunn2000} and Stan \citep{stan2023stan} (details in \cref{sec:PPLs}). Given a path between a reference distribution
and a target distribution of interest, ST builds an augmented Markov chain
jointly on the original variable and the position along the path.
As long as the reference enables \iid simulation, the steps at 
which the sampler visits this distribution are regeneration times by construction. 
Past studies on ST---including recent \emph{non-reversible} variants 
\citep{sakai2016irreversible,faizi2020simulated}---have 
not leveraged this property. Although ST enables automated regenerative MCMC, 
it is not without its own drawbacks; in particular, it is difficult to tune robustly \citep[see e.g.][]{mitsutake2000replica,park2007choosing}.

In this work, we introduce several key contributions to fully realize the potential 
of ST as an automated regenerative MCMC method:
\begin{enumerate}
\item A simple non-reversible ST algorithm based on the \emph{lifting approach} \citep{chen1999lifting}.
\item A diagnostic---\emph{Tour Effectiveness (TE)}---that 
characterizes the performance of ST as a number between 0 and 1. 
In contrast to the typical MCMC notion of Effective Sample 
Size (ESS), the tour effectiveness can be computed without making reference 
to a specific test function. From the user's point of view, TE has a 
``look and feel'' akin to relative ESS in Sequential Monte Carlo (SMC).
\item A novel idealized model of ST algorithms that 
elucidates the relationship between TE and 
tuning parameters of the algorithm. The model also leads to an extension 
of the \emph{communication barrier}---originally
developed for Parallel Tempering \citep{syed2022nrpt}---to ST algorithms.
We show that the magnitude of the barrier is inversely related to TE.
\item A novel automated ST adaptation algorithm aimed at maximizing TE while
minimizing cost, based on the aforementioned relationship between TE and tuning parameters.
This tuning scheme has no free hyper-parameters and 
handles a wide range of problems.
\end{enumerate} 
We emphasize that these contributions exploit the  
regeneration structure of ST, which allows us to define a generic performance metric, 
the tour effectiveness, and then to optimize it and to formulate tuning 
recommendations.
After these developments, we provide a variety of experiments demonstrating that our tuning
procedure delivers substantial gains in efficiency for multiple reversible and non-reversible ST algorithms.
Finally, we discuss the implementation of NRST in Probabilistic Programming Languages (PPLs),
and provide guidance on deploying NRST to distributed systems.

\section{Background}\label{sec:background}

\subsection{Regenerative Markov chain Monte Carlo}

Markov chain Monte Carlo (MCMC) \citep[e.g.][]{robert2004monte} is a standard
tool for approximating integrals under an intractable probability distribution
$\target$ of interest. Under certain conditions, sequentially simulating a path 
$\{x_n\}_{n\in\posInts}$ of a Markov chain with invariant distribution $\pi$ 
allows us to consistently estimate the expectation $\target(h)$ of a real-valued function 
$h$ via ergodic averages $m_S := \frac{1}{S}\sum_{n=0}^{S-1} h(x_n)$.
In practice, however, the simulation must stop at a finite $S$, and assessing the
\emph{Monte Carlo error} of $m_S$ becomes important \citep{flegal2008markov}. Under 
further conditions \citep[e.g.][Thm.\ 1]{hobert_applicability_2002}, this error is captured by a Central Limit Theorem (CLT)
$\sqrt{S}(m_S - \target(h)) \xrightarrow{\text{d}} \distNorm(0, \gamma_s^2)$,
where $\gamma_s^2$ is known as the \emph{asymptotic variance} of $m_S$. If
a consistent estimator of $\gamma_s$ is available, it is possible 
to construct asymptotic confidence intervals for the Monte Carlo error. 
This classical approach to MCMC has two important drawbacks.
First, even when the conditions for the CLT hold, designing a 
consistent estimator for $\gamma_s$ is challenging \citep{jones_honest_2001}. 
Second, since the Markov chain proceeds sequentially in time, 
it is difficult to leverage the parallelism 
offered by current systems 
\citep{wilkinson2005parallel}.

\emph{Regenerative MCMC} \citep{mykland1995regeneration} is 
an alternative approach that naturally leverages massively parallel computation 
and yields straightforward estimates of Monte Carlo
error. Roughly, a Markov 
chain is regenerative if there exist stopping times
$0=r_0<r_1<r_2<\dots$ with $r_k\uparrow \infty$, such that for every $k\in\nats$,
the process after $r_k$ is
independent of its past. These \emph{regeneration times} can be used to split the 
sample path of the process into disjoint \emph{tours} 
$x^{(k)} := \{x_n\}_{n=r_{k-1}}^{r_k-1}$, for all $k\in\nats$. 
As the tours are \iid, they can trivially be simulated concurrently \citep{fishman1983accelerated}.
And at time $r_k$,
\[
m_{r_k} = \frac{1}{r_k}\sum_{n=0}^{r_k-1} h(x_n) = \frac{\sum_{j=1}^k s_j}{\sum_{j=1}^k \tau_j},
\]
where $\tau_k:=r_k-r_{k-1}$ is the length of the $k$-th tour, and
$s_k := \sum_{n=r_{k-1}}^{r_k-1} h(x_n)$. Since $r_k\uparrow \infty$, it holds
that $m_{r_k}\xrightarrow{\text{a.s.}} \pi(h)$ as $k\to \infty$. Under further
conditions, the independence of the tours can be used to leverage the CLT for
\iid random variables to show that
\[
 \gamma_r^2 := \frac{\EE[(s_1 - \tau_1\pi(h))^2]}{\EE[\tau_1]^2}, \qquad \sqrt{k}(m_{r_k} - \pi(h))  \xrightarrow{\text{d}} \distNorm(0, \gamma_r^2), \qquad k\to\infty.
\]
In contrast to the standard serial implementation of MCMC, $\gamma_r^2$ admits 
a simple and consistent estimator.

The main drawback of regenerative MCMC is the fact that finding regeneration
times for Markov chains on arbitrary spaces is not straightforward.
\citet{mykland1995regeneration} provided the first attempt, exploiting
the splitting technique \citep{nummelin1978splitting} to introduce regeneration 
when a small set is known to exist. But finding small sets is a challenging task in general. 
Since then, several  alternative procedures to induce regeneration in general Markov chains 
have been proposed \citep{sahu2003self,brockwell2005identification,minh_regenerative_2012,
nguyen_regenerative_2017,douc2022kickkac}. In the present work, we focus on a design of
Simulated Tempering (ST) \citet{geyer1995annealing} that admits regenerative structure.

\subsection{Simulated tempering}\label{sec:simulated_tempering}

Let $\statespace$ be a set, and let $\pi_0$ be a \emph{reference distribution} on $\statespace$ for which \iid sampling is
tractable. Assume that $\pi_0$ has a density with respect to a base measure $\dee x$ on 
$\statespace$; we overload notation so that $\pi_0$ also
refers to this density.
Let $\target$ be a \emph{target distribution} on $\statespace$ with density
\[\label{eq:def_target_dist}
\target(x) := \frac{\pi_0(x)e^{- V(x)}}{\mcZ}, \qquad \mcZ := \int \pi_0(x) e^{-V(x)}\dee x < \infty.
\]

Simulated tempering \citep{lyubartsev1992new,marinari1992simulated,geyer1995annealing}
aims to produce Monte Carlo samples from $\target$ by leveraging the tractability of 
$\pi_0$. It does so by connecting $\pi_0$ to $\pi$ with a path of distributions, and 
then moving draws from $\pi_0$ along the path.
A common choice is the path of \emph{tempered distributions} defined by
\[\label{eq:def_temp_dist}
\forall \beta\in[0,1]:\ \pi^{(\beta)}(x) :=  \frac{\pi_0(x)e^{-\beta V(x)}}{\mcZ(\beta)}, \quad \mcZ(\beta) := \int \pi_0(x)e^{-\beta V(x)}\dee x,
\]
where $\mcZ(\beta)$ is the normalization constant of $\pi^{(\beta)}$. These constants are finite
(\cref{lemma:finite_norm_const}) but 
the values $\mcZ(\beta)$ are usually unknown. 
For example, consider the case of a Bayesian model with prior
density $\varpi(x)$ and likelihood $L(y|x)$ for fixed observation $y$.
If $\varpi$ is proper,
we can set $\pi_0=\varpi$ and $V(x)=-\log(L(y|x))$ in 
\cref{eq:def_temp_dist} to obtain a valid path from $\varpi$ to the 
posterior density $\target(x) \propto \varpi(x)L(y|x)$. 
Note that much of the ST literature 
suggests choosing $\pi_0\equiv1$ and 
$V(x)=-\log(\varpi(x) L(y|x))$ \citep[e.g.][]{woodard2009conditions}, but this
is a valid path only in the restrictive setting 
where the uniform measure on $\statespace$ is finite, thus excluding many 
interesting statistical problems. Another benefit of the prior--posterior path
is that it can be built automatically in many probabilistic programming
languages with no additional user input (see \cref{sec:PPLs}). 

We now discretize the path to obtain a sequence of distributions $\{\pi_i\}_{i=0}^N$ to be used in computation. 
For $N\in\nats:=\{1,2,\dots\}$ and $0=\beta_0<\beta_1<\dots<\beta_N=1$, 
define $\mcP:=\{\beta_i\}_{i=0}^N$ and the sequence $\colldist$ by
$\pi_i := \pi^{(\beta_i)}$ for  $i\in\indset.$
ST builds a Markov chain $\{x_n,i_n\}_{n\in\posInts}$ on the 
product space $\statespace\times\indset$ with invariant density
\[\label{eq:def_ST_lifted}
\pi_\text{st}(x, i) := p_i\pi_i(x),
\]
where $\{p_i\}_{i=0}^N$ is the marginal distribution of the $i$ component,
and the conditional distribution of $x$ given $i$ is precisely $\pi_i$.
The state of the Markov chain defined by ST is updated by
alternating \emph{exploration} and \emph{tempering} steps. 
During exploration, $i$ is fixed, so only the $x$ component is updated. 
If $i=0$, a new state $x'\sim\pi_0$ is drawn independently from the reference distribution.
Otherwise, $x$ is updated via $x'|x,i \sim K_i(x,\cdot),$
where $\{K_i\}_{i=1}^N$ is a collection of Markov kernels on $\statespace$
such that each $K_i$ is $\pi_i$-irreducible and $\pi_i$-invariant (e.g. Gibbs, 
HMC, etc). The tempering step,
on the other hand, fixes $x$ and proposes a move $i \to i'$ with a Metropolis--Hastings correction. 
If the proposal is symmetric, then the move is accepted with probability 
\[\label{eq:def_comm_step_acc_prob}
\acc{i,i'}(x) &= \min\left\{1,\frac{\pi_{i'}(x)}{\pi_i(x)} \frac{p_{i'}}{p_i} \right\} = e^{-\max\left\{0, (\beta_{i'}-\beta_i)V(x) - (c_{i'}-c_i)\right\}},
\]
where $\mcC:=\{c_i\}_{i=0}^N$ is a collection of real-valued tuning parameters---the
\emph{level affinities}---that are related to the $p_i$ via
\[\label{eq:pseudo_prior_repar}
\forall i\in\indset: \quad p_i := \frac{\mcZ(\beta_i)e^{c_i}}{\sum_{j=0}^N\mcZ(\beta_j)e^{c_j}}.
\]
This relationship between $c_i$ and $p_i$ has the effect of canceling the
$\mcZ(\beta_i)$ in $\pi_i$ that would otherwise appear in the Metropolis--Hastings 
ratio. Thus, $\acc{i,i'}(x)$ can be evaluated \emph{without knowing the 
normalizing constants}.

As highlighted by \citet{geyer1995annealing}, the \iid draws from $\pi_0$
induce regeneration in ST: whenever the chain visits $\statespace\times\{0\}$, 
the next state is drawn independently of the previous one. Sets with this property 
are called \emph{atoms} \citep[][Def. 6.1.1]{douc2018markov}. Indeed, if 
$\{t_k\}_{k\in\nats}$ denote the times at which the chain is in the atom, then 
the stopping times $r_k:=t_k+1$ are regeneration times. Hence, ST allows 
regenerative simulation.

\section{Methodology}\label{sec:methodology}

\subsection{Non-reversible simulated tempering}\label{sec:NRST}

The effectiveness of simulated tempering can
be hindered if moving between the boundaries $i=0$ and $i=N$
is difficult. Indeed if, starting from $i=0$, the chain returns to
this level before visiting $i=N$, the tour produces no sample from the 
target, resulting in wasted effort.
Similarly if the chain never returns to $i=0$, then
regenerations never occur. 
The original versions of ST have been found to 
exhibit this problem due to the reversibility of the tempering step
\citep[][see \cref{sec:related_work} for a discussion of these methods]{sakai2016irreversible,faizi2020simulated}.

We therefore opt for a \emph{non-reversible} simulated tempering 
method (NRST), using lifting \citep{chen1999lifting} to 
induce momentum in the tempering dynamics and thus reduce 
undesirable random walk behavior. Concretely, we augment the Markov chain 
state $\tdx:=(x,i,\eps)$ to be in the product space 
\[\label{eq:def_prodss}
\prodss := \statespace \times \indset\times \epsset.
\]
The $\eps$ component represents the current direction of movement along $\indset$.
Indeed, the new tempering step makes a deterministic proposal to move the $i$ index
one unit along the current $\eps$ direction; i.e., $\iprop = i+\eps$. 
If $\iprop\notin\indset$, the proposal is immediately rejected. Otherwise, it
is accepted with probability $\alpha_{i,\iprop}(x)$, as defined in
\cref{eq:def_comm_step_acc_prob}. If the tempering step is rejected, the direction 
is reversed $\eps\gets-\eps$. After tempering, NRST
uses an exploration step identical to the one in simulated tempering.

In \cref{app:pseudo-code} we show pseudo-code for a function that performs a full NRST step 
(\cref{alg:NRST_step}). We also detail how this function is used to run a complete NRST 
tour (\cref{alg:NRST_tour}). Both the exploration and the tempering steps are invariant 
(\cref{prop:lifted_invariance}) with respect to the \emph{lifted} distribution $\lifted$ 
with density
\[\label{eq:def_lifted_dist}
\forall \tdx \in \prodss:\ \lifted(x,i,\eps) := \frac{1}{2}p_i\pi_i(x),
\]
where $\{p_i\}_{i=0}^N$ are as in \cref{eq:pseudo_prior_repar}. As in ST, the conditional distribution of $x$ given $i$ (marginalizing $\eps$) under $\lifted$ 
is $\pi_i$. Indeed, it is not difficult to show that
\[
\target(h) = \frac{\lifted(h(x)\ind\{i=N\})}{\lifted(\ind\{i=N\})}.
\]
for any $h:\statespace\to\reals$. Moreover, NRST preserves the regenerative 
structure of ST: consider 
$\atom := \statespace\times\{0\}\times\{-1\}$,
and note that starting from any $\tdx\in\atom$, the law of the next state $\tdx'$ 
is
\[\label{eq:def_nu_atom_measure}
 \nu(\tdx') := \pi_0(x')\ind\{i'=0, \eps'=+1\}.
\]
Hence, $\atom$ is an atom with regeneration measure $\nu$. 
We use $\Pr_\nu$ to denote the law of the Markov chain with 
initial distribution $\nu$, and use $\EE_\nu$ for corresponding expectations. 

The presence of the atom allows us to use the machinery of regenerative simulation
to build a consistent estimator of $\pi(h)$. To this end, define the sequence of
times at which the chain visits $\atom$ by setting $T_0:=-1$ and recursively defining
\[\label{eq:def_time_atom_visits}
\forall k\in\nats:\ T_k := \inf\{n>T_{k-1}: \tdx_n \in \atom \}.
\]
Furthermore, for any function $f:\prodss\to\reals$, define the \emph{tour sums of $f$}
\[\label{eq:def_tour_sums}
\forall k\in\nats:\ s_k(f) := \sum_{n=T_{k-1}+1}^{T_k} f(\tdx_n).
\]
Under $\Pr_\nu$, $\{s_k(f)\}_{k\in\nats}$ are \iid (\cref{lemma:tour_sums_iid}),
$s_j(\ind\{i=N\})$ is the number of visits to the top level during the 
$j$-th tour, and the length of the $k$-th tour is $\tau_k := s_k(1) = T_k - T_{k-1}.$
Finally, let $\rest_k(h)$ be an estimator of $\target(h)$ given by
\[\label{eq:def_ratio_est_implicit}
\rest_k(h) := \frac{\sum_{j=1}^k s_j(h(x)\ind\{i=N\})}{\sum_{j=1}^k s_j(\ind\{i=N\})}.
\]
$\rest_k(h)$ is consistent due to the Law of Large Numbers (LLN) for the lifted chain (\cref{thm:LLN}).
This is a known result \citep[Thm.\ 6.6.2][]{douc2018markov}, 
but we reproduce it here for convenience (proof in \cref{app:proofs}).
Let $s(f)$ and $\tau$ be independent copies of $s_k(f)$ and $\tau_k$.
\begin{theorem}\label{thm:LLN}
For any $f:\prodss\to\reals$ such that $\lifted(|f|) < \infty$,
\[\label{eq:LLN_toursums_identity}
\EE_\nu[s(f)] = \EE_\nu\left[\tau\right]\lifted(f) = \frac{2}{p_0}\lifted(f).
\]
Moreover, if $g:\prodss\to\reals$ satisfies $\lifted(|g|)<\infty$
and $\lifted(g)\neq 0$, then
\[\label{eq:LLN_complete_tours}
\lim_{k\to\infty} \frac{\sum_{j=1}^k s_j(f)}{\sum_{j=1}^k s_j(g)} = \frac{\EE_\nu[s(f)]}{\EE_\nu[s(g)]} = \frac{\lifted(f)}{\lifted(g)}, \quad \Pr_{\nu}\text{-a.s.}
\]
\end{theorem}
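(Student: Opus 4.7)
The plan is to derive the occupation identity \eqref{eq:LLN_toursums_identity} via a Kac-type argument, compute the constant $\EE_\nu[\tau]=2/p_0$ from Kac's return-time formula applied to the atom $\atom$, and then obtain the ratio limit \eqref{eq:LLN_complete_tours} from the strong LLN applied to the iid tour sums guaranteed by \cref{lemma:tour_sums_iid}.

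First I would introduce the set function $\mu(A) := \EE_\nu[s(\ind_A)] = \EE_\nu\left[\sum_{n=0}^{T_1} \ind\{\tdx_n \in A\}\right]$ and argue $\mu = \EE_\nu[\tau]\,\lifted$. The key computation is that, by the Markov property and the definition of $T_1$, for the lifted kernel $\widetilde{K}$ we have $\mu\widetilde{K}(A) = \EE_\nu\left[\sum_{n=1}^{T_1+1} \ind\{\tdx_n\in A\}\right]$; since both $\tdx_0$ and $\tdx_{T_1+1}$ are $\nu$-distributed (the latter by the regenerative structure, see \eqref{eq:def_nu_atom_measure}), shifting the summation index and cancelling in expectation yields $\mu\widetilde{K}(A) = \mu(A)$. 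Hence $\mu$ is invariant. Combined with the invariance of $\lifted$ (\cref{prop:lifted_invariance}) and irreducibility of the lifted chain (which follows from the $\pi_i$-irreducibility of each $K_i$ together with the fact that NRST tempering deterministically traverses all levels), uniqueness of invariant probabilities forces $\mu = c\,\lifted$, and taking $A = \prodss$ identifies $c = \EE_\nu[\tau]$. A standard monotone-class plus linearity argument extends the identity from indicators to any $\lifted$-integrable $f$. For the constant itself, Kac's return-time formula for the atom gives $\EE_\atom[\inf\{n\geq 1:\tdx_n\in\atom\}] = 1/\lifted(\atom)$; by \eqref{eq:def_lifted_dist}, $\lifted(\atom) = \tfrac{1}{2}p_0 \int \pi_0(x)\dee x = p_0/2$, and since one NRST step from $\atom$ deterministically produces a $\nu$-distributed state, this return time equals $1 + \EE_\nu[T_1] = \EE_\nu[\tau]$, giving $\EE_\nu[\tau]=2/p_0$.

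Given the first part applied to $|f|$ and $|g|$, the iid tour sums $\{s_j(f)\}$ and $\{s_j(g)\}$ of \cref{lemma:tour_sums_iid} have finite first moments, so the strong LLN gives $k^{-1}\sum_{j\leq k} s_j(f) \to \EE_\nu[s(f)]$ and similarly for $g$, $\Pr_\nu$-almost surely. The assumption $\lifted(g)\neq 0$ makes the denominator limit nonzero by \eqref{eq:LLN_toursums_identity}, so the ratio converges to $\EE_\nu[s(f)]/\EE_\nu[s(g)] = \lifted(f)/\lifted(g)$. The main obstacle is the occupation-measure step: one must both verify invariance of $\mu$ carefully and establish uniqueness of the invariant probability for the lifted chain in order to conclude $\mu \propto \lifted$. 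The rest is bookkeeping plus appeals to the classical Kac formula and the strong LLN for iid summands.
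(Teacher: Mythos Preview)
Your proposal is correct and follows essentially the same route as the paper. The only difference is packaging: where you verify by hand that the tour occupation measure $\mu(A)=\EE_\nu[s(\ind_A)]$ is $\widetilde K$-invariant and then invoke uniqueness of the invariant probability, the paper cites \citet[Thm.\ 6.4.2(iii)]{douc2018markov} for recurrent accessible atoms (via \cref{lemma:atom_recurrent_accessible}) to get $\lifted\propto\lambda_\atom$ directly, and uses \cref{lemma:translation_Douc_regeneration} to identify $\lambda_\atom(f)=\EE_\nu[s(f)]$. For the constant, the paper plugs $f=\ind_\atom$ into the identity and uses $s(\ind_\atom)=1$ $\Pr_\nu$-a.s., which is the same Kac computation you do but without a separate appeal to the return-time formula; the ratio limit via the iid strong LLN is identical to yours.
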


Using $f(\tdx):=h(x)\ind\{i=N\}$ and $g(\tdx):=\ind\{i=N\}$ in \cref{thm:LLN} shows
that $\rest(h)$ is a consistent estimator of $\target(h)$. 
Beyond consistency, a central limit theorem (\cref{thm:CLT}) holds for $\rest(h)$.
This result is equivalent to the one given in \citet{geyer1995annealing} for ST,
but we reproduce it here for convenience (proof in \cref{app:proofs}).

\begin{theorem}\label{thm:CLT}
Let $h:\statespace\to\reals$ be a function with $\target(|h|)<\infty$, and define
\[\label{eq:def_asymp_var}
\sigma(h)^2 := \frac{\EE_\nu\left[s\left(\ind\{i=N\}(h(x)-\target(h))\right)^2\right]}{\EE_\nu[s(\ind\{i=N\})]^2}.
\]
If $\sigma(h)^2<\infty$, then $\sqrt{k}\left( \rest_k(h) - \target(h) \right) \convd \distNorm\left(0,\sigma(h)^2\right)$ as $k\to\infty$.
\end{theorem}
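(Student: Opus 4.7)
The plan is to prove \cref{thm:CLT} by reducing the asymptotic behavior of the ratio estimator $\rest_k(h)$ to a standard i.i.d.\ CLT via a Slutsky-type argument. First I would rewrite the deviation as
\[
\sqrt{k}\left(\rest_k(h) - \target(h)\right) = \frac{k^{-1/2}\sum_{j=1}^k Y_j}{k^{-1}\sum_{j=1}^k s_j(\ind\{i=N\})},
\]
where $Y_j := s_j\!\left(\ind\{i=N\}(h(x) - \target(h))\right) = s_j(h(x)\ind\{i=N\}) - \target(h)\, s_j(\ind\{i=N\})$. By \cref{lemma:tour_sums_iid}, the sequence $\{Y_j\}_{j\in\nats}$ is i.i.d.\ under $\Pr_\nu$, and so is $\{s_j(\ind\{i=N\})\}_{j\in\nats}$.

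Next I would verify the moment conditions for both pieces. The hypothesis $\sigma(h)^2 < \infty$ is precisely the statement that $\EE_\nu[Y_1^2] < \infty$ (up to the nonzero denominator $\EE_\nu[s(\ind\{i=N\})]^2$), hence $Y_1$ is square-integrable. Applying \cref{thm:LLN} with $f(\tdx) := \ind\{i=N\}(h(x) - \target(h))$ yields
\[
\EE_\nu[Y_1] = \frac{2}{p_0}\lifted(f) = \frac{2}{p_0}\left[\lifted(h(x)\ind\{i=N\}) - \target(h)\lifted(\ind\{i=N\})\right] = 0,
\]
where the last equality is the identity $\target(h) = \lifted(h(x)\ind\{i=N\})/\lifted(\ind\{i=N\})$ noted earlier in the text. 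The classical i.i.d.\ CLT then gives $k^{-1/2}\sum_{j=1}^k Y_j \convd \distNorm(0, \EE_\nu[Y_1^2])$.

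For the denominator, $s_1(\ind\{i=N\})$ has finite mean because by \cref{thm:LLN} it equals $(2/p_0)\lifted(\ind\{i=N\}) = 2 p_N / p_0$, and this mean is strictly positive (assuming the chain actually reaches level $N$, i.e.\ $p_N > 0$, which is needed for $\rest_k(h)$ to be well-defined in the first place). The strong law of large numbers then gives $k^{-1}\sum_{j=1}^k s_j(\ind\{i=N\}) \xrightarrow{\Pr_\nu\text{-a.s.}} 2 p_N / p_0 > 0$. Slutsky's theorem combines the two limits into
\[
\sqrt{k}(\rest_k(h) - \target(h)) \convd \distNorm\left(0, \frac{\EE_\nu[Y_1^2]}{\EE_\nu[s(\ind\{i=N\})]^2}\right) = \distNorm(0, \sigma(h)^2),
\]
as desired.

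I do not expect any substantive obstacle: the regenerative structure encapsulated in \cref{lemma:tour_sums_iid} and the moment identities of \cref{thm:LLN} do all of the heavy lifting, and the remainder is a textbook ratio-CLT argument. The only delicate point is confirming that the finite-second-moment hypothesis on $Y_1$ is exactly $\sigma(h)^2 < \infty$ and that $\target(|h|) < \infty$ is used only to ensure $Y_1$ is well-defined (so that $\target(h) s_j(\ind\{i=N\})$ is finite a.s.).
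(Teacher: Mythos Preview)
Your proposal is correct and follows essentially the same route as the paper's proof: decompose $\sqrt{k}(\rest_k(h)-\target(h))$ as a ratio with numerator $k^{-1/2}\sum_j Y_j$ (the paper calls these $d_j$) and denominator $k^{-1}\sum_j s_j(\ind\{i=N\})$, apply the i.i.d.\ CLT to the numerator via \cref{lemma:tour_sums_iid} and the SLLN to the denominator via \cref{thm:LLN}, then combine with Slutsky. The only cosmetic difference is that the paper cites van der Vaart's joint-convergence result plus the continuous mapping theorem where you invoke Slutsky directly.
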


The variance $\sigma(h)^2$ can be estimated directly from the output of the 
regenerative simulation; indeed, by \cref{thm:LLN}, 
\[\label{eq:def_asymp_var_est}
\widehat{\sigma(h)^2}_k := \frac{k\sum_{j=1}^k s_j\left(\ind\{i=N\}(h(x)-\rest_k(h))\right)^2}{\left(\sum_{j=1}^k s_j(\ind\{i=N\})\right)^2}
\]
is a consistent estimator of $\sigma(h)^2$.
Then, an asymptotically valid $\alpha$-confidence interval may be constructed for
$\alpha\in(0,1)$, such that
\[\label{eq:def_asymptotic_CI}
\Pr_\nu\left( \left| \rest_k(h) - \target(h)\right|  \leq  \frac{\widehat{\sigma(h)}_k}{\sqrt{k}}z_\alpha \right) \xrightarrow{k\to\infty} \alpha, \qquad z_\alpha := \Phi^{-1}((1+\alpha)/2),
\]
with $\Phi^{-1}$ the quantile function
of a standard normal distribution.

\subsection{Tour effectiveness}\label{sec:methodology_TE}

The asymptotic variance estimate $\widehat{\sigma(h)^2}_k$ depends on $h$ in general; but 
for bounded test functions $|h| \leq 1$, we can uniformly bound $|h(x) - \rest_k(h)|$
and obtain a diagnostic for simulated tempering performance that we 
call the  \emph{estimated tour effectiveness},
 \[\label{eq:def_TEest}
\TEest := \frac{\left(\sum_{j=1}^k s_j(\ind\{i=N\})\right)^2}{k\sum_{j=1}^k s_j(\ind\{i=N\})^2}.
\]
The $\TEest$ diagnostic is a number between 0 and 1, and applies to any ST algorithm.
A higher $\TEest$ indicates more efficient motion
of the Markov chain along the path from the reference $\pi_0$ to the target $\pi$. 
Note that unlike the typical effective sample size (ESS) for MCMC algorithms, $\TEest$ does not
require the choice of a test function $h$. $\TEest$ is more akin to the relative effective sample size (rESS)
from sequential Monte Carlo (SMC) \citep{liu1995blind}:  both diagnostics exist in $[0,1]$ and are defined in terms of 
ratios of the first and second moments of a nonnegative ``weight'' (a density ratio for rESS, and a number of target visits for $\TEest$).
We will provide a more in-depth analysis of tour effectiveness in \cref{sec:analysis}.

For any confidence level $\alpha$ and
tolerance $\delta>0$, we can estimate the minimum number of tours $K_{\text{min}}$ 
required to obtain an approximate $\alpha$-confidence interval of half-width 
less than $\delta$ for all functions $|h|\leq 1$ by setting
\[\label{eq:def_Kmin}
K_{\text{min}}(\alpha,\delta,\TEest) := \left\lceil \frac{4}{\widehat{\TE}}\left(\frac{z_\alpha}{\delta}\right)^2 \right\rceil .
\]
\cref{alg:NRST_par} (\cref{app:pseudo-code}) describes how this method is used to run a
parallel regenerative simulation of NRST. Of course, after obtaining the results, it is 
possible to produce tighter intervals by computing $\widehat{\sigma(h)^2}$ for each 
function of interest.

\section{Theoretical analysis}\label{sec:analysis}

\subsection{The index process}\label{sec:index_process}

\begin{figure}[t]
\centering
\includegraphics[width=\plotwidth]{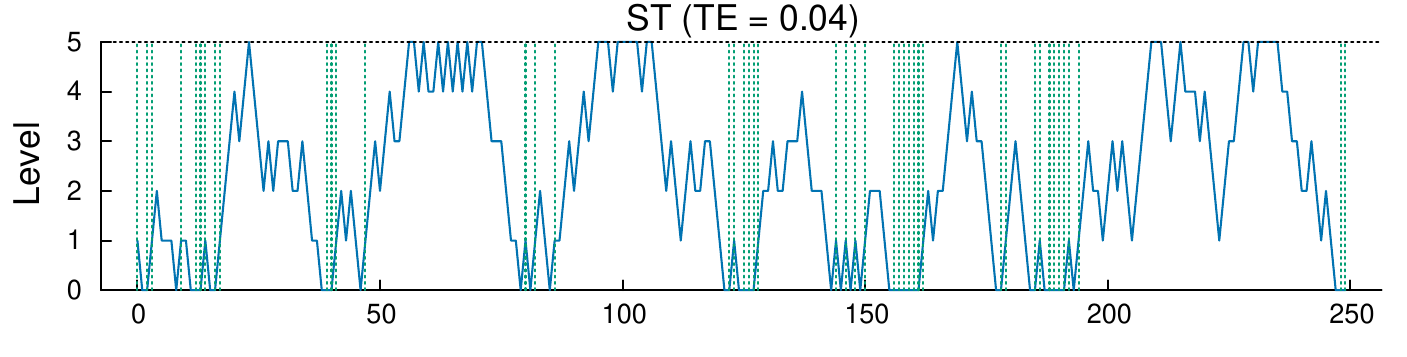}
\includegraphics[width=\plotwidth]{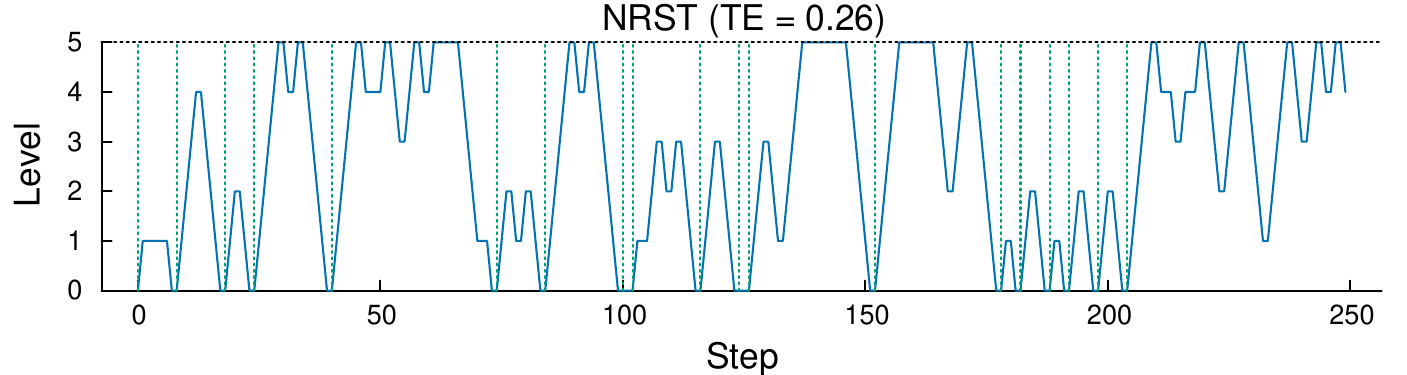}
\caption{First $250$ steps of the index processes of ST and NRST applied to a
toy problem, using the same grid ($N=6$), level affinities, and stream of
pseudo-random numbers on both algorithms. The start of each tour is marked by
vertical dotted lines.}
\label{fig:index_process}
\end{figure}

We now analyze tour effectiveness and use it to characterize the performance
of NRST and traditional ST. To begin, note that $\TEest$ depends
only on the tempering indices, and can be
computed without knowing the values of the state $x$.
Therefore, we focus the analysis on the \emph{index process} $\{(i_n,\eps_n)\}_{n\in\posInts}$. 
\cref{fig:index_process} shows a realization of the index process for both ST and 
NRST applied to a toy problem (see \cref{app:models_description}). 
Both algorithms use the same grid, level
affinities, and pseudo-random stream. Regenerations are
indicated by a vertical dotted line. \cref{fig:index_process} shows a 
stark difference between ST and NRST: while NRST produces 
tours that travel quickly between the reference and target, ST
produces meandering tours that inflate the variance of Monte Carlo estimators. This difference appears in the $\TEest$ values. 

Alas, studying the index process itself is hard,
because it is \emph{not} a Markov chain in general.
However, the tempering step depends only on $x$ via $V(x)$; so 
in cases where the exploration kernels are efficient enough that $V(x)$ 
appears to be \iid, the index process becomes Markovian. The following idealized 
assumption describes this condition precisely.

\begin{assumption}\label{assu:ELE}
For $i\in\{1,\dots,N\}$, let $K_i(x,\dee v')$ be the distribution of $V(x')$
conditional on the current state $x\in\statespace$, where $K_i$ is the $i$-th
exploration kernel and $x'$ is the state after exploration. There 
exists a probability distribution $q_i$ on $\reals$ such that 
$K_i(x, \dee v') = q_i(\dee v')$, for $\target$-almost all $x$.
\end{assumption}

\cref{assu:ELE} requires that  $V(x_n)$ ``mixes perfectly'' in one 
exploration step. Note that this is \emph{not} the same as the stronger 
requirement that $x_n$ itself does so during exploration 
(the latter would defy the point of using ST). 
\cref{assu:ELE} can be thought of as a \emph{model}, i.e., a simplifying 
framework that is not expected to hold exactly in real scenarios. 
However this model is \emph{useful} in the sense that we obtain
approximations allowing us to derive an adaptation scheme 
(\cref{alg:NRST_adapt}) that performs robustly across a variety of models---even 
when \cref{assu:ELE} does not exactly hold. 
For this reason, our experiments (\cref{sec:numerical_results}) focus on real 
target distributions in which \cref{assu:ELE} is not satisfied.
See \cref{sec:serial_correlation} for a discussion on how one can enforce 
\cref{assu:ELE} and the practical effects of doing so.
Related performance models have been used to analyze parallel tempering; see    
\citet[Sec.\ 3.3]{syed2022nrpt} for empirical examples where $V(x_n)$ mixes well 
but $x_n$ does not. 

We characterize the Markov index process using the acceptance and rejection 
probabilities under $\pi_i$, and their symmetrized versions, given by
\[\label{eq:def_average_acc_rej_probs}
\acc{i,i+\eps} &:= \EEi{i}[\alpha_{i,i+\eps}(x)], & \rej{i,i+\eps} &:=1-\alpha_{i,i+\eps}\\
\alpha_i &:= (\acc{i,i-1} + \acc{i-1,i})/2, & \rho_i &:= 1-\alpha_i.
\]
\cref{prop:marginal_chain} describes the index process using these probabilities.
\begin{proposition}\label{prop:marginal_chain}
Under \cref{assu:ELE}, the marginal distribution of the index process for both ST and NRST
is a homogeneous Markov chain on $\indset\times\epsset$, with invariant distribution $\lifted(i,\eps) := p_i/2.$
Moreover, the transition kernels for NRST and ST are:
\[
\mknrst_\mathrm{NRST}((i,\eps), (i',\eps')) &:= \acc{i,i+\eps}\ind\{i'=i+\eps, \eps'=\eps\} + \rej{i,i+\eps}\ind\{i'=i, \eps'=-\eps\}, \\
\mknrst_\mathrm{ST}((i,\eps), (i',\eps')) &:= (\acc{i,i+\eps'}\ind\{i'=i+\eps'\} + \rej{i,i+\eps'}\ind\{i'=i\})/2.
\]
\end{proposition}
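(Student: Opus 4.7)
The plan is to establish the result in three stages: first the Markov property of the index process under \cref{assu:ELE}, then the explicit form of the two transition kernels, and finally the invariance of $\lifted(i,\eps) = p_i/2$.

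For the Markov property, the key observation is that the tempering acceptance probability $\alpha_{i,i'}(x)$ depends on $x$ only through $V(x)$, per \cref{eq:def_comm_step_acc_prob}. Under \cref{assu:ELE}, the exploration kernel $K_i$ produces $V(x')$ with distribution $q_i$ independent of its input $x$; since $K_i$ is $\pi_i$-invariant, $q_i$ must coincide with the $V$-pushforward of $\pi_i$, so that $\EEi{i}[\alpha_{i,i+\eps}(x)] = \int \alpha_{i,i+\eps}(x)\,q_i(\dee v) = \alpha_{i,i+\eps}$. A short induction on $n$ then gives that, conditional on $(i_n, \eps_n)$ and the full preceding index history, $V(x_n) \sim q_{i_n}$ with no further dependence on the past. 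Averaging the one-step tempering-then-exploration dynamics over $V(x_n)$ produces transition probabilities for $(i_{n+1}, \eps_{n+1})$ depending only on $(i_n, \eps_n)$, establishing homogeneous Markovianity.

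Reading off the kernels is then a direct computation. For NRST, the tempering step makes the deterministic proposal $\iprop = i + \eps$, yielding $(i+\eps, \eps)$ on acceptance (probability $\alpha_{i,i+\eps}(x)$) and $(i, -\eps)$ on rejection; adopting the convention $\alpha_{i,i+\eps} := 0$ whenever $i+\eps \notin \indset$ unifies boundary and interior cases, and averaging replaces $\alpha_{i,i+\eps}(x)$ by $\alpha_{i,i+\eps}$. For ST, the proposal direction is a fresh uniform draw on $\{-1,+1\}$ (identified with the outgoing $\eps'$), contributing the factor $1/2$; the remaining computation mirrors NRST.

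For invariance, the workhorse is the pointwise detailed-balance identity $p_i \pi_i(x) \alpha_{i,i+1}(x) = p_{i+1} \pi_{i+1}(x) \alpha_{i+1,i}(x)$, immediate from the symmetric $\min\{\cdot,\cdot\}$ in the MH acceptance \cref{eq:def_comm_step_acc_prob}; integrating in $x$ gives the averaged form $p_i \alpha_{i,i+1} = p_{i+1} \alpha_{i+1,i}$. For NRST, only two preimages feed flow into $(i',\eps')$---namely $(i'-\eps', \eps')$ via acceptance and $(i', -\eps')$ via rejection---and substituting $\lifted = p_i/2$ together with the averaged identity collapses the sum to $p_{i'}/2$. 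For ST, the same averaged identity together with the fact that the $\eps$-component is a fresh uniform draw, independent of $(i_n,\eps_n)$, reduces the invariance check to the classical marginal-$i$ statement $\sum_i p_i P_{\mathrm{ST}}(i,i') = p_{i'}$, which itself follows from the averaged detailed-balance identity. The main obstacle is the inductive verification that $V(x_n) \sim q_{i_n}$ stays independent of the full index history through the combined tempering--exploration update; boundary effects at $i \in \{0,N\}$ are a secondary wrinkle that is absorbed cleanly by the $\alpha_{i, i\pm1} = 0$ convention.
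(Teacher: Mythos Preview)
Your argument is correct and complete as a sketch, but it follows a genuinely different route from the paper's.

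For the Markov property, the paper does not argue by induction on $n$. Instead, it fixes an arbitrary starting state $(x,i,\eps)$ and computes the one-step law of $(i',\eps')$ directly: using a regular conditional probability to disintegrate $K_i(x,\dee x')\delta_{V(x')}(\dee v')$ as $\targetv_i(\dee v')K_i((x,v'),\dee x')$ (via \cref{lemma:kexp_V_marginal_is_target_marginal}), it shows that $\knrst((x,i,\eps),(\statespace,i',\eps'))$ collapses to an expression depending only on $(i,\eps)$. Markovianity of the index process then follows in one line, for any initial distribution, without tracking the history of $V(x_n)$. Your inductive approach is equally valid and perhaps more intuitive, but it requires care at the base case and at each step to confirm that the exploration producing $x_n$ was indeed at level $i_n$; the paper's one-step computation sidesteps this bookkeeping.

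For invariance, the paper takes a shortcut you do not: since the full kernel $\knrst$ is already $\lifted$-invariant by \cref{prop:lifted_invariance}, and $\lifted(i,\eps)=p_i/2$ is the $(i,\eps)$-marginal of $\lifted$, invariance of $\mknrst$ follows by marginalization with no further calculation. Your direct verification via the averaged detailed-balance identity $p_i\alpha_{i,i+1}=p_{i+1}\alpha_{i+1,i}$ is self-contained and does not rely on \cref{prop:lifted_invariance}, which is a virtue if one wants the index-process result to stand alone; the paper's route is shorter but imports more machinery.
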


\subsection{The Tour Effectiveness diagnostic}\label{sec:toureff}

The aim of this section is to rigorously analyze the tour effectiveness diagnostic 
introduced briefly in \cref{sec:methodology_TE}. 
Let $v_k = s_k(\ind\{i=N\})$ be the number of visits to level $N$ during the $k^\text{th}$ tour,
and $v$ be an independent copy of $v_k$ from $\Pr_\nu$.
Then we define the \emph{tour effectiveness} (TE) of a generic simulated tempering sampler as
\[\label{eq:def_TE_N}
\TE := \frac{\EE_\nu[v]^2}{\EE_\nu[v^2]}.
\]
Intuitively, $\TE$ describes how easy it is for the process to move between the
two endpoints of $\indset$. To see this, note first that $\TE\in[0,1]$, since
$\EE_\nu[v]^2\leq \EE_\nu[v^2]$. When $\TE=1$,
the number of visits per tour to the top level is deterministic and equal to
$2p_N/p_0$ (\cref{eq:LLN_toursums_identity} with $f\equiv1$). At the opposite 
extreme, $\TE=0$ when $v$ has infinite second moment. Note that, since $\{v_k\}_{k\in\nats}$ is an \iid collection, the 
statistic $\TEest$ introduced in \cref{sec:NRST} (\cref{eq:def_TEest}) is a 
consistent estimator of $\TE$.
\cref{thm:CLT_bdd_h} shows that $\TE$ controls the ST asymptotic variance 
$\sigma(h)^2$ from \cref{eq:def_asymp_var},
and hence is a relevant diagnostic for simulated tempering.

\begin{theorem}\label{thm:CLT_bdd_h}
$\sup_{|h|\leq 1} \sigma(h)^2 \leq 4/\TE$, and under \cref{assu:ELE}, $\TE > 0$.
\end{theorem}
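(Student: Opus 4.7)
The plan is to split the proof into the variance bound and the positivity of $\TE$, each handled by an elementary argument. For the bound, I would observe that $|h| \leq 1$ implies $|h(x) - \target(h)| \leq 2$, so the tour sum $s(\ind\{i=N\}(h(x) - \target(h)))$ has exactly $v := s(\ind\{i=N\})$ nonzero terms, each of magnitude at most $2$, giving the pathwise bound $|s(\ind\{i=N\}(h(x) - \target(h)))| \leq 2v$. Squaring and taking $\EE_\nu$ bounds the numerator of $\sigma(h)^2$ by $4\EE_\nu[v^2]$; dividing by the denominator $\EE_\nu[v]^2$ yields $\sigma(h)^2 \leq 4\EE_\nu[v^2]/\EE_\nu[v]^2 = 4/\TE$, uniformly over $|h| \leq 1$.

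For $\TE > 0$ under \cref{assu:ELE}, I would separately establish $\EE_\nu[v] > 0$ and $\EE_\nu[v^2] < \infty$. Positivity of $\EE_\nu[v]$ follows from \cref{thm:LLN} applied to $f := \ind\{i=N\}$: a direct computation gives $\lifted(\ind\{i=N\}) = p_N$, so $\EE_\nu[v] = (2/p_0) p_N > 0$. For finiteness of $\EE_\nu[v^2]$, I invoke \cref{prop:marginal_chain}, which under \cref{assu:ELE} makes the index process a homogeneous Markov chain on the finite state space $\indset \times \epsset$. Since $v$ is bounded above by the tour length $\tau$, it suffices to show $\EE_\nu[\tau^2] < \infty$, and for a finite irreducible Markov chain this follows from standard theory: return times to any state have geometric tails and hence finite moments of all orders.

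The main obstacle is justifying irreducibility of the lifted index chain cleanly. It reduces to showing that all adjacent-level acceptance probabilities $\acc{i,i+1}$ and $\acc{i+1,i}$ are strictly positive, which is immediate because each is an expectation under $q_i$ of the strictly positive integrand from \cref{eq:def_comm_step_acc_prob}. For NRST one additionally needs the chain to reach both $\eps$-components at every level; this follows from $\rej{i,i+\eps} > 0$ at every interior state, together with the deterministic flip forced at the boundaries $i=0,N$ when $i+\eps \notin \indset$. For ST the analysis is simpler since $\eps$ is effectively resampled uniformly at each tempering step, reducing irreducibility to that of the birth--death $i$-marginal on $\indset$.
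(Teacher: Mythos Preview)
Your proof follows essentially the same approach as the paper. The variance bound is identical: bound each nonzero summand by $2$, obtain $|s(\ind\{i=N\}(h(x)-\target(h)))| \leq 2v$, square, and divide by $\EE_\nu[v]^2$. For $\TE > 0$, both you and the paper use \cref{thm:LLN} to get $\EE_\nu[v] = 2p_N/p_0 > 0$ and then invoke the finite-state structure of the index process under \cref{assu:ELE} to conclude $\EE_\nu[v^2] < \infty$.

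The only difference is in how the finite second moment is justified. The paper cites \citet[Thm.~3.3.3]{kemeny1960finite} directly: in a finite absorbing chain, the number of visits to any transient state before absorption has all moments finite, so no irreducibility argument is needed. Your route through $v \leq \tau$ and geometric return-time tails works too, but note one small slip in your irreducibility sketch: the claim that $\rej{i,i+\eps} > 0$ at every interior state need not hold (for instance, if $V$ is bounded and the affinities make moves in one direction always accepted). Fortunately this claim is unnecessary---strict positivity of the acceptance probabilities together with the deterministic boundary flips already yields irreducibility, since from $(0,+1)$ the chain can climb to $(N,+1)$, flip to $(N,-1)$, and descend to $(0,-1)$, visiting every state along the way.
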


\cref{thm:CLT_bdd_h} shows that the more \emph{effective} 
the tours are---as measured by $\TE$---the lower (a bound on) the asymptotic 
variance of Monte Carlo estimators for bounded functions will be.
Furthermore, \cref{thm:CLT_bdd_h} states that under \cref{assu:ELE},
this variance is always finite. This is not guaranteed in general, as
$\sigma^2(h)$ can be infinite even for bounded $h$.

To further understand the role of $\TE$ as a diagnostic tool, we require a
better grasp of its dependence on algorithm parameters. Whenever
$c_i=-\log(\mcZ(\beta_i))$, the distribution over levels is uniform (see
\cref{eq:pseudo_prior_repar}), and the average rejection/acceptance probabilities 
become symmetric (\cref{lemma:unif_pprior_sym_rejs}),
\[\label{eq:def_sym_rejs}
\forall i\in\{1,\dots,N\}:\ \rej{i-1,i}=\rej{i,i-1}=\rho_i \quad \text{and} \quad \acc{i-1,i}=\acc{i,i-1}=\alpha_i.
\]
In this setting, \cref{thm:TE_formula} provides an expression for $\TE$ in terms 
of the rejection probabilities (proof in \cref{app:proofs}).

\begin{theorem}\label{thm:TE_formula}
Under \cref{assu:ELE}, if $\{p_i\}_{i=0}^N$ is uniform, then the $\TE$ for NRST, 
$\TE_\mathrm{NRST}$, and the $\TE$ for reversible ST, $\TE_\mathrm{ST}$, satisfy
\[\label{eq:TE_formula}
\TE_{\mathrm{NRST}} = \frac{1}{1 + 2\sum_{i=1}^N \frac{\rho_i}{1-\rho_i}} > 
\frac{1}{4N-1 + 4\sum_{i=1}^N \frac{\rho_i}{1-\rho_i}} = \TE_\mathrm{ST}.
\]
\end{theorem}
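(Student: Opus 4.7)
My plan is to compute $\EE_\nu[v]$ and $\EE_\nu[v^2]$ in closed form for both samplers and form $\TE = \EE_\nu[v]^2/\EE_\nu[v^2]$. The first moments follow directly from \cref{thm:LLN} with $f = \ind\{i=N\}$: under the uniform pseudo-prior, $\lifted(f) = p_N$, so $\EE_\nu[v] = 2p_N/p_0 = 2$ for NRST (the factor of 2 coming from the $\eps$-lifting) and $\EE_\nu[v] = p_N/p_0 = 1$ for ST. Writing $S_N := \sum_{i=1}^N \rho_i/(1-\rho_i)$, once both second-moment formulas are in hand, the strict inequality $4N - 1 + 4S_N > 1 + 2S_N$ for $N \ge 1$ and $S_N \ge 0$ immediately yields $\TE_{\mathrm{NRST}} > \TE_{\mathrm{ST}}$.

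For NRST, I would use the level-$N$ dynamics described by \cref{prop:marginal_chain} to decompose each tour as $v = \sum_{r=1}^R V_r$, where $R$ counts arrivals at $(N,+)$ and $V_r$ is the number of consecutive level-$N$ visits during the $r$-th sojourn. The deterministic flip $(N,+)\to(N,-)$ and the bounce dynamic at $(N,-)$ give $V_r = 2(1+G_r)$ with $G_r \sim \mathrm{Geom}(\alpha_N)$; the strong Markov property at each exit to $(N-1,-)$ makes $\{V_r\}$ iid and independent of $R$. The same property forces $R$ to be delayed geometric: $R=0$ with probability $1-p$ and, conditional on $R\ge 1$, $R-1$ is geometric on $\{0,1,\dots\}$ with parameter $1-q$, where $p := \Pr_\nu(\text{hit }(N,+))$ and $q$ is the probability of returning to $(N,+)$ from $(N-1,-)$ before the tour ends. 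A Wald-style identity then gives $\EE_\nu[v^2] = \EE_\nu[R^2]\EE[V_1]^2 + \EE_\nu[R]\mathrm{Var}(V_1)$. To get $p$ and $q$ I would set up by first-step analysis the hitting-probability recurrences
\[
b_i = \alpha_{i+1} b_{i+1} + \rho_{i+1} a_i, \qquad a_i = \alpha_i a_{i-1} + \rho_i b_i, \qquad a_0 = 0, \ b_N = 1,
\]
and verify the closed form $a_i = S_i/(1+S_N)$, $b_i = (1+S_i)/(1+S_N)$ where $S_j := \sum_{k=1}^j \rho_k/\alpha_k$; both recurrences collapse instantly using the telescoping identity $S_i - S_{i-1} = \rho_i/\alpha_i$. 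This yields $p = b_0 = 1/(1+S_N)$ and $1-q = 1/(\alpha_N(1+S_N))$, so $\EE_\nu[R] = p/(1-q) = \alpha_N$ (consistent with $\EE_\nu[v] = 2$), and substituting into the Wald identity while using $\alpha_N + \rho_N = 1$ collapses $\EE_\nu[v^2]$ to $4(1 + 2S_N)$, hence $\TE_{\mathrm{NRST}} = 1/(1+2S_N)$.

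The ST formula follows from a parallel compound-sum decomposition, now on the reversible birth-death chain from \cref{prop:marginal_chain}. The two differences are that each sojourn at level $N$ is $V_r \sim \mathrm{Geom}(\alpha_N/2)$ on $\{1,2,\dots\}$ (since each step resamples a direction uniformly) and that the hitting probabilities admit the classical gambler's-ruin form on $\{0,\dots,N\}$. Letting $T := \sum_{j=1}^N 1/\alpha_j$, one obtains $p = 1/(2T)$ and $q = 1 - 1/(T\alpha_N)$; substituting into the analogous Wald identity and rewriting $1/\alpha_j = 1 + \rho_j/\alpha_j$ telescopes $\EE_\nu[v^2]$ to $4T - 1 = 4N - 1 + 4S_N$, giving $\TE_{\mathrm{ST}} = 1/(4N-1+4S_N)$. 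I expect the main difficulty to be the bookkeeping in the NRST step: finding the right ansatz for $(a_i,b_i)$ is the only nontrivial move, and once it is in hand the remaining manipulations are mechanical. An alternative brute-force route through generating functions would also work but would be considerably messier.
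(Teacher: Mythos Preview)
Your approach is correct but genuinely different from the paper's. The paper uses a \emph{product} decomposition $v \overset{d}{=} R\cdot U\cdot F$, where $R\sim\distBern(p)$ indicates whether the tour ever reaches $(N,+)$, $U\sim\distGeom_1(p')$ counts the total number of visits to $(N,+)$ (including returns through lower levels), and $F\in\{1,2\}$ is a deterministic factor. The crucial step is a separate lemma showing, via a renewal argument, that $p=p'$: both equal $\EE[\tau]/\EE[\rtt]$, and the roundtrip expectation $\EE[\rtt]$ is then read off from Theorem~1 of \citet{syed2022nrpt}. With $p=p'$ in hand the moment computation collapses to one line, $\TE=p/(2-p)$, for both samplers.

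Your compound-sum decomposition $v=\sum_{r=1}^R V_r$ instead separates \emph{sojourns} at level $N$ from excursions below it, and you recover the hitting probabilities by solving the first-step recurrences directly. This is more elementary and fully self-contained (no appeal to the NRPT roundtrip-time formula), at the price of a longer Wald/second-moment calculation and more algebraic cancellation. The paper's route is slicker because it exploits the symmetry $\Pr_{(0,+)}(T_\mcU<T_1)=\Pr_{(N,-)}(T_\mcU>T_1)$ and offloads the hitting-probability computation to an existing result; your route makes that computation explicit and transparent, which is arguably a feature if one wants to avoid external dependencies.
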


Given uniform levels and \cref{assu:ELE}, \cref{thm:TE_formula} shows that NRST will produce more effective tours than ST, 
and that the difference increases as the size of the grid $N$ grows.
Additionally, \cref{thm:TE_formula} shows that TE depends on the grid $\mcP$ only through the 
quantity $\sum_{i=1}^N \frac{\rho_i}{1-\rho_i}$.

\subsection{Limit of infinite temperatures}\label{sec:limit_grid}

Motivated by the link between TE and
the rejection probabilities, we now analyze the limiting behavior 
of these quantities for NRST, in the hypothetical case where the grid becomes increasingly
fine; i.e., when we let $\|\mcP\| := \max\{\beta_i-\beta_{i-1}: i\in\{1,\dots,N\}\}$ 
go to zero. We will show that the optimal grid has a finite mesh size 
(\cref{sec:tune_N}), but the asymptotic results developed here provide 
the basis for approximations used as a stepping stone towards these (approximate) 
optimality results. 
 To proceed, let us extend 
\cref{eq:def_comm_step_acc_prob} to arbitrary pairs $\beta,\beta'\in[0,1]$,
\[\label{eq:def_extended_acc_prob}
\acc{\beta,\beta'}(x) := \exp\left(-\max\left\{0, (\beta'-\beta)V(x) - [c(\beta')-c(\beta)]\right\}\right),
\]
where $c:[0,1]\to\reals$ is the \emph{affinity function}, which extends the level 
affinities to a continuum by setting
$c_i = c(\beta_i)$. \cref{eq:def_extended_acc_prob} induces an extension of the 
 average acceptance and rejection probabilities (c.f. \cref{eq:def_average_acc_rej_probs}), i.e., 
$\acc{\beta,\beta'} := \EEi{\beta}\left[\acc{\beta,\beta'}(x)\right]$ and $\rej{\beta,\beta'} := 1 - \acc{\beta,\beta'}.$
\cref{thm:rejection_expansions} gives a second order 
approximation of the rejection probabilities between two close grid points.
Let us write $V=V(x)$ and $V\sim \pibeta$ whenever 
$x\sim \pibeta$.

\begin{assumption}\label{assu:c_smooth}
$c(\cdot)$ is thrice continuously differentiable on $(0,1)$.
\end{assumption}
\begin{assumption}\label{assu:nullset}
For all $\beta,\beta'\in(0,1)$, $\beta\neq\beta'$,
$\pi_0\left\{V = \frac{c(\beta') - c(\beta)}{\beta'-\beta}\right\}=0$.
\end{assumption}

\begin{theorem}\label{thm:rejection_expansions}
Under \cref{assu:c_smooth,assu:nullset}, for $0<\beta<\beta'<1$, $\dbeta:=\beta'-\beta$ and $\bar\beta:=(\beta+\beta')/2$,
\[
\rej{\beta,\beta'} = &\EEi{\bar\beta}[\max\{0,V-c'(\bar\beta)\}]\dbeta  - \frac{1}{2} \EEi{\bar\beta}[V-c'(\bar\beta)] \EEi{\bar\beta}[\max\{0,V-c'(\bar\beta)\}]\dbeta^2 + o(\dbeta^2) \\
\rej{\beta',\beta} = -&\EEi{\bar\beta}[\min\{0,V-c'(\bar\beta)\}]\dbeta -\frac{1}{2} \EEi{\bar\beta}[V-c'(\bar\beta)]\EEi{\bar\beta}[\min\{0,V-c'(\bar\beta)\} ] \dbeta^2 + o(\dbeta^2).
\]
\end{theorem}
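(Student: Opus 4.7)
The plan is to derive the expansion for $\rej{\beta,\beta'}$ in three stages; the expression for $\rej{\beta',\beta}$ then follows by a parallel argument. Writing $a := [c(\beta')-c(\beta)]/\dbeta$, the acceptance probability factors as $\exp(-\dbeta \max\{0, V - a\})$, so
\[
\rej{\beta,\beta'} = \EEi{\beta}\bigl[1 - e^{-\dbeta M}\bigr], \qquad M := \max\{0, V - a\}.
\]
A second-order Taylor expansion of $1 - e^{-z}$ then yields
\[
\rej{\beta,\beta'} = \dbeta\, \EEi{\beta}[M] - \frac{\dbeta^2}{2}\EEi{\beta}[M^2] + o(\dbeta^2),
\]
with the remainder controlled by uniform moment bounds on $V$ under $\pi^{(\beta)}$ in a neighborhood of $\bar\beta$, a consequence of the exponential-family structure and $\mcZ(\beta)<\infty$ from \cref{lemma:finite_norm_const}.

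Next, I would shift both the threshold $a$ and the reference measure to the midpoint $\bar\beta$. By \cref{assu:c_smooth} and the cancellation of the even-order term in the symmetric divided difference,
\[
a \;=\; c'(\bar\beta) + \tfrac{1}{24} c'''(\bar\beta)\dbeta^2 + o(\dbeta^2),
\]
so $M$ may be replaced by $M_0 := \max\{0, V - c'(\bar\beta)\}$ to the desired order. \cref{assu:nullset} is precisely what licenses this replacement: it ensures that $\pi_0$ (and hence $\pi^{(\bar\beta)}$) places no mass at the kink $V = c'(\bar\beta)$, so the map $a \mapsto \EEi{\beta}[\max\{0, V-a\}]$ is continuously differentiable with derivative $-\pi^{(\beta)}\{V > a\}$, making the $O(\dbeta^2)$ shift in $a$ contribute only $o(\dbeta^2)$. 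To move $\beta$ to $\bar\beta$, I would use the exponential-family identity $\partial_\beta \EEi{\beta}[f(V)] = -\mathrm{Cov}_\beta(V, f(V))$, immediate from $\partial_\beta \log \pi^{(\beta)} = -V + \EEi{\beta}[V]$. Since $\beta = \bar\beta - \dbeta/2$, this gives
\[
\EEi{\beta}[M_0] = \EEi{\bar\beta}[M_0] + \tfrac{\dbeta}{2}\mathrm{Cov}_{\bar\beta}(V, M_0) + o(\dbeta), \qquad \EEi{\beta}[M_0^2] = \EEi{\bar\beta}[M_0^2] + O(\dbeta).
\]

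Substituting back produces
\[
\rej{\beta,\beta'} = \dbeta\,\EEi{\bar\beta}[M_0] + \frac{\dbeta^2}{2}\bigl[\mathrm{Cov}_{\bar\beta}(V, M_0) - \EEi{\bar\beta}[M_0^2]\bigr] + o(\dbeta^2).
\]
The closing step is the pointwise identity $(V - c'(\bar\beta)) M_0 = M_0^2$ on all of $\statespace$ (both sides vanish when $V \leq c'(\bar\beta)$, and equal $(V-c'(\bar\beta))^2$ otherwise). Expanding $\mathrm{Cov}_{\bar\beta}(V, M_0) = \EEi{\bar\beta}[VM_0] - \EEi{\bar\beta}[V]\EEi{\bar\beta}[M_0]$ and splitting $V = (V - c'(\bar\beta)) + c'(\bar\beta)$, this identity collapses the bracket to $-\EEi{\bar\beta}[V - c'(\bar\beta)]\EEi{\bar\beta}[M_0]$, giving the first stated formula. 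For $\rej{\beta',\beta}$ the derivation is parallel with three sign-bearing changes: $M$ is replaced by $M' := -\min\{0, V-a\}$, expectations are taken under $\pi^{(\beta')} = \pi^{(\bar\beta + \dbeta/2)}$ (flipping the sign of the $\dbeta/2$ term from the Taylor expansion in $\beta$), and the analogous identity $(V - c'(\bar\beta))\min\{0, V-c'(\bar\beta)\} = \min\{0, V-c'(\bar\beta)\}^2$ produces the opposite sign in the final simplification, yielding the second stated formula.

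The main obstacle is the rigorous control of the Taylor remainders under the expectation. Concretely, one must show that moments of $V$ are bounded uniformly for $\beta$ in a neighborhood of $\bar\beta$---to justify term-by-term integration of the expansion of $1 - e^{-\dbeta M}$---and that the $O(\dbeta^2)$ perturbation of the kink location $a$ does not induce a spurious $O(\dbeta)$ error; this last point is exactly the role of \cref{assu:nullset}. Once these analytic points are in place, the rest of the argument is the algebra outlined above.
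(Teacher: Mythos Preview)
Your proposal is correct and takes a genuinely different route from the paper's proof. The paper first rewrites $\acc{\beta,\beta'}$ as a ratio of $\pi^{(\bar\beta)}$-expectations via a Predescu-style identity (\cref{lemma:acc_prob_as_ratio}), then differentiates numerator and denominator under the integral sign (\cref{lemma:deriv_expectations})---which is where \cref{assu:nullset} enters, to handle the non-differentiability of $|\cdot|$ at the kink---and finally feeds these derivatives into a generic second-order quotient expansion (\cref{lemma:ratio_expansion}), exploiting that the second derivatives of numerator and denominator coincide at $\dbeta=0$. Your approach is more direct: expand $1-e^{-\dbeta M}$ pointwise, then separately shift the threshold $a\to c'(\bar\beta)$ and the measure $\pi^{(\beta)}\to\pi^{(\bar\beta)}$, and close with the algebraic identity $(V-c'(\bar\beta))M_0 = M_0^2$. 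The paper's route buys a cleaner symmetry between the two rejection probabilities (they share a numerator in the ratio form), while yours avoids the ratio machinery and the delicate dominated-convergence lemma. One small remark: your invocation of \cref{assu:nullset} for the threshold shift is stronger than necessary---since $a\mapsto \max\{0,v-a\}$ is $1$-Lipschitz pointwise, $|\EEi{\beta}[M]-\EEi{\beta}[M_0]|\le |a-c'(\bar\beta)|=O(\dbeta^2)$ without any assumption on the law of $V$, so the threshold shift already contributes $O(\dbeta^3)$ to $\dbeta\,\EEi{\beta}[M]$; your argument thus appears to go through under \cref{assu:c_smooth} alone.
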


In the special case where the affinity function satisfies $c'(\beta)=\EEi{\beta}[V]$, the second
order error vanishes. This mimics the result obtained by
\citet{predescu2004incomplete} for parallel tempering. To better understand the
implications of different choices of affinity functions, we will focus on 
first-order effects. To this end, define the 
\emph{directional rates of rejection} as
\[\label{eq:def_dir_inst_rate_rej}
\forall \beta\in(0,1): \rho_+'(\beta) := \lim_{\dbeta\downarrow0} \frac{\rej{\beta,\beta+\dbeta}}{\dbeta}, \qquad \rho_-'(\beta) := \lim_{\dbeta\downarrow0} \frac{\rej{\beta,\beta-\dbeta}}{\dbeta}.
\]
Furthermore, let the \emph{rate of rejection} be the average of the directional 
rates; i.e., $\rho'(\beta) := (\rho_+'(\beta) + \rho_-'(\beta))/2$.
The following result is a direct consequence of \cref{thm:rejection_expansions}.

\begin{corollary}\label{cor:rhoprime_eqs}
Under \cref{assu:c_smooth,assu:nullset}, for all $\beta\in(0,1)$,
\[
\rho_+'(\beta) := \phantom{-}\EEi{\beta}[\max\{0,V-c'(\beta)\}], \quad
\rho_-'(\beta) := -\EEi{\beta}[\min\{0,V-c'(\beta)\}].
\]
It follows that the rejection rate is
\[\label{eq:rhoprime_sum_abs_vals}
\rho'(\beta) = \frac{1}{2} \EEi{\beta}[|V-c'(\beta)|],
\]
while the difference in directional rates is
\[\label{eq:rhoprime_diff_abs_vals}
\rho_+'(\beta) - \rho_-'(\beta) = \EEi{\beta}[V]-c'(\beta).
\] 
\end{corollary}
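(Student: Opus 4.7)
The plan is to derive the two identities by reading off the first-order coefficients of the expansions in \cref{thm:rejection_expansions}, then combine them algebraically using the elementary identities $\max\{0,x\}+\min\{0,x\}=x$ and $\max\{0,x\}-\min\{0,x\}=|x|$.

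For $\rho_+'(\beta)$, I would set $\beta':=\beta+\dbeta$ in \cref{thm:rejection_expansions}, so that $\bar\beta=\beta+\dbeta/2$, and divide the first expansion by $\dbeta$ to obtain
\[
\frac{\rej{\beta,\beta+\dbeta}}{\dbeta} = \EEi{\bar\beta}[\max\{0,V-c'(\bar\beta)\}] + O(\dbeta).
\]
Letting $\dbeta\downarrow 0$ gives $\bar\beta\to\beta$, and the claimed formula for $\rho_+'(\beta)$ follows provided the map $\beta\mapsto \EEi{\beta}[\max\{0,V-c'(\beta)\}]$ is continuous at $\beta$. For $\rho_-'(\beta)$, I instead apply the theorem with the role of $\beta$ played by $\beta-\dbeta$ and the role of $\beta'$ played by $\beta$; then $\dbeta$ in the theorem matches our $\dbeta$ here, $\bar\beta=\beta-\dbeta/2$, and the second expansion reads $\rej{\beta,\beta-\dbeta} = -\EEi{\bar\beta}[\min\{0,V-c'(\bar\beta)\}]\dbeta + o(\dbeta)$. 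Dividing by $\dbeta$ and taking $\dbeta\downarrow 0$ yields the formula for $\rho_-'(\beta)$.

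For the two derived identities, I would simply compute
\[
\rho'(\beta) = \tfrac{1}{2}\left(\rho_+'(\beta)+\rho_-'(\beta)\right) = \tfrac{1}{2}\EEi{\beta}\left[\max\{0,V-c'(\beta)\} - \min\{0,V-c'(\beta)\}\right] = \tfrac{1}{2}\EEi{\beta}[|V-c'(\beta)|],
\]
using $\max\{0,x\}-\min\{0,x\}=|x|$ pointwise, and similarly
\[
\rho_+'(\beta)-\rho_-'(\beta) = \EEi{\beta}\left[\max\{0,V-c'(\beta)\} + \min\{0,V-c'(\beta)\}\right] = \EEi{\beta}[V-c'(\beta)] = \EEi{\beta}[V]-c'(\beta),
\]
where we used $\max\{0,x\}+\min\{0,x\}=x$.

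The only real subtlety is justifying the continuity step $\EEi{\bar\beta}[\max\{0,V-c'(\bar\beta)\}]\to\EEi{\beta}[\max\{0,V-c'(\beta)\}]$ as $\bar\beta\to\beta$. This is what I expect to be the main obstacle, since $\pibeta$ itself varies with $\beta$. I would handle it by writing the expectation as an integral against $\pi_0$, namely $\EEi{\bar\beta}[g(V,c'(\bar\beta))]=\mcZ(\bar\beta)^{-1}\int g(V(x),c'(\bar\beta))e^{-\bar\beta V(x)}\pi_0(\dee x)$ with $g(v,c):=\max\{0,v-c\}$, and applying dominated convergence: the integrand is continuous in $\bar\beta$ for almost every $x$ (using \cref{assu:c_smooth} and \cref{assu:nullset} to handle the kink at $v=c'(\bar\beta)$), and is dominated uniformly in $\bar\beta$ on a small neighborhood of $\beta$ by the integrability implicit in the existence of the second-order expansion from \cref{thm:rejection_expansions}. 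The same argument handles the $\min$ version. With this continuity established, the rest is routine algebra.
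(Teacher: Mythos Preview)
Your proposal is correct and follows essentially the same route as the paper: divide the expansions of \cref{thm:rejection_expansions} by $\dbeta$, let $\dbeta\downarrow 0$ so that $\bar\beta\to\beta$, and then combine the resulting directional rates via the elementary identities relating $\max\{0,\cdot\}$, $\min\{0,\cdot\}$, and $|\cdot|$. The only difference is that you explicitly flag and address the continuity of $\bar\beta\mapsto\EEi{\bar\beta}[\max\{0,V-c'(\bar\beta)\}]$, whereas the paper simply asserts that $\bar\beta=\beta$ in the limit; note, incidentally, that $v\mapsto\max\{0,v-c\}$ is already continuous in $c$, so \cref{assu:nullset} is not needed for that particular step.
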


\cref{cor:rhoprime_eqs} lets us analyze the behavior of the rejection rates given the affinity function $c$. We discuss two 
important cases. Since constants shifts in $c$ are irrelevant, we 
assume without loss of generality that $c(0)=0$.
\begin{assumption}[Mean energy affinity]\label{assu:mean_energy_affinity}
The affinity function satisfies
\[\label{eq:def_mean_energy_affinity}
\forall \beta\in[0,1]:\ c(\beta) = \int_{0}^{\beta} \EEi{\beta'}\left[ V \right] \dee \beta'.
\]
\end{assumption}
Differentiating the affinity function in \cref{eq:def_mean_energy_affinity}
shows that
$c'(\beta) = \EEi{\beta}\left[ V \right].$
Thus the directional rates are identical by \cref{eq:rhoprime_diff_abs_vals}. Moreover, using the thermodynamic 
identity \citep[see][and \cref{cor:mean_V_properties}]{gelman1998simulating}
$\log\left(\mcZ(\beta)\right) = -\int_{0}^{\beta} \EEi{\beta'}\left[ V \right]\dee \beta'$,
we see that
\[\label{eq:mean_energy_aff_as_free_energy}
c(\beta) = -\log\left(\mcZ(\beta)\right).
\]
Thus this choice of level affinities 
results in a uniform distribution over levels with symmetric rejections;
the equality of the directional rates is the infinitesimal counterpart
of \cref{eq:def_sym_rejs}. Furthermore, since $\mcZ\in\mcC^\infty((0,1))$ (see
\cref{lemma:norm_const_derivs}), the representation in 
\cref{eq:mean_energy_aff_as_free_energy} shows that this choice of affinity
function is $\mcC^\infty((0,1))$ too, and thus satisfies \cref{assu:c_smooth}.
Substituting $c$ into 
\cref{eq:rhoprime_sum_abs_vals}, the rate of rejections is
\[
\rho'(\beta) = \frac{1}{2}\EEi{\beta}\left[\left|V -  \EEi{\beta}[V] \right|\right].
\]
Compare this quantity to the corresponding rate for 
non-reversible parallel tempering, the \emph{local communication barrier}, defined as
\citep{syed2022nrpt}
\[
\lambda(\beta) := \frac{1}{2}\EEi{\beta}\left[\left|V - V' \right|\right],
\]
where $V'$ is an independent copy of $V$ defined on the same probability space. 
\cref{lemma:mad_bounds} shows that rejections should be lower for NRST (but 
not less than half).

\begin{assumption}[Minimum rejection affinity]\label{assu:min_rej_rate_affinity}
Let $\med^{(\beta)}(V)$ be a median of $V$ under $\pibeta$. Then, the affinity
function satisfies
\[\label{eq:def_min_rej_rate_affinity}
\forall \beta\in[0,1]:\ c(\beta) = \int_{0}^{\beta} \med^{(\beta')}(V)\dee \beta'.
\]
\end{assumption}

Differentiating the affinity function in \cref{eq:def_min_rej_rate_affinity}
shows that
\[\label{eq:cprime_median_energy}
\der{c}{\beta}(\beta) = \med^{(\beta)}(V).
\]
Since $\med^{(\beta)}(V)$ minimizes the absolute deviation of $V$, this choice of
$c$ function minimizes the rejection rate. Its value becomes
\[
\rho'(\beta) &= \frac{1}{2}\EEi{\beta}\left[\left|V - \med^{(\beta)}(V) \right|\right] \leq \frac{1}{2}\EEi{\beta}\left[\left|V -  \EEi{\beta}[V] \right|\right],
\]
where the inequality is due to the optimality of the median.
Using \cref{eq:cprime_median_energy} in \cref{eq:rhoprime_diff_abs_vals} 
shows that the directional rates of rejection do not agree; their difference
equals the difference between the mean and the median of $V$ under $\pibeta$.
Finally, integrating \cref{eq:cprime_median_energy} and using the thermodynamic identity, the pseudo-prior corresponding to this choice of $c$ is
\[
p_i \propto \exp(\log(\mcZ(\beta_i)) + c(\beta_i)) \propto \exp\left(-\int_0^{\beta_i}(\EEi{\beta}[V]-\med^{(\beta)}(V))\dee\beta\right).
\]
Again, the marginal probability of level $i$ depends on the 
cumulative effect of the skewness of the distribution of $V$ along the path.

The rates of rejection describe the \emph{local} behavior of the sampler, 
measuring how difficult it is to move between two infinitesimally close levels.
However, these rates are also helpful in studying the \emph{global} behavior of
the Markov chain, revealing how hard it is to transport \iid samples from the
reference towards the target measure through the path $\beta\mapsto \pibeta$.
To this end, let us define
the \emph{tempering barrier function} as the cumulative rate of rejection
$\Lambda(\beta) := \int_0^\beta \rho'(b)\dee b.$
\cref{prop:rejs_Riemann_sums} relates $\Lambda(\beta)$
and the symmetrized rejection rates.

\begin{assumption}\label{assu:integrability_at_extremes}
For $\beta\in\{0,1\}$, $\EEi{\beta}[V^2]<\infty$.
\end{assumption}
\begin{assumption}\label{assu:bounded_c}
There exist constants $\{\bar{c}_k\}_{k=0}^3$ such that
\[
\forall k\in\{0,1,2\}, \forall\beta\in[0,1]:\ \left|\frac{\mathrm{d}^kc}{\mathrm{d}\beta^k}(\beta)\right| \leq \bar{c}_k. 
\]
\end{assumption}

\begin{proposition}\label{prop:rejs_Riemann_sums}
Given \cref{assu:c_smooth,assu:nullset,assu:integrability_at_extremes,%
assu:bounded_c}, define
\[
\forall \beta\in[0,1]:\ \ell(\beta) := \max\{i\in\indset: \beta_i\leq\beta\}.
\]
Then, for all $\beta\in[0,1]$,
\[\label{eq:rejs_Riemann_sums}
\lim_{\|\mcP\|\to 0}\sum_{i=1}^{\ell(\beta)} \frac{\rho_i}{1-\rho_i} = \lim_{\|\mcP\|\to 0}\sum_{i=1}^{\ell(\beta)} \rho_i = \Lambda(\beta).
\]
\end{proposition}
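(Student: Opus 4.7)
The plan is to use the second-order expansion in \cref{thm:rejection_expansions} to approximate each $\rho_i$ by $\rho'(\bar\beta_i)\Delta_i$ (with $\Delta_i := \beta_i - \beta_{i-1}$ and $\bar\beta_i := (\beta_{i-1}+\beta_i)/2$), then recognise the leading term as a midpoint Riemann sum for $\Lambda(\beta)$. Adding the two identities in \cref{thm:rejection_expansions} and using $\max\{0,u\} - \min\{0,u\} = |u|$ together with \cref{cor:rhoprime_eqs} gives, for every $i$ with $\beta_{i-1},\beta_i\in(0,1)$,
\[
\rho_i = \tfrac{1}{2}(\rej{\beta_{i-1},\beta_i} + \rej{\beta_i,\beta_{i-1}}) = \rho'(\bar\beta_i)\Delta_i + r_i,
\]
where $|r_i| = o(\Delta_i^2)$ pointwise. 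Inspecting the expansion, the remainder is governed by $c''$, $c'''$, and low-order moments of $V$ under $\pi^{(\bar\beta_i)}$; \cref{assu:c_smooth,assu:bounded_c,assu:integrability_at_extremes}, together with continuity of $\beta \mapsto \EEi{\beta}[V^k]$ (via smoothness of $\mcZ$ from \cref{lemma:norm_const_derivs}), upgrade this to a uniform bound $|r_i| \leq C\Delta_i^2$ for all such $i$ and all sufficiently small $\|\mcP\|$.

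Second, I would handle the intervals touching $\beta=0$ and $\beta=1$, where \cref{thm:rejection_expansions} does not directly apply. The elementary inequality $1-e^{-x} \leq \max\{0,x\}$ yields
\[
\rej{0,\beta_1} \leq \beta_1 \EEi{0}[|V|] + |c(\beta_1) - c(0)|, \qquad \rej{\beta_1,0} \leq \beta_1 \EEi{\beta_1}[|V|] + |c(\beta_1) - c(0)|,
\]
and analogously at $\beta=1$. \cref{assu:integrability_at_extremes} controls $\EEi{\beta}[|V|]$ at $\beta\in\{0,1\}$, and the continuity argument above extends this to a uniform bound on $[0,1]$, while \cref{assu:bounded_c} gives $|c(\beta_1)-c(0)| \leq \bar c_1 \|\mcP\|$. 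Hence $\rho_1, \rho_N = O(\|\mcP\|)$.

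Third, the leading sum $\sum_{i=1}^{\ell(\beta)} \rho'(\bar\beta_i)\Delta_i$ is a midpoint rule for $\int_0^{\beta_{\ell(\beta)}} \rho'(b)\,db$; boundedness and continuity of $\rho'$ on $(0,1)$ (by the same continuity argument, using \cref{eq:rhoprime_sum_abs_vals}) make this Riemann integral well-defined, the midpoint sum converges to it, and the missing sliver $\int_{\beta_{\ell(\beta)}}^{\beta}\rho'(b)\,db$ is $O(\|\mcP\|)$. The residuals sum to $\sum_i|r_i| \leq C\|\mcP\|\sum_i \Delta_i \leq C\|\mcP\|$, which together with the $O(\|\mcP\|)$ boundary contribution from Step~2 gives $\sum_{i=1}^{\ell(\beta)}\rho_i \to \Lambda(\beta)$. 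Finally, the bridge between the two sums is the identity $\tfrac{\rho_i}{1-\rho_i} - \rho_i = \tfrac{\rho_i^2}{1-\rho_i}$: since $\rho_i \leq 1/2$ uniformly for $\|\mcP\|$ small enough (by Steps~1--2), this gives $\tfrac{\rho_i}{1-\rho_i} - \rho_i \leq 2\rho_i^2 = O(\Delta_i^2)$, hence $\sum_i \tfrac{\rho_i}{1-\rho_i} - \sum_i \rho_i = O(\|\mcP\|) \to 0$.

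The main obstacle will be upgrading the pointwise $o(\Delta_i^2)$ remainder of \cref{thm:rejection_expansions} to a bound uniform in $i$ and extending the argument up to $\beta\in\{0,1\}$; both reduce to a single technical point, namely a uniform bound on the low-order moments of $V$ under $\pi^{(\beta)}$ along the entire path, which is exactly where \cref{assu:integrability_at_extremes} combines with the global smoothness of $\mcZ$ to do the required work.
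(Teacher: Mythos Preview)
Your approach is essentially the paper's: expand each $\rho_i$ to first order in $\Delta_i$ about the midpoint $\bar\beta_i$, recognise the leading sum as a midpoint Riemann sum for $\Lambda(\beta)$, and control the remainder uniformly via moment bounds on $V$ and derivative bounds on $c$. Two differences are worth noting. First, the paper avoids your ``upgrade from pointwise $o(\Delta_i^2)$ to uniform $C\Delta_i^2$'' obstacle by not invoking \cref{thm:rejection_expansions} (Peano form) at all: it writes a first-order Taylor expansion of $\dbeta\mapsto\bar\rho_{\beta,\beta'}$ with \emph{Lagrange} remainder and proves a dedicated lemma giving a uniform bound on $\partial^2\bar\rho/\partial(\dbeta)^2$ over all $i\in\{1,\dots,N\}$; this is where essentially all the work goes, and it absorbs your separate boundary treatment for $i=1,N$. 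Second, your bridge $\tfrac{\rho_i}{1-\rho_i}-\rho_i=\tfrac{\rho_i^2}{1-\rho_i}\le 2\rho_i^2=O(\Delta_i^2)$ is more elementary than the paper's route, which repeats the full Taylor-with-Lagrange-remainder argument for the composite $f(z)=z/(1-z)$. One small slip: after subtracting the first-order term, your $r_i$ still contains the explicit $\Delta_i^2$ coefficient from \cref{thm:rejection_expansions} in addition to the $o(\Delta_i^2)$ tail, so $r_i=O(\Delta_i^2)$ pointwise rather than $o(\Delta_i^2)$; this does not affect the argument, since $|r_i|\le C\Delta_i^2$ is your actual target.
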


\cref{cor:limit_TE_unif} combines \cref{prop:rejs_Riemann_sums} and \cref{thm:TE_formula}
to understand the asymptotic behavior of the tour effectiveness in terms of the \emph{total tempering barrier}
$\Lambda:=\Lambda(1)$. 
\begin{corollary}\label{cor:limit_TE_unif}
Under \cref{assu:ELE,assu:nullset,assu:mean_energy_affinity,assu:integrability_at_extremes,%
assu:bounded_c}
\[\label{eq:limit_TE_unif}
\TEinfty := \lim_{\|\mcP\|\to 0} \TE_\mathrm{NRST} = \frac{1}{1+2\Lambda}, \qquad \lim_{\|\mcP\|\to 0} \TE_\mathrm{ST} = 0.
\]
\end{corollary}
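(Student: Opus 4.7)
The plan is to read the corollary as a direct combination of Theorem \ref{thm:TE_formula} (closed-form expressions for $\TE_\mathrm{NRST}$ and $\TE_\mathrm{ST}$ under uniform levels) with Proposition \ref{prop:rejs_Riemann_sums} (Riemann-sum limit of the rejection series). The first task is therefore to verify that Assumption \ref{assu:mean_energy_affinity} produces uniform $p_i$, as required by Theorem \ref{thm:TE_formula}. Indeed, equation \eqref{eq:mean_energy_aff_as_free_energy} gives $c_i = c(\beta_i) = -\log\mcZ(\beta_i)$, so $\mcZ(\beta_i)e^{c_i}=1$ for every $i$, and \eqref{eq:pseudo_prior_repar} then yields $p_i = 1/(N+1)$. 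In parallel, $c(\beta) = -\log\mcZ(\beta)$ is $C^\infty((0,1))$ by \cref{lemma:norm_const_derivs}, so Assumption \ref{assu:c_smooth} is automatically satisfied and all hypotheses of Proposition \ref{prop:rejs_Riemann_sums} are met. Theorem \ref{thm:TE_formula} then delivers
$$\TE_\mathrm{NRST} = \frac{1}{1 + 2 S_N}, \qquad \TE_\mathrm{ST} = \frac{1}{4N - 1 + 4 S_N}, \qquad S_N := \sum_{i=1}^N \frac{\rho_i}{1-\rho_i}.$$

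For the NRST limit, I would apply Proposition \ref{prop:rejs_Riemann_sums} at $\beta = 1$ to obtain $S_N \to \Lambda(1) = \Lambda$ as $\|\mcP\|\to 0$. Since $\rho'(\beta) = \tfrac{1}{2}\EEi{\beta}[|V-c'(\beta)|]\geq 0$ by \cref{cor:rhoprime_eqs}, the barrier $\Lambda\geq 0$, so the limiting denominator $1 + 2\Lambda$ is finite and strictly positive; continuity of $x\mapsto 1/x$ at $1+2\Lambda$ then gives $\TE_\mathrm{NRST} \to 1/(1+2\Lambda)$, establishing the first equality.

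For reversible ST, $S_N$ still converges to $\Lambda<\infty$, but the $4N$ term in the denominator forces divergence. The observation is that since the $N+1$ grid points $0=\beta_0<\cdots<\beta_N=1$ cover $[0,1]$ with maximum gap $\|\mcP\|$, a pigeonhole bound gives $N\|\mcP\| \geq 1$, so $N\to\infty$ whenever $\|\mcP\|\to 0$. Consequently $4N - 1 + 4 S_N \to +\infty$ and $\TE_\mathrm{ST} \to 0$. The main obstacle is purely bookkeeping, namely verifying that Assumption \ref{assu:mean_energy_affinity} together with the other listed hypotheses unlocks both Theorem \ref{thm:TE_formula} (via uniform $p_i$) and Proposition \ref{prop:rejs_Riemann_sums} (via the automatic $C^\infty$ regularity of the free-energy affinity); once this is confirmed, no further analytic work is needed beyond the elementary continuity/divergence arguments above.
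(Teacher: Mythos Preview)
Your proposal is correct and mirrors exactly what the paper intends: the corollary is stated as the combination of \cref{thm:TE_formula} with \cref{prop:rejs_Riemann_sums}, and you supply precisely those two ingredients together with the verification that \cref{assu:mean_energy_affinity} yields uniform $p_i$ (unlocking \cref{thm:TE_formula}) and the $C^\infty$ regularity needed for \cref{assu:c_smooth} (unlocking \cref{prop:rejs_Riemann_sums}). The elementary observation $N\|\mcP\|\geq\sum_{i=1}^N\dbeta_i=1$ to force $N\to\infty$ in the ST denominator is the only missing step the paper leaves implicit, and you handle it correctly.
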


\cref{cor:limit_TE_unif} states that as the tempering grid 
becomes finer, the NRST tour effectiveness converges to a positive value 
depending solely on the total tempering barrier. In contrast, the standard ST TE
collapses to zero in the same limit.
Furthermore, \cref{cor:limit_TE_unif} establishes that, after precisely tuning 
level affinities and introducing sufficiently many grid points,
the tour effectiveness of NRST is controlled by the difficulty of the problem, $\Lambda$,
and the precise grid $\mcP$ becomes irrelevant. This is similar to the
fundamental limit to the asymptotic round trip rate imposed by the global
communication barrier in NRPT \citep{syed2022nrpt}.

\section{Adaptation algorithms}\label{sec:adaptation}

Drawing on the insights from \cref{sec:analysis}, we present an adaptation
strategy that works well in a wide range of problems and requires no
user input. We will first consider a fixed-size grid,
and focus on the adaptation of grid points and level affinities. 
The procedure requires only a ``dataset'' 
$\Vs := \left\{\{V_n^{(i)}\}_{n=1}^{S_i}: i\in\indset\right\},$
containing ergodic samples of $V\sim\pibeta$ for each $\beta\in\mcP$.
One way to obtain $\Vs$ is to bootstrap NRST with an arbitrary grid and level affinities. 
However, when the affinities are not well-tuned, we find that ST fails catastrophically 
and produces no draws of $V$ for levels even just 
a few steps up from the reference. 
Another option is to simply run the exploration 
kernels $\{K_i\}_{i=1}^N$ independently in parallel at each grid 
point; however, such independent exploration does
not produce high-quality $V$ samples when the path of distributions is even
slightly complex. 

\begin{algorithm}[t]
\caption{Adaptation of grid and level affinities}
\begin{algorithmic}
\Function{AdaptNRST}{$N, n_\text{max-rounds}$}
\State $\mcP \gets \left\{\frac{i}{N}: i\in\indset\right\}$
\State $n_\text{scan}\gets 1$
\vspace{.3em}
\State \texttt{\# Loop adaptation with doubling effort until convergence}
\While{$n_\text{scan} < 2^{n_\text{max-rounds}}$}
	\State $n_\text{scan}\gets 2n_\text{scan}$
	\State $\Vs\gets$ \textproc{runNRPT}($\mcP, n_\text{scan}$)
	\State $\mcC\gets$ \textproc{adjustAffinities}($\Vs$) \Comment{\cref{sec:adjust_affinities}}
	\State $\widehat{\Lambda} \gets$ \textproc{buildBarrierFunction}($\mcP, \mcC, \Vs$) \Comment{\cref{eq:def_estimate_temp_barrier} in \cref{sec:adapt_grid}}
	\State $\mcP\gets$ \textproc{optimizeGrid}($\widehat{\Lambda}$) \Comment{\cref{sec:adapt_grid}}
	\IIf{\textproc{hasConverged}()} \textbf{break}  \Comment{\cref{sec:detect_convergence}}
\EndWhile
\vspace{.1em}
\State \texttt{\# Tune the affinities for the final grid}
\State $\Vs\gets$ \textproc{runNRPT}($\mcP, n_\text{scan}$)
\State $\mcC\gets$ \textproc{adjustAffinities}($\Vs$)
\State $\widehat{\Lambda} \gets$ \textproc{buildBarrierFunction}($\mcP, \mcC, \Vs$)
\State \Return $\mcP, \mcC, \widehat\Lambda$
\EndFunction
\end{algorithmic}
\label{alg:NRST_adapt}
\end{algorithm}

We recommend instead running non-reversible parallel tempering (NRPT)
\citep{syed2022nrpt} to generate $\Vs$. Notably, this
requires no additional evaluations of $V(x)$ compared with running the kernels
individually, but often improves the quality of the dataset of potentials
considerably. \cref{alg:NRST_adapt} gives an overview of the adaptation process that is
described in this section. At each iteration, NRPT is run with a doubling 
computational budget to construct  $\Vs$, which then is used to
adapt both the grid and the affinities. In turn, the improved grid should result
in NRPT samples of higher quality on the next iteration.

Although NRPT is used to tune NRST, NRST has key advantages over NRPT once tuned.
Indeed, parallelization in NRPT is tightly linked to the size of the grid. Since
there is an optimal value for the latter, adding more processors past this point
results in diminishing gains \citep[][\S 5.3]{syed2022nrpt}. In contrast,
regenerative processes like NRST always benefit from adding more processors
\citep{fishman1983accelerated}.
Moreover, whereas NRST requires no communication between workers, NRPT 
involves repeated messaging between random pairs of processors, 
making distributed implementations highly non-trivial \citep[see e.g.][]{surjanovic2023pigeons}.
NRST also admits the TE diagnostic derived from its simple CLT, whereas no 
such diagnostic for NRPT exists. Finally, as discussed in
\cref{sec:limit_grid}, NRST exhibits a lower barrier than NRPT for any given 
problem.

\subsection{Adjusting the affinities}\label{sec:adjust_affinities}
We restrict attention to the two affinity settings from \cref{sec:limit_grid}. 
In the \textbf{mean energy} setting, the affinity function has the closed 
form expression $c(\beta)=-\log(\mcZ(\beta)))$. We can thus employ existing 
approaches for approximating log-normalizing constants 
\citep[see][for a review]{fourment2019dubious}.
Among these, the \emph{stepping-stone} estimator \citep{xie2010improving} is a
good fit for simulated tempering algorithms. 
In \cref{app:norm_constants} we detail how this approach can be used to produce
an estimate $\widehat{\log(\mcZ(\beta_i))}$ of the log-normalization constant
at each grid point. 
In the \textbf{minimum rejection rate} setting, computing the minimum rejection affinity 
function requires numerically integrating \cref{eq:def_min_rej_rate_affinity}.
We use a trapezoidal approximation
\[
\frac{1}{2}\sum_{j=1}^i (\med^{(\beta_{j-1})}(V)+\med^{(\beta_j)}(V))\dbeta_j.
\]
$\med^{(\beta)}(V)$ can then be approximated using the sample medians of $\Vs$.

\subsection{Adapting the grid}\label{sec:adapt_grid}

\cref{thm:TE_formula} shows that under mild conditions, $\TE$ is maximized when 
$\sum_{i=1}^N \frac{\rho_i}{1-\rho_i}$ is minimized. 
\cref{lemma:opt_constrained_sum_convex} 
indicates that for fixed $\sum_{i=1}^N \rho_i$ (justified by \cref{prop:rejs_Riemann_sums}), 
the minimum occurs when $\forall i, \,\,\rho_i = \rho \in (0,1)$, i.e., the grid exhibits 
\emph{equi-rejection}.
In order to find a grid that satisfies this condition, we adapt the tuning
approach outlined in \citet{syed2022nrpt} to ST. The
idea leverages \cref{prop:rejs_Riemann_sums}: 
under equi-rejection, the finite grid approximation of
\cref{eq:rejs_Riemann_sums} becomes
\[\label{eq:finite_Riemann_sum_equi-rejection}
\forall i\in\indset:\ \sum_{j=1}^i \rho_i = i \rho \approx \Lambda(\beta_i), \qquad \rho \approx \Lambda/N.
\]
This suggests setting $\beta_i = \Lambda^{-1}\left((i/N)\Lambda\right)$. In practice, we must
use estimates of $\rho_i$ to approximate this expression. In particular, let 
$r_{i,i+\eps}$ be the Monte Carlo estimate of $\rho_{i,i+\eps}$ using the $\Vs$ dataset, and $r_i = (r_{i-1,i}+r_{i,i-1})/2$. 
Then we let $\widehat\Lambda(\beta)$, $\beta\in[0,1]$ be a monotonic interpolation of
\[\label{eq:def_estimate_temp_barrier}
\forall i\in\indset:\ \widehat\Lambda(\beta_i) := \sum_{j=1}^i r_j, \qquad \widehat\Lambda(1) = \widehat\Lambda = \sum_{j=1}^N r_j.
\]
Finally, we set $\beta_i = \widehat\Lambda^{-1}\left( (i/N)\widehat\Lambda\right)$ using a root finding algorithm.

\begin{figure}[t]
\centering
\includegraphics[width=\plotwidth]{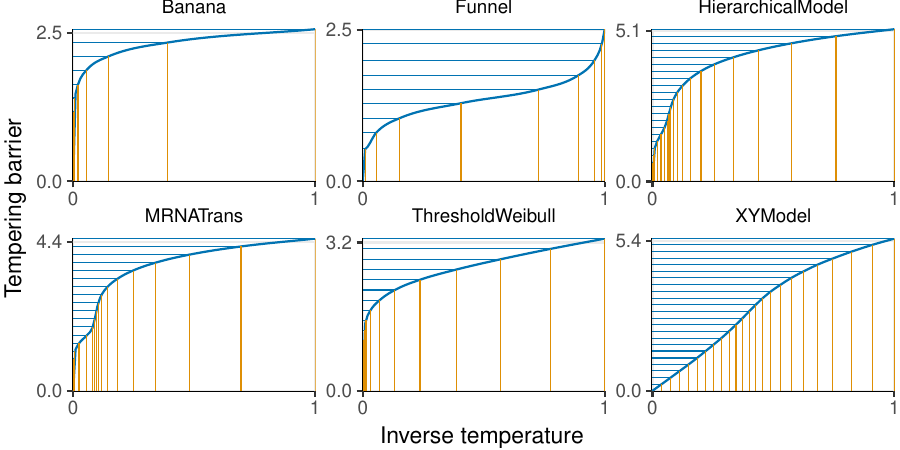}
\caption{Monotonic interpolation of the estimated tempering barrier
(\cref{eq:def_estimate_temp_barrier}) for an array of six models. Vertical lines 
(yellow) show the grid after running \cref{alg:NRST_adapt}.
Evenly-spaced horizontal lines (blue) signal that the grid satisifes equi-rejection.
}
\label{fig:barriers}
\end{figure}

\cref{fig:barriers} shows the result of applying the grid adaptation strategy to
a selection of six models (see \cref{app:models_description}).
Note that the optimal points are far from uniformly distributed. 
They also do not follow
a common pattern, like a logarithmic spacing concentrated at $\beta=0$. Indeed,
the optimal grid can concentrate in arbitrary parts of the unit interval,
highlighting the necessity of the adaptation scheme described in this section.

\subsection{Convergence detection}\label{sec:detect_convergence}

We propose three heuristic indicators and thresholds that, 
when \emph{jointly} satisfied, signal that \cref{alg:NRST_adapt} can 
be safely terminated. The first---the ratio of the standard deviation and mean of the rejection probabilities---tracks 
agreement with the  equi-rejection criterion. Agreement occurs when
\[
\frac{\sqrt{N^{-1}\sum_{i=1}^N (r_i-\br)^2}}{\br} < L_r,
\]
where $\br := N^{-1}\sum_{i=1}^N r_i$. The other two indicators track the
convergence in level affinities and tempering barrier, which have stabilized when
\[
\frac{|c_\text{new}(1)-c_\text{old}(1)|}{c_\text{old}(1)} < L_c \qquad \text{and} \qquad  \frac{|\widehat\Lambda_\text{new} - \widehat\Lambda_\text{old}|}{\widehat\Lambda_\text{old}} < L_\Lambda.
\]
Furthermore, when the mean-energy strategy is chosen for the affinity function,
the difference in directional rejection probabilities is also a useful diagnostic
(c.f. \cref{eq:def_sym_rejs}), which we monitor via the threshold
\[
\frac{\frac{1}{N}\sum_{i=1}^N|r_{i,i-1}-r_{i-1,i}|}{\br} < L_d.
\]
Based on experimental evidence, we recommend setting $L_r=0.1$, $L_c=0.005$,
$L_\Lambda=0.01$, and $L_d=0.05$. These thresholds do not have a
theoretical grounding---they are heuristics based on extensive testing on a variety of models.

\subsection{Choice of exploration kernels}\label{sec:choice_exp_kernel}

The basic theory described in \cref{sec:NRST} for the correctness of NRST---%
particularly the Law of Large numbers (\cref{thm:LLN})---places minimal 
requirements on the collection of kernels $\{K_i\}_{i=1}^N$; only 
$\target_i$-invariance and $\target_i$-irreducibility are assumed. In practice,
using different exploration kernels, each with many tunable parameters, can make 
the adaptation of NRST harder. This problem is further complicated by the
``moving target'' effect generated by the adaptation described in 
\cref{alg:NRST_adapt}: at each round we have a new grid and therefore a new
path of distributions, which in theory requires re-tuning the explorers.

Therefore, it is convenient to rely on a single parametric family of exploration 
kernels that can be used without much or any tuning effort. If adaptation is 
required, a simple yet effective strategy is to optimize the kernel for the 
target distribution $\target=\target_N$---which is always part of the path $\{\target_i\}_{i=0}^N$, thus avoiding the moving target issue.
In our experiments (\cref{sec:numerical_results}) we opt for using the slice 
sampler \citep{neal2003slice} because it
can be used without any adaptation across all the experiments we covered.
\citet{biron2023automala} is a more recent example of an MCMC sampler that can be 
used with minimal inter-round adaptation.

\subsection{Selecting the grid size}\label{sec:tune_N}

We now discuss tuning the grid size $N$. If $N$ is too low, the sampler will struggle to move between 
endpoints due to the high probability of rejection in tempering. Increasing $N$, on the other hand, 
generally increases the cost of running a single tour. A natural candidate to capture this trade-off is the ratio of
expected tour length and tour effectiveness. This ratio captures the total number of exploration steps taken
to produce a result for fixed 
$(\alpha, \delta)$ in \cref{eq:def_Kmin}, i.e., fixed result quality:
\[
\frac{\EE_\nu[\tau]}{\TE} \propto \Kmin(\alpha,\delta,\TE)\EE_\nu[\tau].
\]

Under \cref{assu:ELE,assu:mean_energy_affinity}, the expected tour length is 
$\EE_\nu[\tau] = 2(N+1)$. Assuming further that equi-rejection is approximately
satisfied, and further that $\rho \approx \Lambda/N$, the ratio has the form 
\[
\frac{\EE_\nu[\tau]}{\TE} \approx 2(N+1)\left(1 + 2N\frac{\Lambda/N}{1-\Lambda/N}\right) = \frac{2(N+1)(N(1+2\Lambda)-\Lambda)}{N-\Lambda}.
\]
\cref{lemma:TE_over_tourlength_maximizer} shows that 
this expression is minimized over $N\in\reals$, $N > \Lambda$ by
\[\label{eq:def_opt_N}
N^\star := \Lambda\left[1+\sqrt{1+\frac{1}{1+2\Lambda}}\right].
\]
Note that $N^\star \in (2\Lambda, (1+\sqrt{2})\Lambda)$, so that the grid
size grows linearly with the total tempering barrier.
A practical way of using \cref{eq:def_opt_N} is to estimate $\Lambda$ by 
$\widehat{\Lambda}$ during the tuning rounds in \cref{alg:NRST_adapt}. Then,
if the difference between the current $N$ and the estimated $N^\star$ is 
substantial, the tuning can be re-started with this new value.
In practice, we use $\gamma N^\star$ tempering levels for some $\gamma \geq 1$ 
to account for the various approximations and assumptions used in deriving $N^\star$. In \cref{app:hyperparams} we show that the safety factor 
$\gamma=2$ results in more robust performance without sacrificing efficiency.

\subsection{Managing the serial correlation in exploration steps}\label{sec:serial_correlation}

\cref{assu:ELE} is an important condition 
of many findings in \cref{sec:analysis}, and suggests that a high
correlation in the potential $V(x)$ before and after
an exploration step might hinder the performance of ST.
One can decrease this correlation, if necessary, by composing the exploration kernels $\{K_i\}_{i=1}^N$ 
with themselves. Specifically, for each 
$i\in\indsetone$, let $K_i^0:=\Id$ be the identity kernel, and define for $n > 1$, $K_i^n := K_i^{n-1} \circ K_i$.
Under mild regularity conditions, the serial correlation in $V(x)$ can be
decreased by substituting $K_i$ in the exploration step with $K_i^n$ for $n> 1$.
The best value of $n$ for a given inference problem
will generally depend on the kernel $K_i$. 
We suggest selecting a
maximum tolerable correlation $\bkappa<1$, and then setting, for all $ i \in \indsetone$, $n_i^\star := \inf\{n\in\posInts: \kappa_i(n) \leq \bkappa\},$
where $\kappa_i(n)$ denotes the correlation $\kappa_i(n) = \corr(V(x_0), V(x_n))$ for $x_0\sim\pi_i$ and $x_n\sim K_i^n(x_0, \cdot)$.
The autocorrelation functions can be approximated, for example, by running the
exploration kernels at each point in the final
grid obtained in \cref{alg:NRST_adapt}. In \cref{app:hyperparams} we show that
$\bkappa=0.95$ results in the best performance.

\section{Related work}\label{sec:related_work}

Non-reversible extensions of simulated tempering have appeared previously
in the literature. \citet{sakai2016irreversible} apply the general technique
described in \citet{turitsyn2011irreversible} to transform reversible chains 
into non-reversible ones. Indeed, \citet{sakai2016irreversible} start from the 
symmetric random walk ST and build a continuum of MCMC algorithms controlled by 
a parameter $\delta\in[0,1]$. The original reversible algorithm is recovered when 
$\delta=0$. For $\delta>0$ however, the Markov chain is non-reversible and its 
performance improves monotonically as $\delta$ increases. NRST is similar to the 
result of taking $\delta=1$, although not exactly equal: the algorithm with $\delta=1$ allows for the 
possibility of not updating neither $i$ nor $\eps$ during one step,
while this is impossible in our formulation. \citet{faizi2020simulated} 
also apply the \citet{turitsyn2011irreversible} approach, but starting instead 
from a reversible ST algorithm that updates the temperature using the full 
conditional $\pi_\text{st}(i|x)$ corresponding to \cref{eq:def_ST_lifted} \citep{rosta2010error}.

There exists a vast literature describing heuristics to set the affinities
\citep[see e.g.][]{park2007choosing,chelli2010optimal,
nguyen2013communication,sakai2016learning}. The mean-energy strategy is adopted 
(implicitly or explicitly)
in all of these works, so that log-normalization constants need to be estimated. 
In particular, \citet{mitsutake2000replica} proposes using a preliminary run of 
parallel tempering with the same grid in order to estimate these constants.
\citet{geyer1995annealing} also considers the possibility of using parallel 
tempering, but recommends instead an alternative method.

In contrast, less attention has been put on automatically selecting grid points. 
These are generally determined using either a uniformly or logarithmically spaced 
grid between endpoints. \citet{geyer1995annealing} is one of the few works that 
tackle this issue, aiming to obtain equi-rejection via an ad-hoc model of the 
acceptance rate as a function of $\beta$.
Alternatively, recent work \citep{gobbo2015extended,graham2017continuously}
has sidestepped the issue of finding a grid by working with a continuum of 
inverse temperatures. Furthermore, building on these developments, 
\citet{martinsson2019simulated} propose marginalizing-out $\beta$ via averaging 
results for Markov processes \citep[see e.g.][Ch.\ 17]{pavliotis2008multiscale}, 
however this line of work applies only to
smooth target distributions, whereas NRST is applicable to general state-spaces.

Throughout this paper we have suggested running NRST for a fixed number of tours
determined via \cref{eq:def_Kmin} and then estimating expectations under $\target$
via \cref{eq:def_ratio_est_implicit}. Nevertheless, alternative stopping criteria 
and estimators have been proposed in the regenerative simulation literature 
\citep[see e.g.][and references therein]{glynn2006simulation}.

The coefficient of variation of the tour lengths $\{\tau_k\}_{k\in\nats}$ is a 
common diagnostic used in regenerative simulation 
\citep[see][and references therein]{mykland1995regeneration}. 
It arises from bounding the asymptotic variance of estimates of unconditional expectations---in 
contrast to TE which bounds the error of estimating a  conditional expectation that is specific to ST. 

Regenerations allow adapting MCMC by using past information without disrupting the
stationary distribution (see \citealt{gilks1998adaptive} and 
\citealt{lan2014wormhole} for an application). 
Lastly, there has been work towards unifying ST with other MCMC 
methods aimed at sampling from finite collections of distributions---like 
Wang-Landau and trans-dimensional MCMC---in order to propose more efficient
sampling strategies for all \citep{atchade2010wang,tan2017optimally}.

\section{Numerical results}\label{sec:numerical_results}

In this section we show experimental evidence for the effectiveness of our 
proposed adaptation algorithm. Moreover, we provide a numerical assessment of 
the closeness between the estimator $\TEest$ (\cref{eq:def_TEest}) and the limiting
value derived in \cref{cor:limit_TE_unif}.

Since the focus of this paper is on \emph{automatic} regenerative samplers,
in the following we have excluded comparisons against non-ST regenerative samplers.
As discussed in \cref{sec:intro,sec:background}, we
have not found other regenerative MCMC methods that are as automatic as ST.%
\footnote{%
One notable exception is Independence MH (IMH), for 
which regenerative simulation is possible regardless of the proposal distribution 
\citep[][\S 4.1]{mykland1995regeneration}. But note that we can view IMH as an 
improperly tuned NRST sampler with $N=1$, $K_1=\text{Id}$, and reference given by 
the proposal. Following the reasoning of \cref{sec:adaptation}, we would expect 
that NRST with adaptation and non-trivial exploration kernel to dominate IMH.%
}
Hence, we restrict ourselves to ST algorithms because 
putatively more efficient regenerative samplers that require model specific 
derivations lie outside of the scope of this paper.

\subsection{Evaluation of the ST adaptation algorithm}\label{sec:eval_ST_adaptation}
We first evaluate the effect and applicability of the adaptation
algorithm in \cref{sec:adaptation}.  In particular, we compare the performance
of multiple ST algorithms both with and without parameters set using the proposed strategy:
our novel NRST method, the original reversible simulated tempering of
\citet{geyer1995annealing} (abbreviated henceforth as ``GT95''), and the two
non-reversible algorithms of \citet{sakai2016irreversible} (``SH16'') and 
\citet{faizi2020simulated} (``FBDR''). When \emph{not} using our proposed method for tuning,
we default to each past method's recommended tuning method (see 
\cref{app:tuning_details} for further details). 

We compare the performance of these ST methods using a notion of 
\emph{sample-quality per unit cost}. We fix quality across all ST algorithms by 
fixing a particular value of the confidence interval half-width $\delta>0$ 
and level $\alpha$ in \cref{alg:NRST_par}. We measure \emph{cost} via the number 
of $V$ evaluations. This is an implementation-agnostic 
proxy of wall-clock time: repeated evaluation of $V$ accounts for the vast majority
of the time during our experiments. We define the ``parallel cost'' as the maximum
number of $V$ evaluations across tours, and the ``serial cost'' as the sum of $V$ evaluations
across tours.

For all ST algorithms and all tempering levels, we use slice sampling within Gibbs
as the exploration kernel \citep{neal1997markov,neal2003slice}; the natural adaptivity of slice sampling
avoids the need to re-adjust exploration kernels at each round of \cref{alg:NRST_adapt}.

\begin{figure}[t]
\centering
\includegraphics[width=\plotwidth]{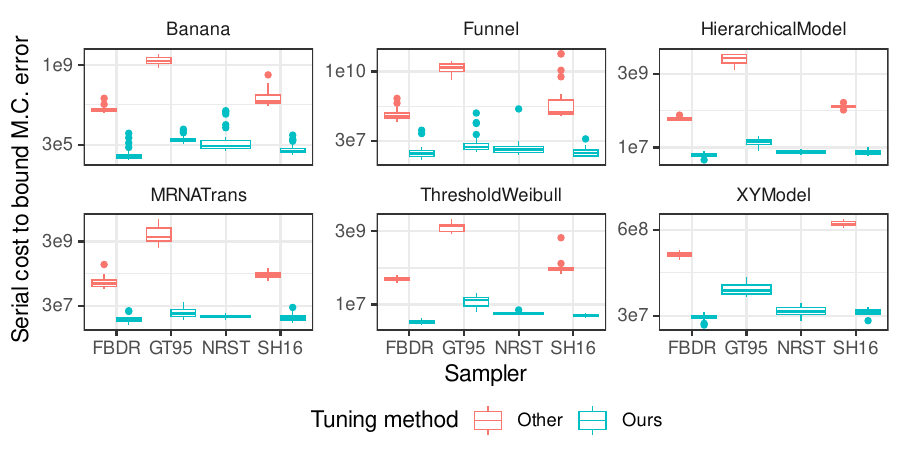}
\caption{Side-by-side box-plots of serial cost (as defined in 
\cref{sec:eval_ST_adaptation}) for each simulated tempering algorithm and tuning 
strategy (color), for $6$ selected models (lower is better). Each experiment is replicated $30$ 
times. The number of tours is determined in a quality-consistent 
approach using \cref{eq:def_Kmin} with $\alpha=0.95$, $\delta=0.5$, and $\TEest$ 
estimated in a preliminary run. Missing box for GT95 in XYModel is due to all 
replications exceeding the time budget.}
\label{fig:benchmark}
\end{figure}

\cref{fig:benchmark} shows the results of the comparison. We first 
notice that our automatic adaptation strategy achieves multiple orders of 
magnitude in cost reduction across all competitors and models. In particular, the
worst performing algorithm under the alternative tuning strategy is GT95. This is
due to the deterioration of performance that this reversible sampler experiences
in the dense grid regime. However, when inspecting the results under our 
adaptation scheme, we see a much smaller difference between GT95 and the rest.
This is because the grids produced by our tuning tend to be sparse, and
non-reversibility gains are smaller (although usually still meaningful) on such grids.
Lastly, it is interesting that the three non-reversible algorithms become 
essentially equivalent using our adaptation strategy, although FBDR achieves
consistent (if small) gains across examples.

\subsection{Tightness of the asymptotic tour effectiveness}\label{sec:TEinfty_tightness}

\begin{figure}[t]
\centering
\includegraphics[width=\shorterplotwidth]{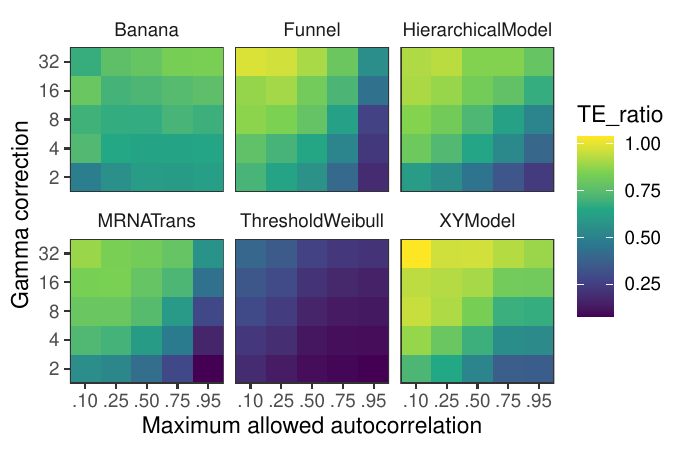}
\caption{Ratio between the average $\TE$ from $30$ replications of a given
configuration, over the estimated $\TEinfty$ using the same 
replications. Grid represents different combinations of $\bkappa$ and $\gamma$,
with the cost-optimal value represented in the bottom-right corner.}
\label{fig:TE_ELE}
\end{figure}

In \cref{sec:limit_grid} we showed that, under certain conditions, TE converges to a fixed value $\TEinfty$ as the grid refines.
This experiment assesses how close the non-asymptotic $\TE$ is to its 
limiting value, and if it is possible to achieve $\TEinfty$ in certain situations.

\cref{fig:TE_ELE} shows the ratio between the mean $\TEest$ and 
$\TEinfty$ over $30$ repetitions of NRST runs that use a range of 
combinations of the hyper-parameters $(\gamma,\bkappa)$. The
default, cost-optimal values are represented in the bottom-right corner.
Notice that these configurations achieve already about $20\%$ of
$\TEinfty$. Also, as either $\gamma$ increases
or $\bkappa$ decreases, the estimated tour effectiveness approaches the limiting
value, validating our analyses. However, convergence is
slower for more complex target distributions; e.g., the ``ThresholdWeibull'' 
achieves only $40\%$ of $\TEinfty$. This occurs because estimating
the log-partition function for this model is hard, since the 
distribution of $V$ under $\pibeta$ becomes heavy tailed as $\beta\to0$.

\section{Regenerative NRST in practice}\label{sec:practical-nrst}

\subsection{Regenerative NRST and Massive Parallelization}\label{sec:parallel}

In this section, we offer practical advice on deploying NRST for complex problems on
massive computing platforms, and how the particular
workload patterns of regenerative MCMC can take advantage of newer architectures for 
distributed computation. Throughout this section we assume that (1) the model is complex
enough that the compute time for a single tour is fairly long; (2) the computing platform 
dynamically assigns the NRST tours among a pool of concurrent workers of pre-specified size; 
and (3) that we have already run the adaptation routine in 
\cref{sec:adaptation}. Our aim is to devise a strategy to choose the 
size of the pool to satisfy a time or cost constraint.

We begin with a preliminary run of $K_\text{trial}$ parallel NRST tours 
(\cref{alg:NRST_par}). To determine $K_\text{trial}$ we need an estimate of TE in \cref{eq:def_Kmin};
we use the mean-energy strategy (see \cref{sec:adjust_affinities})
and estimate TE via the asymptotic formula in \cref{cor:limit_TE_unif}; i.e.,
$\widehat{\TEinfty} := 1 / (1+2\widehat\Lambda)$, where 
$\widehat\Lambda=\widehat\Lambda(1)$ is the 
estimated tempering barrier (\cref{eq:def_estimate_temp_barrier}) returned by
\cref{alg:NRST_adapt}.
The preliminary run provides an estimate of $\TEest$
using \cref{eq:def_TEest}, which determines the required number 
of tours $K$ to satisfy the $(\alpha,\delta)$ accuracy guarantee. Note that 
the $K_\text{trial}$ tours can be used here too, so only 
the difference $K_\text{extra}:=K-K_\text{trial}$ needs to be executed in the 
next phase. Given that the asymptotic TE tends to overestimate the true TE
(\cref{sec:TEinfty_tightness}), it is usually the case that $K_\text{extra}$ 
is significantly larger than $K_\text{trial}$.
The preliminary run also provides CPU times incurred by tours;
we fit this distribution via a bulk-tail mixture with an $80^\text{th}$
percentile threshold, using an empirical distribution for the bulk, and a Weibull 
density for the tail. This CPU time model enables simulating hypothetical running 
times and costs for samples of $K_\text{extra}$ tours.

\begin{figure}[t]
\centering
\includegraphics[width=\plotwidth]{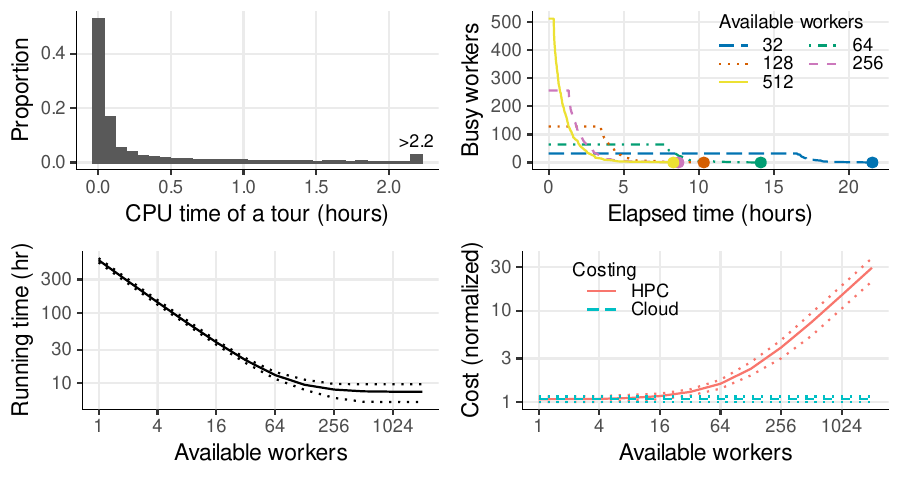}
\caption{Graphic analysis of a simulation plan to execute $2048$ expensive tours 
in parallel.
\textbf{Top-left}: histogram of simulated CPU times (truncated at the $97.5$-th 
percentile). 
\textbf{Top-right}: number of active workers at any given 
time when running $2048$ tours using a varying number of available workers,
assuming dynamic scheduling of tours. Dots mark the total running time in each 
case.
\textbf{Bottom-left}: estimated running time of the algorithm versus number of 
available workers. Solid line denotes the mean over $30$ repetitions, while 
dotted lines present an $80\%$ confidence interval.
\textbf{Bottom-right}: estimated cost (normalized so that the minimum is $1$) 
of running the algorithm versus available workers, depending on costing rule. 
Solid and dashed lines denote the means over $30$ repetitions, while dotted 
envelopes give $80\%$ confidence intervals.
}
\label{fig:workers_time_cost}
\end{figure}

\cref{fig:workers_time_cost} displays various simulations that can be used to plan the execution of $K_\text{extra}$ 
tours (we fix $K_\text{extra} = 2048$ for this example). 
The top-left panel in \cref{fig:workers_time_cost} shows the CPU time distribution of 
$K_\text{extra}$ tours, highlighting how unbalanced the 
workload of a typical parallelized run is.
The top-right and bottom-left panels
display hypothetical usage patterns for worker pools of varying sizes
when processing the $K_\text{extra}$ tours. These figures demonstrate that when few workers are available,
doubling the pool size almost halves the runtime; but after a certain point, the longest tour (about $8$ hours in this sample) begins 
to dominate and returns diminish.

This workload pattern has implications for execution 
on High-Performance Computing (HPC) and Cloud Computing (Cloud) platforms,
which have very different characteristics. 
Jobs submitted to HPC systems often do not incur monetary costs---as they are often owned 
by the user's organization---but they are queued until
enough resources are free to accommodate them 
\citep[see][]{fan2019scheduling}. And typically, the scheduler
assumes that all workers requested will be busy for the duration of the job;
i.e.,
\[
\text{HPC cost} \propto \text{Running time} \times \text{Pool size}.
\]
As we know from the top-right plot in \cref{fig:workers_time_cost}, this is not 
the case for the usual workloads in regenerative MCMC. Thus, if we requested 
$2048$ workers to execute the running example, the job might wait for an 
unacceptably long time. A better choice would be
$64$ workers; when compared to a single worker, this configuration achieves 
upwards of a $40$ fold reduction in runtime while only doubling the cost.

In contrast, Cloud platforms often have little to no scheduling wait time, but
do commonly incur monetary cost proportional to the total CPU time,
\[\label{eq:def_total_cpu_time_cost}
\text{Cloud cost} \propto \sum_{k=1}^{K_\text{extra}} (\text{CPU time of the }k^\text{th}\text{ tour}).
\]
The sum on the right-hand side of \cref{eq:def_total_cpu_time_cost} 
corresponds to the area under the curves in the top-right panel of 
\cref{fig:workers_time_cost}. Since this sum is invariant to tour allocation, 
the cost is constant versus pool size. In theory, then, we should always select a pool of
size equal to $K_\text{extra}$ on a Cloud platform. However, Cloud platforms usually have a maximum concurrency 
limit---for example, Amazon Lambda \citep{amazonlambda} currently allows a 
maximum of $1000$ concurrent workers---and if the initialization time of 
a worker is significant, then it may be better to 
amortize that time by assigning more than one tour to each worker.

\subsection{Regenerative NRST and Probabilistic Programming}\label{sec:PPLs}

Our approach to regenerative MCMC can be easily integrated
within many PPLs. In the Bayesian setting, when the prior $\varpi$ is 
proper and allows \iid simulation, then using $\varpi$ as the 
reference and setting $V(x)=-\log(L(y|x))$ gives a valid path of 
distributions to the target posterior density
$\target(x) \propto \varpi(x)L(y|x)$. If this model is coded in a PPL
that exposes a minimal set of functions, we can apply regenerative MCMC to 
the model without any extra user input. The basic required building 
blocks are
1) a simulator for $\varpi$, 2) a function that returns $\log(L(y|x))$ for any 
$x$, and 3) a method to evaluate $\log(\varpi(x))$ (required for the exploration 
step). For example, our Julia implementation of NRST\footnote{%
\if0\blind{\url{https://github.com/UBC-Stat-ML/NRST.jl}}\fi%
\if1\blind{[\textcolor{blue}{repo URL withheld for double blind review}]}\fi%
}
can seamlessly handle models written in DynamicPPL \citep{tarek2020dynamic}, which exposes 
these three building blocks. More generally, PPLs that compile probabilistic models into graphical 
models---like WinBUGS and descendants \citep{Lunn2000}---can in theory
provide the type of interface described above \citep[\S 3]{vandemeent2021introduction}.

While the scenario above covers many cases of interest, it
fails if the prior is improper or difficult to sample from
(e.g., a Markov random field), or if the PPL does not
expose the necessary functionality. However, as long as the PPL provides
the ability to evaluate $\log(\target(x))$ for any $x$---as is the case with Stan \citep{stan2023stan}---we
can still apply our approach. In particular, we select a parametrized family of references $\mcQ:=\{q_\theta:\theta \in \Theta\}$ 
such that each $q_\theta$ enables \iid sampling and evaluation of 
$\log(q_\theta(x))$, and then set $V(x)=-\log(\target(x)/q_\theta(x))$ to obtain a valid path of distributions.
The \emph{best} member $q_{\theta^\star}$ may then be chosen by any automatic variational 
procedure, e.g., \citet{kucukelbir2015automatic}. This is similar to the approach for NRPT 
described in the work of \citet{surjanovic_parallel_2022}.

\section{Discussion}

In this work we developed the tools necessary to use NRST for automated 
regenerative MCMC. We derived the TE diagnostic as a key summary of the behavior 
of ST algorithms, and used its relationship 
with algorithmic parameters to provide a robust adaptation procedure that 
facilitates the use of NRST in a wide range of probabilistic models. Our 
experiments demonstrated that our tuning recommendations produce sizable gains 
on both reversible and non-reversible ST algorithms.

Future work includes generalizing \cref{thm:TE_formula} to arbitrary level
distributions. This would allow us to assess the robustness of NRST under
imperfect tuning---i.e., when $c_i\neq -\log(\mcZ(\beta_i))$---which in practice
is how NRST is implemented. A generalized \cref{thm:TE_formula} could 
also provide a better understanding of the median affinity strategy, which in our 
experiments seems to yield performance gains in settings where there are more 
workers available than tours to run (see \cref{fig:hyperparams_fun} in 
\cref{app:hyperparams}).

Elucidating the connection between TE and the rate of convergence of NRST to its 
stationary distribution is another potential avenue of future research.
One way to approach this would be to leverage Theorem 13.3.1 in
\citet{douc2018markov}, which
proves geometric ergodicity for atomic chains when the moment generating function 
(m.g.f.) of the return time $T_1$ is analytic at the origin. Under 
\cref{assu:ELE}, the index process is a finite Markov chain, so the m.g.f.\
condition holds \citep[see e.g.][Prop.\ 6.1]{nemirovsky2013tensor}. By 
linking the radius of convergence of the m.g.f.\ to the TE diagnostic, we could
obtain an expression for the rate of geometric ergodicity in terms of TE.

Annealed importance sampling (AIS, \citealp{neal_annealed_2001}) is a Monte
Carlo algorithm that bears some similarities to ST. It works by sequentially 
pushing \iid samples from a reference $\target_0$ towards the target $\target$
through the tempering path in \cref{eq:def_temp_dist} using a collection of
$\pibeta$-invariant Markov kernels. Moreover, this process is also embarrassingly parallelizable. On the other hand, AIS is not a proper MCMC method and thus we
have excluded it from our experiments. Nevertheless, it would be interesting to 
establish both theoretical and experimental comparisons between NRST and AIS.

\section*{Acknowledgements} 
We thank Son Luu for his helpful feedback on Theorem 4.4. We also thank an
editor and two reviewers for their engaging comments and suggestions.
We further thank one of our reviewers for helping us simplify the proofs of 
\cref{lemma:opt_constrained_sum_convex,lemma:maximize_rational_fun,lemma:TE_over_tourlength_maximizer}.
ABC and TC acknowledge the support of an NSERC Discovery Grant.
We also acknowledge use of the ARC Sockeye computing platform from the
University of British Columbia.

\bibliographystyle{apalike}
\bibliography{ref}

\clearpage

\appendix

\section{Implementation details}\label{app:implementation}

\subsection{Efficient and stable computation of normalizing constants}\label{app:norm_constants}

Here we give a brief description of how the stepping-stone approach can be
used to estimate log-normalizing constants.
Consider the following importance sampling identity: for any $i<i'$,
\[
\frac{\mcZ(\beta_{i'})}{\mcZ(\beta_i)} = \EEi{i}\left[e^{-(\beta_{i'}-\beta_i)V}\right].
\]
Building the sample estimate of the above using the samples $\Vs$ (\cref{sec:adaptation}) yields
\[\label{eq:def_IS_estimator}
\widehat{\left(\frac{\mcZ(\beta_{i'})}{\mcZ(\beta_i)}\right)}_{\text{IS}} := 
\frac{1}{S_i}\sum_{n=1}^{S_i} e^{-(\beta_{i'}-\beta_i)V_n^{(i)}},
\]
which is a consistent estimator of the ratio of normalizing constants. 
Combining these using the telescoping identity
\[\label{eq:step_stone_telescopic}
\mcZ(\beta_i) = \prod_{j=1}^i \frac{\mcZ(\beta_j)}{\mcZ(\beta_{j-1})},
\]
where we used $\mcZ(0)=1$, yields the \emph{forward} stepping-stone estimator 
\[
\widehat{\mcZ(\beta_i)}_{\text{SS-F}} := 
\prod_{j=1}^i \widehat{\left(\frac{\mcZ(\beta_j)}{\mcZ(\beta_{j-1})}\right)}_{\text{IS}}.
\]
Similarly, the inverted importance sampling identity
\[
\frac{\mcZ(\beta_{i'})}{\mcZ(\beta_i)} = \EEi{i'}\left[e^{(\beta_{i'}-\beta_i)V}\right]^{-1},
\]
admits the estimator
\[\label{eq:def_IS-I_estimator}
\widehat{\left(\frac{\mcZ(\beta_{i'})}{\mcZ(\beta_i)}\right)}_{\text{IS-I}} := 
\left(\frac{1}{S_{i'}}\sum_{n=1}^{S_{i'}} e^{(\beta_{i'}-\beta_i)V_n^{(i')}} \right)^{-1}.
\]
Using this in \cref{eq:step_stone_telescopic} yields the \emph{backward}
stepping-stone estimator
\[
\widehat{\mcZ(\beta_i)}_{\text{SS-B}} := 
\prod_{j=1}^i \widehat{\left(\frac{\mcZ(\beta_j)}{\mcZ(\beta_{j-1})}\right)}_{\text{IS-I}}.
\]
Finally, to produce an estimator of the log-normalizing constants we take the
simple average of the logarithm of both stepping-stone estimators
\[\label{eq:def_log-norm_const_estimator}
\widehat{\log(\mcZ(\beta_i))} := \frac{1}{2}\left[ \log\left(\widehat{\mcZ(\beta_i)}_{\text{SS-F}}\right) + \log\left(\widehat{\mcZ(\beta_i)}_{\text{SS-B}}\right) \right].
\]

Two insights help in efficiently and accurately computing 
\cref{eq:def_log-norm_const_estimator}. The first is noting that the calculation
can be carried out recursively in one pass over the data $\Vs$, since
\[
\widehat{\log(\mcZ(\beta_i))} &= \widehat{\log(\mcZ(\beta_{i-1}))} +  \frac{1}{2}\left[\log\widehat{\left(\frac{\mcZ(\beta_i)}{\mcZ(\beta_{i-1})}\right)}_{\text{IS}} + \log\widehat{\left(\frac{\mcZ(\beta_i)}{\mcZ(\beta_{i-1})}\right)}_{\text{IS-I}} \right].
\]
The second insight is that the logarithms on the right-hand side above should
be computed using the log-sum-exp (LSE) trick 
\citep[][\S 3.5.3]{murphy2012machine}
\[
\log\widehat{\left(\frac{\mcZ(\beta_i)}{\mcZ(\beta_{i-1})}\right)}_{\text{IS}} &= 
-\log(S_{i-1}) + \LSE(-(\beta_i-\beta_{i-1})V^{(i-1)}) \\
\log\widehat{\left(\frac{\mcZ(\beta_i)}{\mcZ(\beta_{i-1})}\right)}_{\text{IS-I}} &= \log(S_{i}) - \LSE((\beta_i-\beta_{i-1})V^{(i)}).
\]

\subsection{Pseudo-code}\label{app:pseudo-code}

\begin{algorithm}
\caption{One step of NRST with fixed grid $\mcP$ and affinities $\mcC$.}
\begin{algorithmic}
\Function{NRSTStep}{$x,i,\eps$; $\mcP, \mcC$}
\vspace{.3em}
\State \texttt{\# Tempering step}
\State $\iprop \gets i+\eps$ \Comment{Propose move}
\If{$\iprop = N+1$} \Comment{Bounce above}
    \State $(i,\eps) \gets (N, -1)$
\ElsIf{$\iprop=-1$} \Comment{Bounce below}
    \State $(i,\eps) \gets (0, +1)$
\Else \Comment{Interior move}
	\State $A\sim \distBern(\alpha_{i,\iprop}(x))$ \Comment{Use \cref{eq:def_comm_step_acc_prob} with $\mcP$ and $\mcC$}
	\If{$A=1$}
	    \State $i \gets \iprop$
	\Else
	    \State $\epsilon \gets -\epsilon$
	\EndIf
\EndIf
\vspace{.3em}
\State \texttt{\# Exploration step}
\If{$i > 0$}
    \State $x \sim K_i(x,\cdot)$
\Else
    \State $x \sim \pi_0$
\EndIf
\State \Return $(x,i,\eps)$
\EndFunction
\end{algorithmic}
\label{alg:NRST_step}
\end{algorithm}

\begin{algorithm}
\caption{Running an NRST tour}
\begin{algorithmic}
\Function{NRSTTour}{$\mcP, \mcC$}
\vspace{.3em}
\State \texttt{\# Initialize from $\nu$}
\State $x_0\sim \pi_0$
\State $(x,i,\eps) \gets (x_0, 0, +1)$
\State $\mcT \gets \{(x,i,\eps)\}$ \Comment{Initialize trace}
\vspace{.3em}
\State \texttt{\# Run NRST until $\atom$ is reached}
\While {$i > 0$ or $\eps=+1$}
	\State $x,i,\eps \gets$ \textproc{NRSTStep}($x,i,\eps$; $\mcP, \mcC$) \Comment{\cref{alg:NRST_step}}
	\State $\mcT \gets \mcT \cup \{(x,i,\eps)\}$ \Comment{Update trace}
\EndWhile
\State \Return $\mcT$
\EndFunction
\end{algorithmic}
\label{alg:NRST_tour}
\end{algorithm}

\begin{algorithm}
\caption{Parallelized Regenerative NRST}
\begin{algorithmic}
\Function{ParallelNRST}{$\mcP, \mcC, \alpha, \delta, \widehat{\TE}$}
\State $K \gets  K_{\text{min}}(\alpha,\delta,\TEest)$ \Comment{\cref{eq:def_Kmin}}
\State $\mcS \gets \emptyset$ \Comment{Initialize storage}
\PFor {$k \in \{1,\dots,K\}$}
	\State $\mcT_k \gets $  \textproc{NRSTTour}($\mcP, \mcC$) \Comment{\cref{alg:NRST_tour}}
	\State $\mcS \gets \mcS \cup \{\mcT_k\}$ \Comment{Store tour trace}
\EndPFor
\State \Return $\mcS$
\EndFunction
\end{algorithmic}
\label{alg:NRST_par}
\end{algorithm}

\section{Additional experimental details}\label{app:additional_exps}

\subsection{Description of the models}\label{app:models_description}

In the following we use the convention that $\distNorm(\mu,\sigma^2)$ is a Gaussian 
distribution with mean $\mu$ and variance $\sigma^2$.

\paragraph{Toy multivariate Gaussian}
This is a toy $d$-dimensional Bayesian model specified by
\[
x &\sim \distNorm_d(\boldsymbol{0}, \sigma_0^2 I) \\
y|x &\sim \distNorm_d(x,I).
\]
By conjugacy, the posterior distribution for $x$ given $y$ is a multivariate 
Gaussian. In fact, the whole path of distributions $\beta\mapsto\pibeta$ is given by a 
family of Gaussians. To simplify matters, assume that $y=m\boldsymbol{1}$ for some 
$m\in\reals$. Then, for all $\beta\in[0,1]$,
\[
\pibeta = \distNorm_d\left(\mu(\beta)\boldsymbol{1},\sigma(\beta)^2 I\right),
\]
with
\[
\mu(\beta) := \beta m \sigma(\beta)^2, &&
\sigma(\beta)^2 := \left(\beta + \frac{1}{\sigma_0^2}\right)^{-1}.
\]
For \cref{fig:index_process} we use the parameters $d=3$, $m=2$, and $\sigma_0=2$.

\paragraph{Banana}
A $2$-dimensional target density in the shape of a banana, as described in Equation $1$ of
\citet{pagani2022banana}
\[
x_1 &\sim \distNorm(1,10) \\
x_2|x_1 &\sim \distNorm(x_1^2, 0.1^2).
\]
Using the fact that $\EE[x_2^2]=11$ under this target, we set the reference to be
\[
x_1 &\sim \distNorm(1, 10) \\
x_2 &\sim \distNorm(11, 10^2).
\]

\paragraph{Funnel}
A $20$-dimensional target density based on Neal's funnel \citep{neal2003slice}, using the 
parametrization in \citet{modi2021delayed}
\[
x_1 &\sim \distNorm(0,3^2) \\
x_i|x_1 &\sim \distNorm(0,e^{x_1}), \quad i\in\{2,\dots,20\}.
\]
We set the reference distribution to an isotropic Gaussian
\[
x_i &\distiid \distNorm(0,3^2), \quad i\in\{1,\dots,20\}.
\]

\paragraph{Hierarchical Model}
A slightly modified version of the hierarchical model described in 
\citet[][\S 6.3]{yao2022stacking}.
It is different only in that our model puts a Cauchy prior on $\mu$ instead of 
the (improper) uniform density on $\reals$. This allows us to obtain a proper 
prior distribution that can be used as reference
\[
\mu &\sim \distCauchy \\
\tau^2,\sigma^2 &\distiid \mathsf{InverseGamma}(0.1,0.1) \\
\theta_j|\mu,\tau^2 &\sim \distNorm(\mu,\tau^2), \quad j\in\{1,\dots,J\} \\
Y_{i,j}|\theta_j,\sigma^2 &\sim \distNorm(\theta_j,\sigma^2), \quad i\in\{1,\dots,M\}, j\in\{1,\dots,J\}.
\]
We generate a dataset $\mathbf{Y}$ with $J=8$ and $M=20$ by performing rejection 
sampling from this model, constraining the output to have a between-group 
variance $16$ times larger than the within-group variance.

\paragraph{mRNA Transfection}
This is a time series model taken from \citet{ballnus2017comprehensive}
\[
\log_{10}(t_0) &\sim \distUnif(-2,1) \\
\log_{10}(\kappa),\log_{10}(\beta),\log_{10}(\delta) &\distiid \distUnif(-5,5) \\
\log_{10}(\sigma) &\sim \distUnif(-2,5) \\
y_i|t_0,\kappa,\beta,\delta,\sigma &\sim \distNorm(\mu(t_i|t_0,\beta,\delta), \sigma^2), \quad i\in\{1,\dots,n_\text{obs}\},
\]
where
\[
\mu(t_i|t_0,\beta,\delta) := \frac{\kappa}{\delta-\beta}[\exp(-\beta(t_i-t_0)) - \exp(-\delta(t_i-t_0))].
\]
We use the dataset provided in \citet{surjanovic_parallel_2022}.

\paragraph{Threshold Weibull}
This is a $3$-parameter Weibull distribution that adds a threshold to the usual $2$-parameter
family. This creates a likelihood function that is unbounded from above, which usually
complicates standard frequentist inference. The likelihood model is taken from 
\citet[][Ch. 8]{cheng2017inference}, and we use the dataset ``SteenStickler0'' provided 
there. Finally, we put an uninformative prior to obtain a fully specified Bayesian model
\[
a &\sim \distUnif(0,200) \\
b &\sim \mathsf{InverseGamma}(0.1,0.1) \\
c &\sim \distUnif(0.1,10) \\
y_i|a,b,c &\sim \mathsf{Weibull}(a,b,c).
\]
where the $3$-parameter Weibull density is
\[
p(y|a,b,c) = \ind\{y>a\} \frac{c}{b}\left(\frac{y-a}{b}\right)^{c-1}\exp\left[-\left(\frac{y-a}{b}\right)^c \right].
\]

\paragraph{XY Model}
This is a classical $2$-dimensional lattice model of statistical mechanics. 
We focus on the 
nearest neighbor case with periodic boundaries, using a constant interaction 
$J=2$ and no external field. Thus, the potential function can be written as
\[
\forall x\in\reals^{n^2}:\ V(x) := -J\sum_{(i,j)\in E} \cos(x_i-x_j),
\]
where $E$ is the edge set of the lattice. In our simulations we use a square 
lattice with sides of length $n=8$. As reference, we use the uniform 
distribution on $[-\pi,\pi)^{n^2}$ (i.e., the microcanonical ensemble).
This model is interesting because it induces a 
phase transition in the path of distributions $\beta\mapsto\pibeta$
\citep[see e.g.][]{hasenbusch2005xymodel}.

\subsection{Experimentally choosing hyperparameters}\label{app:hyperparams}

Let $\mcM$ be the collection of models, and furthermore, let $\mcS$ be 
a finite set of pseudo-random number generator (PRNG) seeds. 
To select a \emph{good} 
default configuration, we first define a fixed sample quality 
$(\alpha=0.95,\delta=0.5)$ for \cref{alg:NRST_par}. Then, for each model 
$m\in\mcM$ we find the highest cost across seeds produced by every mix of 
hyper-parameters
\[\label{eq:def_worst_case_cost}
\text{WorstCost}_m(c', \gamma,\bkappa) := \max_{s\in\mcS} \text{Cost}(\text{model} = m, \text{seed} = s; c', \gamma,\bkappa),
\]
where the right-hand side corresponds to a cost measure of running 
\cref{alg:NRST_par} with the specified settings and sample quality controlled by
the pair $(\alpha,\delta)$. Also, let
\[
\forall m\in\mcM:\ (c'_m,\gamma_m,\bkappa_m) = \arg\min_{c',\gamma,\bkappa} \text{WorstCost}_m(c', \gamma,\bkappa)
\]
be the model-specific worst-case-optimal configuration. Then, we choose as 
default hyper-parameters the triplet that minimizes the highest cost increase 
across models relative to their own optimal configuration
\[
(c'_\star,\gamma_\star,\bkappa_\star) = \arg\min_{c',\gamma,\bkappa}\max_{m\in\mcM} \text{WorstCost}_m(c', \gamma,\bkappa) - \text{WorstCost}_m(c'_m,\gamma_m,\bkappa_m). 
\]

A visualization of the behavior of each model using every combination of the 
hyper-parameters is available in \cref{fig:hyperparams_all} at the end of this 
section. In the following, we give a high-level description of the findings.

\paragraph{Affinity slope function}

\begin{figure}[t]
\centering
\includegraphics[width=\plotwidth]{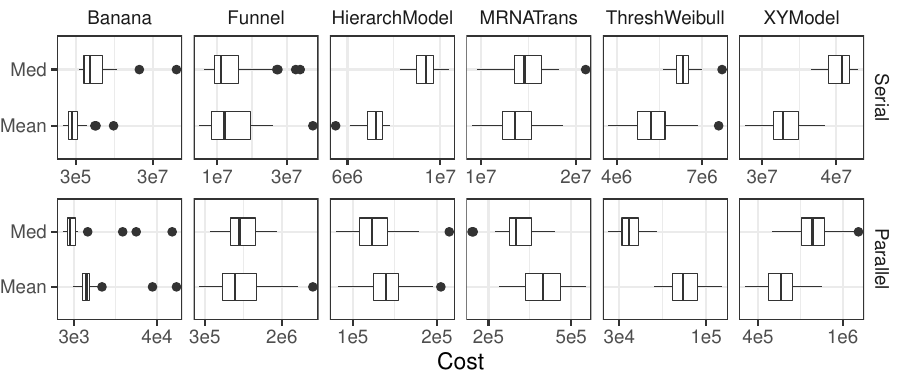}
\caption{Cost of running \cref{alg:NRST_par} versus the affinity tuning strategy, 
for each of $6$ selected models. Rows correspond to the two distinct elapsed-time 
metrics defined in \cref{sec:eval_ST_adaptation}. 
Each experiment is replicated $30$ times. 
The number of tours is determined in a quality-consistent approach 
using \cref{eq:def_Kmin} with  $\alpha=0.95$, $\delta=0.5$, and $\TEest$ 
estimated in a preliminary run.}
\label{fig:hyperparams_fun}
\end{figure}

\cref{fig:hyperparams_fun} shows the result of setting the slope function to
either of the two possible options---mean or median---while the other parameters
are held fixed at their default values. The experiments is run on $6$ different 
models and replicated for $30$ seeds. Additionally, the results are 
evaluated using the two varieties of elapsed-time discussed in 
\cref{sec:eval_ST_adaptation}: serial (equivalent to total evaluations of $V$) and
fully parallel (considering only the maximum number of evaluations in a tour). 

The most relevant aspect of \cref{fig:hyperparams_fun} is the reversal
of preferences depending on the cost measure used. Indeed, under the serial run time lens, the box-plots for the mean strategy are in most cases well below
the ones for the median. The worst-case cost reductions are in the order of 
$30\%$, except for the Banana model where it is orders of magnitude lower.
On the other hand, when the experiments are evaluated in terms of the running time
of a fully parallelized setting, the median strategy generally takes the 
lead. The largest reduction happens in the ``ThresholdWeibull'' model, where the 
median approach has one third of the parallel cost of the mean.

In order to keep the complexity of the rest of the paper manageable, in the
following we focus exclusively on the serial cost measure. 
Since this is an upper bound on the parallelized elapsed-time (for any number
of available workers), minimizing serial time still helps in controlling 
parallel cost. And since the mean-energy strategy performs better in the serial 
cost metric, we set it as default for the rest of this section.

\paragraph{Autocorrelation limit}

\begin{figure}[t]
\centering
\includegraphics[width=\plotwidth]{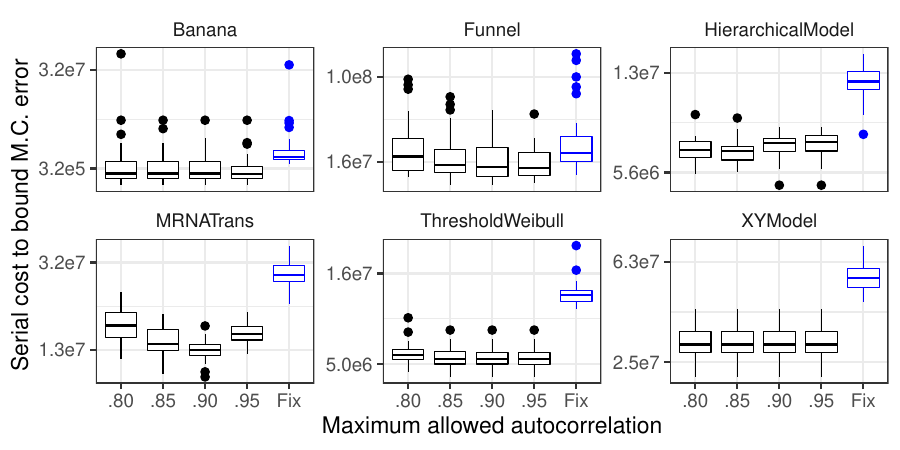}
\caption{Serial cost (as defined in \cref{sec:eval_ST_adaptation}) versus the
maximum allowed correlation parameter $\bkappa$, for each of $6$ selected models.
Each experiment is replicated $30$ times. The number of tours is
determined in a quality-consistent approach using \cref{eq:def_Kmin} with 
$\alpha=0.95$, $\delta=0.5$, and $\TEest$ estimated in a preliminary run.
``\textsf{Fix}'' means that exploration steps were not tuned but simply fixed to 
$3$ across all levels.}
\label{fig:hyperparams_cor}
\end{figure}

\cref{fig:hyperparams_cor} shows the summary of the experiments where $\bkappa$ is
varied, while leaving $\gamma=2$ fixed and using the mean strategy. We also show
a control case (``\textsf{Fix}'') where exploration steps are set to $3$ in all
levels regardless of autocorrelation. We chose this value because it is close to 
the average number exploration step obtained via autocorrelation tuning. 
In most cases, fixing is significantly more costly (around $2$ times) than tuning 
the number of exploration steps. The reason is that, in general, most 
intermediate distributions $\pibeta$ are easy to explore, so that even $1$ 
exploration step is sometimes enough to achieve a low-correlated sample $V(x)$.
Only few intermediate distributions are truly complex and thus require many more
than $3$ steps to achieve mixing in $V(x)$. Autocorrelation-based tuning can exploit this, concentrating efforts where they are most necessary.

The second thing to notice in \cref{fig:hyperparams_cor} is that varying $\bkappa$
does not generally produce significantly different cost distributions. The
paradigmatic example is the ``XYModel'', where using $1$ exploration step already gives auto-correlations lower than $0.8$. This insensitivity to $\bkappa$ means 
that the algorithm is robust to this choice.

\paragraph{Grid size correction}

\begin{figure}[t]
\centering
\includegraphics[width=\plotwidth]{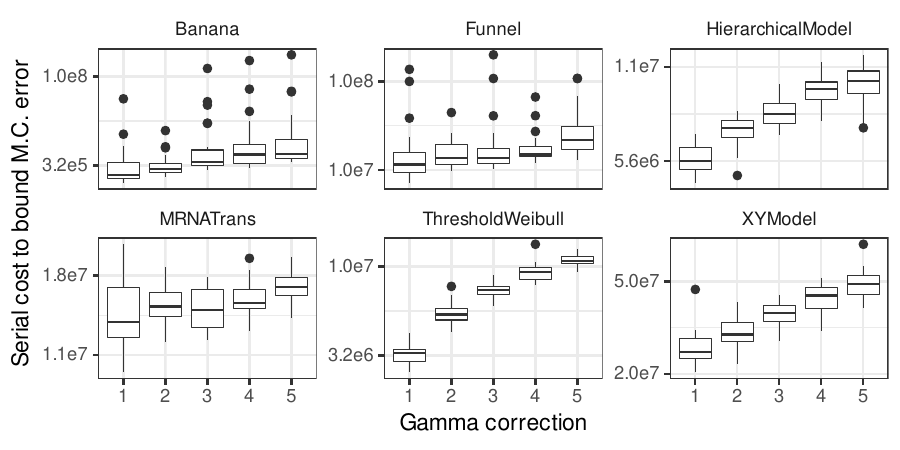}
\caption{Serial cost (as defined in \cref{sec:eval_ST_adaptation}) versus the
correction parameter $\gamma$ in \cref{sec:tune_N}, for each of $6$
selected models. Each experiment is replicated $30$ times. The number of tours is
determined in a quality-consistent approach using \cref{eq:def_Kmin} with 
$\alpha=0.95$, $\delta=0.5$, and $\TEest$ estimated in a preliminary run.}
\label{fig:hyperparams_gam}
\end{figure}

In \cref{sec:adaptation}, we derived a formula for the grid size $N$ that 
minimizes the expected number of steps taken in a run where the number of tours is
determined via $\Kmin(\alpha,\delta,\TE)$ (\cref{eq:def_Kmin}). This measure 
should be a good proxy of the serial running time when the exploration effort at 
each level is roughly the same.

\cref{fig:hyperparams_gam} shows the serial cost for different values of 
$\gamma$, while fixing $\bkappa=0.95$ and using the mean strategy.
We can see that the median cost does decreases towards $\gamma=1$, which supports
the derivations from \cref{sec:adaptation}. However, the worst-case cost does not
follow this pattern in at least half the models. The most likely reason is that
the quality of the tuning process for the grid and affinities deteriorates as $N$
decreases. Indeed, the interpolation used to set grid points and the estimators
used in setting affinities worsen as the grid becomes coarser. Thus, though
adaptation still works in most cases at $\gamma=1$, it becomes brittle and 
susceptible to yielding high-cost configurations. For this reason, we set 
$\gamma=2$ as the default value.

\begin{figure}[t]
\centering
\includegraphics[width=\plotwidth]{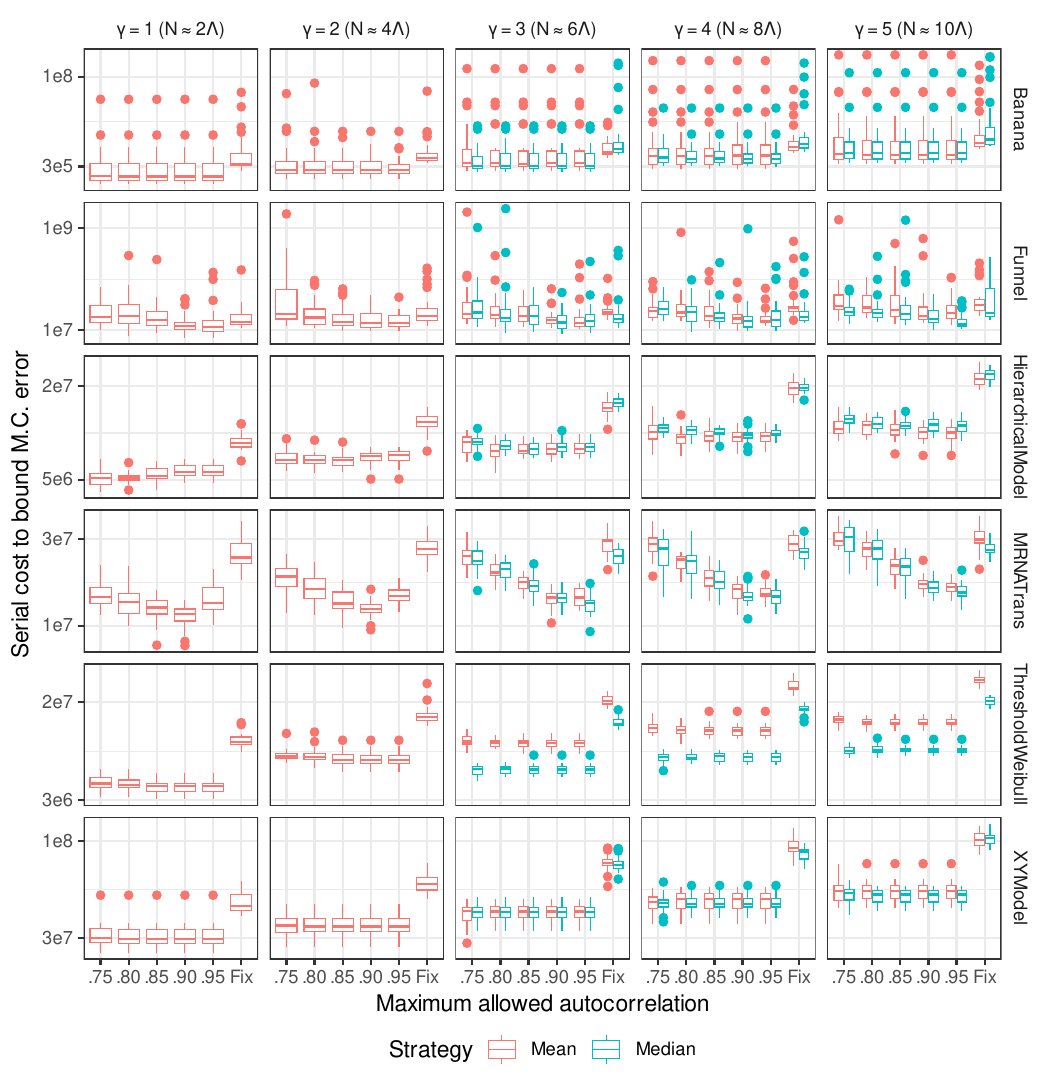}
\caption{Serial cost (as defined in \cref{sec:eval_ST_adaptation}) by affinity tuning strategy,
maximum allowed correlation parameter $\bkappa$, and grid size parameter $\gamma$,
for each of $6$ selected models. Each combination is replicated $30$ times.
The number of tours is determined in a quality-consistent approach using \cref{eq:def_Kmin} with 
$\alpha=0.95$, $\delta=0.5$, and $\TEest$ estimated in a preliminary run.
``\textsf{Fix}'' means that exploration steps were not tuned but simply fixed to 
$3$ across all levels.
}
\label{fig:hyperparams_all}
\end{figure}

\subsection{Tuning of competing ST algorithms}\label{app:tuning_details}

In order to quantify the effect of our tuning strategy, for the experiments in
\cref{sec:eval_ST_adaptation} we apply to all samplers both the approach outlined in 
\cref{sec:adaptation} and the tuning schemes recommended by their own authors. 

In particular, SH16 uses an uniformly-spaced grid, and estimates $\log(\mcZ(\beta))$
on-the-fly---i.e., running simulated tempering sequentially with arbitrary 
affinities and adjusting them at each step by approximating the thermodynamic 
identity using the trace of $V(x_n)$---as described in \citep{sakai2016learning}.
FBDR also employs a uniformly-spaced grid, but it uses a simpler approach for 
setting the affinities described in \citet{park2007choosing}. The idea is to 
collect $V$ samples at each level by running the explorers independently. The
log normalizing constants are then calculated using the thermodynamic identity.
Finally, \citet{geyer1995annealing} propose a more complex tuning method that 
iteratively adapts both the grid and the affinities. The former is adjusted to
promote equi-rejection, targeting an average rejection rate between $20\%$
and $40\%$. The log-normalizing constants, in turn, are adjusted using a 
stochastic approximation method. Sadly, this tuning strategy contains 
model-sensitive hyper-parameters and is not described in enough detail in
\citet{geyer1995annealing}. Hence, it was impossible for us to replicate 
robustly, and we opted for tuning GT95 using the parameters obtained by FBDR.

\section{Proofs}\label{app:proofs}

We begin by stating some basic assumptions, summarizing the notation used so far, and 
also introducing additional concepts necessary for this section. We assume that 
the base state space $\statespace$ is Polish, and let $\sigalg$ be its associated
Borel sigma algebra such that $\mespace$ is a measurable space. As defined in
\cref{sec:methodology}, for all $\beta\in[0,1]$ we let $\EEi{\beta}$ be the
expectation operator associated to the tempered distribution $\pibeta$ on $\mespace$.

The fact that $\statespace$ is Polish implies that $\prodss$---the product space 
for NRST defined in \cref{eq:def_prodss}---is also
Polish. Let $\prodsigalg$ be its Borel sigma algebra---which is identical to the
product sigma algebra---so that $\prodmespace$ is a measurable space. 
Recall that the lifted distribution $\lifted$ is defined on $\prodmespace$, and that
we use $\lifted(f)$ to denote the expectation under $\lifted$ of a measurable function
$f:\prodss\to\reals$.

Now, the kernel $\knrst$, together with an initial distribution $\mu$ on 
$\prodmespace$, define the NRST
Markov chain $\tdxpath=\{\tdx_n\}_{n=0}^\infty$ on the path space $\pathmespace$.
Let $\Pr_{\mu}$ denote the law of $\tdxpath$ and $\EE_{\mu}$ its corresponding 
expectation operator. In particular, as we did in \cref{sec:methodology}, we write
$\Pr_\nu$ for the law of the chain initialized from the renewal measure $\nu$ 
(\cref{eq:def_nu_atom_measure}) associated to the NRST atom $\atom$. When 
$\mu=\delta_{\tdx}$ for some $\tdx\in\prodss$, we simply 
write $\Pr_{\tdx}$ and $\EE_{\tdx}$.

We use $\{\mcF_n\}_{n=0}^\infty$ to denote the natural filtration of the Markov 
chain. For $j\in\nats$, let $\theta_j:\prodss^\infty\to\prodss^\infty$ denote the shift
operator defined by $\theta_j(\{\tdx_n\}_{n=0}^\infty)=\{\tdx_{j+n}\}_{n=0}^\infty$. 
For any $S\in\prodsigalg$, define the stopping times
\[\label{eq:def_first_hit_return_times}
T_{S} &:= \inf\{n\geq0: \tdx_n\in S\} \\
\sigma_{S} &:= \inf\{n\geq1: \tdx_n\in S\},
\]
known, respectively, as the first-hitting time and the first-return time of the 
set $S$. Furthermore, recursively define the sequences of
return times $\{T_S^{(k)}\}_{k\in\nats}$ and hitting times 
$\{\sigma_S^{(k)}\}_{k\in\nats}$ by setting $T_S^{(1)}=T_S$, 
$\sigma_S^{(1)}=\sigma_S$, and for all $k\in\nats$,
\[\label{eq:def_further_hit_return_times}
T_{S}^{(k+1)} &:= \inf\{n\geq T_S^{(k)}: \tdx_n\in S\} \\
\sigma_{S}^{(k+1)} &:= \inf\{n\geq\sigma_S^{(k)}: \tdx_n\in S\}.
\]
In particular, when $S=\atom$, we get that $T_\atom^{(k)}=T_k$ as defined in 
\cref{sec:NRST}.

Finally, in the following we assume that for all $i\in\indset$, the exploration 
kernel $K_i$ is $\pi_i$-irreducible in addition to $\pi_i$-invariant.

\subsection{Proofs for \cref{sec:NRST}}

\begin{proposition}\label{prop:lifted_invariance}
The Markov kernel $\knrst$ is $\lifted$-invariant.
\end{proposition}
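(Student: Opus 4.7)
The plan is to decompose the NRST kernel as $\knrst = \ktem \kexp$, where $\ktem$ is the tempering kernel and $\kexp$ is the exploration kernel, and show that each is $\lifted$-invariant separately; invariance of the composition then follows.

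First I would handle the exploration step. Since $\kexp$ leaves $(i,\eps)$ fixed and updates only $x$, it suffices to check that the conditional distribution of $x$ given $(i,\eps)$ under $\lifted$---which is exactly $\pi_i$---is preserved. For $i \geq 1$ this is immediate from $\pi_i$-invariance of $K_i$; for $i=0$ it is trivial because $x' \sim \pi_0$ is drawn afresh. The marginal on $(i,\eps)$ is untouched, so $\lifted \kexp = \lifted$.

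The main work is invariance of the tempering kernel $\ktem$. The key algebraic ingredient is the Metropolis--Hastings identity
\[
p_i\pi_i(x)\,\acc{i,i+1}(x) = p_{i+1}\pi_{i+1}(x)\,\acc{i+1,i}(x),
\]
which follows directly from \cref{eq:def_comm_step_acc_prob}. I would then verify invariance by a case analysis, tracking the inflow and outflow of mass at an arbitrary point $(x,i,\eps)$. For an interior state $(x,i,+1)$ with $1 \leq i \leq N-1$, mass arrives from $(x,i-1,+1)$ with probability $\acc{i-1,i}(x)$ (acceptance) and from $(x,i,-1)$ with probability $\rej{i,i-1}(x)$ (rejection of the downward proposal with momentum flip). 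Multiplying by $\lifted$ at the source states and applying the MH identity yields $\lifted(x,i,+1)$. The analogous computation handles $(x,i,-1)$ with $1 \leq i \leq N-1$.

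The remaining cases are the boundary states. At $(x,0,+1)$, the only inflow is from $(x,0,-1)$, where the proposal $i' = -1$ triggers the automatic boundary rejection with momentum flip (probability $1$). Since $\lifted$ is symmetric in $\eps$, we have $\lifted(x,0,-1) = \lifted(x,0,+1)$, giving invariance at this state. The cases $(x,0,-1)$, $(x,N,+1)$, and $(x,N,-1)$ are handled analogously, combining the automatic boundary rejection with standard rejection or acceptance moves from the adjacent interior level. The main obstacle is keeping the bookkeeping of the $\eps$-flipping cases straight; the trick that simplifies it is exploiting the $\eps$-symmetry of $\lifted$ so that the rejection-with-flip contributions combine neatly with the MH identity. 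Once $\lifted \ktem = \lifted$ is established, invariance of $\knrst = \ktem \kexp$ follows from composing the two invariances.
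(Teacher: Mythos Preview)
Your proof is correct, but the route you take for the tempering kernel differs from the paper's. You verify $\lifted\ktem=\lifted$ by directly checking the global balance equations: a case analysis over interior and boundary states, tracking inflow into each $(x,i,\eps)$ and invoking the detailed-balance identity $p_i\pi_i(x)\acc{i,i+1}(x)=p_{i+1}\pi_{i+1}(x)\acc{i+1,i}(x)$ together with the $\eps$-symmetry of $\lifted$. This works, and your sketch of the boundary cases is accurate.

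The paper instead avoids the case analysis entirely by decomposing the tempering step into a composition of \emph{two} involutive Metropolis moves: first the deterministic proposal $(x,i,\eps)\mapsto(x,i+\eps,-\eps)$, accepted with the usual Metropolis ratio; second the deterministic flip $\eps\mapsto-\eps$, which is always accepted because $\lifted$ is symmetric in $\eps$. Each sub-step is $\lifted$-reversible by the standard theory of Metropolis kernels with involutive proposals, hence $\lifted$-invariant, and the composition inherits invariance. This approach is shorter and more conceptual---it exposes the lifted structure as two coupled involutions and yields reversibility of each sub-step as a byproduct---whereas your direct balance check is more elementary and self-contained but requires the boundary bookkeeping you mention.
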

\begin{proof}
Since the composition of $\lifted$-invariant kernels is $\lifted$-invariant, it 
suffices to show that both $\kexp$ and $\ktem$ satisfy this property. For the 
former, note that for all $A\in\sigalg$, $i\in\indset$ and $\eps \in \epsset$
\[
&[\lifted\kexp](A,i,\epsilon) \\
&=\frac{1}{2}\sum_{\epsilon'\in\{-1,+1\}} \sum_{j=0}^N p_j \int \pi_j(\dee x) K_j(x,A) \ind\{j=i\}\ind\{\epsilon'=\epsilon\} \\
&=\frac{1}{2}p_i\int  \pi_i(\dee x) K_i(x,A) \\
&=\lifted(A,i,\epsilon),
\]
due to $K_i$ being $\pi_i$-invariant.

The tempering step can be understood as the composition of two Metropolis 
steps. Since Metropolis kernels are $\tilde{\pi}$-reversible, they are
$\tilde{\pi}$-invariant. This shows that the tempering kernel is itself
invariant, since it is the composition of invariant kernels.
It remains to derive their acceptance probabilities. 
In the first step we deterministically propose
\[
(\xprop,\iprop, \epsprop) = (x,i+\epsilon,-\epsilon).
\]
Note that this move is an \emph{involution} \citep[][\S 2]{tierney1998note}; i.e., 
it is its own inverse, since applying it twice gives
\[
(i,\epsilon) \mapsto (i+\epsilon,-\epsilon) \mapsto (i+\epsilon-\epsilon,-(-\epsilon)) = (i,\epsilon).
\]
Then, the Metropolis ratio for the first step becomes
\[
\frac{\lifted(x,i+\epsilon,-\epsilon)}{\lifted(x,i,\epsilon)} = \frac{\pi_{i+\epsilon}(x)}{\pi_i(x)} \frac{p_{i+\epsilon}}{p_i}.
\]
The second move aims to correct the reversal of the velocity that occurs
when the tempering step is accepted, which would prevent the sampler from having 
momentum. To achieve this, we fix $x$ and $i$, and use the deterministic involutive 
proposal $\epsprop = -\epsilon$. By definition of $\lifted$, this proposal is always 
accepted, since
\[
\frac{\lifted(x,i,-\epsilon)}{\lifted(x,i,\epsilon)} = \frac{\frac{1}{2}\pi_i(x)p_i}{\frac{1}{2}\pi_i(x)p_i} = 1.
\]
It follows that the probability of jointly accepting both sub-steps---and thus the 
tempering proposal---is precisely the quantity in \cref{eq:def_comm_step_acc_prob}.
\end{proof}

\begin{lemma}\label{lemma:atom_recurrent_accessible}
\hfill
\benum
\item $\atom$ is recurrent.
\item $\atom$ is accessible; i.e., $\Pr_{\tdx_0}(\sigma_\atom<\infty)>0$ for $\lifted$-a.e.\ $\tdx_0$.
\eenum
\end{lemma}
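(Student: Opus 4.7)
I split the statement into its two parts and treat them independently; both rest on the $\lifted$-invariance of $\knrst$ (\cref{prop:lifted_invariance}) and on the observation that every tempering acceptance probability $\acc{i,i'}(x)$ in \cref{eq:def_comm_step_acc_prob} equals $\exp(-a)$ for a finite $a$, hence is \emph{pointwise} strictly positive whenever $V(x)<\infty$ (which holds for $\lifted$-a.e.\ $\tdx$).

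For part (ii), accessibility, my plan is to construct, for every $\tdx_0=(x_0,i_0,\eps_0)\in\prodss$ with $V(x_0)<\infty$, an explicit finite-length trajectory into $\atom$ whose probability is strictly positive. When $\eps_0=-1$, I take the trajectory that accepts the tempering proposal at each of the next $i_0$ consecutive NRST steps, driving $(i,\eps)$ from $(i_0,-1)$ monotonically down to $(0,-1)\in\atom$; its probability is the expectation (over the intermediate exploration kernels) of a product of pointwise-positive acceptance probabilities, and so is strictly positive. When $\eps_0=+1$, I first accept the $N-i_0$ tempering proposals needed to reach $(N,+1)$, then the next tempering step deterministically bounces to $(N,-1)$, after which I apply the previous construction to descend to $\atom$; the same positivity reasoning applies.

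For part (i), recurrence, my plan is a short measure-theoretic argument. Since $\lifted(\atom)=p_0/2>0$ and $\lifted$ is an invariant probability measure, the Markov-chain form of Poincar\'e recurrence applied to the set $\atom$ yields $\Pr_{\tdx}(\sigma_\atom<\infty)=1$ for $\lifted$-a.e.\ $\tdx\in\atom$. To upgrade this to every $\tdx\in\atom$, I would invoke the atomic structure: by definition the one-step transition $\knrst(\tdx,\cdot)=\nu$ is the same for every $\tdx\in\atom$, so the map $\tdx\mapsto\Pr_\tdx(\sigma_\atom<\infty)$ is constant on $\atom$, and a constant function that equals $1$ almost everywhere equals $1$ everywhere.

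\emph{Main obstacle.} No single computation is hard. The most delicate step is the interplay between the ``a.e.'' conclusion coming out of Poincar\'e recurrence and the ``everywhere on $\atom$'' conclusion that is actually required; the atomic property of $\atom$ is precisely the bridge. A smaller concern is making the path argument for accessibility robust to the randomness of the exploration kernels traversed along the way, but this is handled for free by the pointwise (rather than merely average) positivity of the tempering acceptance probabilities, which makes the relevant product positive on every sample path and hence in expectation.
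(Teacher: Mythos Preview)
Your proposal is correct and close to the paper's proof, with two differences worth noting. For part (i), the paper simply cites Proposition 6.2.8(i) in \citet{douc2018markov} (an atom with positive invariant mass is recurrent), whereas you unpack that citation into a Poincar\'e-recurrence argument plus the observation that $\tdx\mapsto\Pr_\tdx(\sigma_\atom<\infty)$ is constant on $\atom$ by atomicity; your route is more self-contained and is essentially the proof of the cited proposition. For part (ii), the $\eps_0=-1$ case matches the paper exactly. For $\eps_0=+1$, the paper instead observes that the boundary at $i=N$ forces $\eps$ to flip to $-1$ in finitely many steps almost surely and then applies the strong Markov property to reduce to the $\eps_0=-1$ case; your explicit ascent--bounce--descent path is an equally valid, slightly more constructive alternative. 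One small omission: your path construction does not cover $\tdx_0\in\atom$ (there the ``descent'' has length zero and so does not witness $\sigma_\atom<\infty$); the paper handles this by one application of the Markov property to step outside $\atom$, while in your write-up it is already covered by part (i).
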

\begin{proof}
\hfill\newline
\noindent 1. Since $\lifted(\atom)=p_0/2>0$, $\atom$ is recurrent by Prop. 6.2.8(i) in
\citet{douc2018markov}.

\noindent 2. Let $\tdx_0=(x_0,i_0,\eps_0)\in\prodss$. Note that if 
$\tdx_0\in\atom$, we know that $\Pr_{\tdx_0}(\tdx_1\notin\atom)=1$, so by the Markov property
\[
\forall\tdx_0\in\atom:\ \Pr_{\tdx_0}(\sigma_\atom<\infty) = \EE_{\tdx_0}[\Pr_{\tdx_1}(\sigma_\atom<\infty)].
\]
The right-hand side is positive if accessibility holds for $\tdx_0\notin\atom$. 
Let us focus on this case. In particular, when $\eps_0=-1$,
\[
\Pr_{\tdx_0}(\sigma_\atom<\infty) &= \Pr_{\tdx_0}\left(\bigcup_{n\in\nats} \{\tdx_n \in \atom\} \right) \geq \Pr_{\tdx_0}(\tdx_{i_0} \in \atom).
\]
The bound on the right-hand side is the probability of a perfect
return to the atom in exactly $i_0$ steps, which is positive because
\[
\Pr_{\tdx_0}(\tdx_{i_0} \in \atom) &= \Pr_{\tdx_0}(i_1=i_0-1,i_2=i_0-2,\dots,i_{i_0}=0) \\
&= \acc{i_0,i_0-1}(x_0) \prod_{j=1}^{i_0}  \int K_{i_0-j}(x_{j-1},\dee x_j) \acc{i_0-j,i_0-j-1}(x_j) \\
&>0.
\]
The inequality follows from 1) the fact that for all $i\in\indsetone$, 
$\acc{i,i-1}(x)>0$ $\pi$-a.s.\---because $V(x)<\infty$ $\pi$-a.s.\---and 2) 
the assumption that the exploration kernels $\{K_i\}_{i=1}^N$ are 
$\pi_i$-invariant and $\pi_i$-irreducible. 

Finally, to tackle the remaining case $\eps_0=+1$, note that the set 
\[
\mcN:=\statespace\times\indset\times\{-1\}
\]
has $\Pr_{\tdx_0}(T_{\mcN}<\infty)=1$ due to the boundary at $i=N$. Hence
\[
\Pr_{\tdx_0}(\sigma_\atom<\infty) = \EE_{\tdx_0}[\Pr_{\tdx_{T_{\mcN}}}(\sigma_\atom<\infty)] > 0,
\]
because $\tdx_{T_{\mcN}}=(x_{T_{\mcN}},i_{T_{\mcN}},-1)$ and thus the bound 
follows from the case $\eps_0=-1$.
\end{proof}

\cref{lemma:kth_visit_return,lemma:translation_Douc_regeneration} below have the
sole purpose of translating results between two conventions: from 
the atomic Markov chain theory---where tours are started at the atom---to the 
regenerative simulation approach---where tours begin with a regeneration. With 
these tools, we can then simply leverage existing results for atomic Markov chains
in \cref{lemma:tour_sums_iid} and \cref{thm:LLN}.

\begin{lemma}\label{lemma:kth_visit_return}
For all $\mcS\in\prodsigalg$, $k\in\nats$, and $\omega\in\pathspace$: $\sigma_\mcS^{(k)}(\omega) = 1 + [T_\mcS^{(k)}\circ\theta_1](\omega)$.
\end{lemma}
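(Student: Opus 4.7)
The proof is a purely index-level, path-by-path bookkeeping argument, with no probability involved, so I would proceed by a straightforward induction on $k$.

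For the base case $k=1$, I would unwind both sides directly from the definitions in \cref{eq:def_first_hit_return_times}. On the one hand, $\sigma_\mcS(\omega)=\inf\{n\geq 1:\tdx_n(\omega)\in\mcS\}$. On the other, $[T_\mcS\circ\theta_1](\omega)=T_\mcS(\theta_1\omega)=\inf\{n\geq 0:\tdx_n(\theta_1\omega)\in\mcS\}=\inf\{n\geq 0:\tdx_{n+1}(\omega)\in\mcS\}$, where the last equality uses the definition of $\theta_1$. The substitution $m=n+1$ then turns this infimum into $\sigma_\mcS(\omega)-1$, giving $\sigma_\mcS(\omega)=1+[T_\mcS\circ\theta_1](\omega)$.

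For the inductive step, assume $\sigma_\mcS^{(k)}(\omega)=1+T_\mcS^{(k)}(\theta_1\omega)$. Applying the recursion \cref{eq:def_further_hit_return_times} (read with strict inequality, which is clearly the intended meaning, since otherwise the recursion would be stationary) I would write
\[
T_\mcS^{(k+1)}(\theta_1\omega)=\inf\{n>T_\mcS^{(k)}(\theta_1\omega):\tdx_{n+1}(\omega)\in\mcS\}.
\]
Reindexing $m=n+1$ and using the inductive hypothesis $1+T_\mcS^{(k)}(\theta_1\omega)=\sigma_\mcS^{(k)}(\omega)$, this becomes
\[
\inf\{m>\sigma_\mcS^{(k)}(\omega):\tdx_m(\omega)\in\mcS\}-1=\sigma_\mcS^{(k+1)}(\omega)-1,
\]
closing the induction.

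The argument is essentially mechanical, so I do not anticipate a real obstacle; the only point requiring care is the consistent translation between the "start at $n\geq 0$" convention used for $T_\mcS$ and the "start at $n\geq 1$" convention used for $\sigma_\mcS$, which the time shift $\theta_1$ absorbs precisely by contributing a $+1$. I would also explicitly note the standing interpretation of the recursion in \cref{eq:def_further_hit_return_times} as a strict inequality, since this identity depends on that reading.
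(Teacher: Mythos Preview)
Your proof is correct and follows essentially the same induction argument as the paper's own proof, including the same base case computation and the same reindexing in the inductive step. Your explicit remark that the recursion in \cref{eq:def_further_hit_return_times} must be read with strict inequality is apt: the paper's definition is stated with $\geq$ but its proof silently uses $>$, so you have in fact been more careful on this point.
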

\begin{proof}
We make the dependence on $\omega$ implicit to avoid cluttering the notation. 
Let us proceed by induction. First, note that
\[
\sigma_\mcS &= \inf\{n>0: \tdx_n\in\mcS\} = 1+\inf\{n\geq0: \tdx_{n+1}\in\mcS\} = 1+T_1\circ\theta_1.
\]
Now, assume $\sigma_\mcS^{(k-1)} = 1+T_\mcS^{(k-1)}\circ\theta_1$. Then
\[
\sigma_\mcS^{(k)} &= \inf\{n>\sigma_\mcS^{(k-1)}: \tdx_n\in\mcS\} \\
&= \inf\{n>1+T_\mcS^{(k-1)}\circ\theta_1: \tdx_n\in\mcS\} \\
&= 1+\inf\{n>T_\mcS^{(k-1)}\circ\theta_1: \tdx_n\circ\theta_1\in\mcS\} \\
&= 1+\inf\{n>T_\mcS^{(k-1)}: \tdx_{n}\in\mcS\}\circ\theta_1 \\
&= 1+T_\mcS^{(k)}\circ\theta_1,
\]
as needed.
\end{proof}

\begin{lemma}\label{lemma:translation_Douc_regeneration}
Let $f:\statespace\to\reals$ be a $\prodsigalg$-measurable function. Define
\[\label{eq:def_index1_toursums}
z_1(f) := \sum_{n=1}^{\sigma_\atom} f(\tdx_n), \qquad \forall k\geq 2:z_k(f) &:= \sum_{n=\sigma_\atom^{(k-1)}+1}^{\sigma_\atom^{(k)}} f(\tdx_n).
\]
Then,
\benum
\item for all $k\in\nats$ and $\omega\in\pathspace$, $[z_k(f)](\omega) = [s_k(f)\circ\theta_1](\omega)$, and
\item under $\Pr_{\tdx}$ for $\tdx\in\atom$, the real-valued stochastic process 
$\{z_k(f)\}_{k\in\nats}$ has the same finite dimensional distributions as 
$\{s_k(f)\}_{k\in\nats}$ under $\Pr_\nu$,
\eenum
where $\{s_k(f)\}_{k\in\nats}$ are the tour sums of $f$ (see \cref{sec:NRST}).
\end{lemma}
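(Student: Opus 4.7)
The plan is to derive part 1 by a direct re-indexing of the sums using \cref{lemma:kth_visit_return}, and then deduce part 2 by identifying the law of the one-step-shifted chain started from any $\tdx\in\atom$ with $\Pr_\nu$.

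For part 1, fix $\omega\in\pathspace$, set $\omega':=\theta_1(\omega)$ so that $\tdx_n(\omega')=\tdx_{n+1}(\omega)$, and apply \cref{lemma:kth_visit_return} with $\mcS=\atom$. Noting that $T_\atom^{(k)}=T_k$ (the $k$-th visit to $\atom$ as defined in \cref{sec:NRST}), this gives $\sigma_\atom^{(k)}(\omega)=1+T_k(\omega')$ for every $k\in\nats$. For $k=1$, recalling $T_0=-1$,
\[
[s_1(f)\circ\theta_1](\omega)=\sum_{n=0}^{T_1(\omega')} f(\tdx_n(\omega'))=\sum_{m=1}^{1+T_1(\omega')} f(\tdx_m(\omega))=\sum_{m=1}^{\sigma_\atom(\omega)} f(\tdx_m(\omega)) = z_1(f)(\omega),
\]
via the substitution $m=n+1$. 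For $k\geq 2$, the same substitution yields
\[
[s_k(f)\circ\theta_1](\omega)=\sum_{n=T_{k-1}(\omega')+1}^{T_k(\omega')} f(\tdx_n(\omega')) = \sum_{m=\sigma_\atom^{(k-1)}(\omega)+1}^{\sigma_\atom^{(k)}(\omega)} f(\tdx_m(\omega)) = z_k(f)(\omega).
\]

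For part 2, I would use the concrete structure of the NRST update to show that, for every $\tdx\in\atom$, the law of $\tdx_1$ under $\Pr_{\tdx}$ coincides with $\nu$. Indeed, starting from $\tdx=(x,0,-1)\in\atom$, the tempering step proposes $\iprop=-1$, which triggers the ``bounce below'' branch in \cref{alg:NRST_step}, setting $(i,\eps)\gets(0,+1)$; the subsequent exploration step, being at level $0$, draws the new $x$ independently from $\pi_0$. Hence $\tdx_1\sim\nu$ deterministically in distribution, independently of the particular $\tdx\in\atom$. By the Markov property, the shifted process $\theta_1(\tdxpath)$ under $\Pr_{\tdx}$ has the same law as $\tdxpath$ under $\Pr_\nu$. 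Combining this with part 1 and the fact that each $z_k(f)$ (resp.\ $s_k(f)$) is a measurable functional of the path, any finite-dimensional distribution of $\{z_k(f)\}_{k\in\nats}$ under $\Pr_{\tdx}$ equals the corresponding finite-dimensional distribution of $\{s_k(f)\}_{k\in\nats}$ under $\Pr_\nu$.

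The only real pitfall is bookkeeping: one must handle the $k=1$ boundary case separately because $T_0=-1$ while $\sigma_\atom^{(0)}$ is not defined, and must check that the atom's one-step dynamics really are measure-$\nu$ regardless of the particular starting point in $\atom$ (which is where the lifting's ``bounce below'' behavior plus the independent reference draw are essential).
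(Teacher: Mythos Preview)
Your proof is correct and follows essentially the same approach as the paper: part 1 is an identical re-indexing argument via \cref{lemma:kth_visit_return}, and part 2 in both cases rests on the Markov property together with the fact that $\tdx_1\sim\nu$ under $\Pr_{\tdx}$ for any $\tdx\in\atom$. The only difference is cosmetic: you verify $\tdx_1\sim\nu$ explicitly by stepping through \cref{alg:NRST_step}, whereas the paper takes this as given from the definition of $\nu$ in \cref{eq:def_nu_atom_measure} and phrases the Markov-property step via conditioning on $\mcF_1$.
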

\begin{proof}
\hfill\newline
\noindent1. As in previous results, we make the dependence on $\omega$ implicit 
to avoid cluttering the notation. Using \cref{lemma:kth_visit_return}, we see that
\[
\sum_{n=1}^{\sigma_\atom} f(\tdx_n) = \sum_{n=1}^{1+T_1\circ\theta_1} f(\tdx_n) = \sum_{n=0}^{T_1\circ\theta_1} f(\tdx_n\circ\theta_1) = s_1(f)\circ\theta_1.
\]
Similarly, for $k\geq 2$,
\[
\sum_{n=\sigma_\atom^{(k-1)}+1}^{\sigma_\atom^{(k)}} f(\tdx_n) = \sum_{n=(1+T_{k-1}\circ\theta_1)+1}^{1+T_k\circ\theta_1} f(\tdx_n) = \sum_{n=T_{k-1}\circ\theta_1+1}^{T_k\circ\theta_1} f(\tdx_n\circ\theta_1) = s_k(f)\circ\theta_1.
\]

\noindent 2. It suffices to show that for all $\tdx\in\atom$, $K\in\nats$, and any 
collection of Borel sets $\{A_k\}_{k=1}^K$,
\[
\Pr_{\tdx}\left(\bigcap_{k=1}^K z_k(f) \in A_k \right) = \Pr_\nu\left(\bigcap_{k=1}^K s_k(f) \in A_k \right).
\]
Indeed, using part 1., we have by the Markov property
\[
\Pr_{\tdx}\left(\bigcap_{k=1}^K s_k(f)\circ\theta_1\in A_k \right) &= \EE_{\tdx}\left[\EE_{\tdx}\left[\left(\prod_{k=1}^K \ind\{s_k(f)\in A_k\} \right) \circ \theta_1 \middle| \mcF_1 \right] \right] \\
&=\EE_{\tdx}\left[\EE_{\tdx_1}\left[\prod_{k=1}^K \ind\{s_k(f)\in A_k\} \right] \right] \\
&=\Pr_\nu\left(\bigcap_{k=1}^K s_k(f) \in A_k \right).
\]
\end{proof}

\begin{lemma}\label{lemma:tour_sums_iid}
For any $\prodsigalg$-measurable function $f$, its tour sums 
$\{s_k(f)\}_{k\in\nats}$---as defined in \cref{sec:NRST}---are an \iid collection
under $\Pr_\nu$.
\end{lemma}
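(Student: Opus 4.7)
The plan is to reduce the statement to the standard atomic-chain decomposition result (e.g., the excursion decomposition in Douc et al.\ 2018, Section 6.2), and then use \cref{lemma:translation_Douc_regeneration} to translate between the two indexing conventions (tours starting \emph{from} the atom vs.\ tours starting from a regeneration).

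First, I would fix an arbitrary $\tdx\in\atom$ and work under $\Pr_{\tdx}$, where the successive return times $\sigma_\atom^{(k)}$ to $\atom$ are well-defined and a.s.\ finite by \cref{lemma:atom_recurrent_accessible}(1) (recurrence of $\atom$). Here the key observation is that $\atom$ is a genuine atom of the kernel $\knrst$: for every $\tdx'\in\atom$, the law of $\tdx_1$ under $\Pr_{\tdx'}$ is the fixed regeneration measure $\nu$ defined in \cref{eq:def_nu_atom_measure}, with no dependence on $\tdx'$. Combining this fact with the strong Markov property applied at each of the stopping times $\sigma_\atom^{(k)}$ yields the excursion decomposition: the blocks
\[
\bigl(\tdx_{\sigma_\atom^{(k-1)}+1},\dots,\tdx_{\sigma_\atom^{(k)}}\bigr),\qquad k\in\nats
\]
(with the convention $\sigma_\atom^{(0)}=0$) are i.i.d.\ under $\Pr_{\tdx}$. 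This is the standard result I would cite; it is immediate from the construction of the split chain, or equivalently from Prop./Thm.\ 6.2.5 in Douc et al.\ (2018). In particular, the random variables $\{z_k(f)\}_{k\in\nats}$ defined in \cref{eq:def_index1_toursums}, being measurable functions of these disjoint i.i.d.\ blocks, are themselves i.i.d.\ under $\Pr_{\tdx}$.

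Second, I would transfer this conclusion to $\{s_k(f)\}_{k\in\nats}$ under $\Pr_\nu$ using \cref{lemma:translation_Douc_regeneration}(2), which states precisely that the finite-dimensional distributions of $\{z_k(f)\}$ under $\Pr_{\tdx}$ (for any $\tdx\in\atom$) agree with those of $\{s_k(f)\}$ under $\Pr_\nu$. Since the i.i.d.\ property is a statement about finite-dimensional distributions, it is preserved by this equality, giving the desired conclusion.

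The only subtle point, and the step that deserves care, is making sure the $k=1$ tour and the $k\geq 2$ tours are defined consistently on both sides of the translation. Under $\Pr_\nu$, the first tour runs from $n=0$ (the regeneration) up to and including $T_1$ (the first visit back to $\atom$), while subsequent tours run from $T_{k-1}+1$ to $T_k$. Under $\Pr_{\tdx}$ with $\tdx\in\atom$, the $z$-tours run from $\sigma_\atom^{(k-1)}+1$ to $\sigma_\atom^{(k)}$ (with $\sigma_\atom^{(0)}=0$), so the first $z$-tour starts at index $1$ rather than $0$. This is exactly the mismatch resolved by the shift identity in \cref{lemma:kth_visit_return} and absorbed into \cref{lemma:translation_Douc_regeneration}, so no additional work is needed once those lemmas are in hand. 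The main conceptual obstacle is therefore not in any calculation, but in correctly invoking the atomic excursion decomposition and checking that the $\Pr_\nu$-indexing used throughout the paper matches the $\Pr_{\tdx}$-indexing of the standard reference after a one-step shift.
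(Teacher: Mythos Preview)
Your proposal is correct and follows essentially the same approach as the paper: invoke recurrence of $\atom$ (\cref{lemma:atom_recurrent_accessible}), cite the atomic excursion decomposition in Douc et al.\ (2018) to conclude that $\{z_k(f)\}_{k\in\nats}$ is i.i.d.\ under $\Pr_{\tdx}$ for $\tdx\in\atom$, and then transfer to $\{s_k(f)\}_{k\in\nats}$ under $\Pr_\nu$ via \cref{lemma:translation_Douc_regeneration}. The only cosmetic difference is that the paper cites Corollary~6.5.2 rather than a result in Section~6.2, and your discussion of the indexing shift is more explicit than the paper's, but the argument is the same.
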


\begin{proof}
Since $\atom$ is recurrent by \cref{lemma:atom_recurrent_accessible},
Corollary 6.5.2 in \citet{douc2018markov} shows that the random variables
$\{z_k(f)\}_{k\in\nats}$ defined in \cref{eq:def_index1_toursums}
are \iid under $\Pr_{\tdx}$ for all $\tdx\in\atom$. This means that for any 
$K$ and any collection of Borel sets $\{A_k\}_{k=1}^K$
\[\label{eq:zk_iid_cond}
\forall \tdx\in\atom:\ \Pr_{\tdx}\left(\bigcap_{k=1}^K z_k(f)\in A_k \right) = \prod_{k=1}^K \Pr_{\tdx}(z_1(f)\in A_k).
\]
The above, together with \cref{lemma:translation_Douc_regeneration}, imply that
\[
\Pr_\nu\left(\bigcap_{k=1}^K s_k(f) \in A_k \right)=\Pr_\nu(s_1(f)\in A_k).
\]
We conclude that $\{s_k(f)\}_{n\in\nats}$ is \iid under $\Pr_\nu$.
\end{proof}

\begin{proof}[\textbf{Proof of \cref{thm:LLN}}]

By \cref{lemma:atom_recurrent_accessible}, $\atom$ is recurrent and accessible.
Therefore, Theorem 6.4.2(iii) in \citet{douc2018markov} shows that $\lifted$ is
proportional to the finite measure $\lambda_\atom$, defined by its action on
measurable functions $f:\prodss\to\reals$
\[
\lambda_\atom(f) := \EE_{\tdx}[z_1(f)],
\]
where $\tdx\in\atom$, and $\{z_k(f)\}_{k\in\nats}$ are defined in \cref{eq:def_index1_toursums}. By 
\cref{lemma:translation_Douc_regeneration}, the distribution of $z_1(f)$ under
$\Pr_{\tdx}$ when $\tdx\in\atom$ is the same as the distribution of $s_1(f)$ under
$\Pr_\nu$. Hence,
\[
\lambda_\atom(f) = \EE_\nu[s_1(f)].
\]
Its normalizing constant can be recovered using $f\equiv1$, so that
\[
\lifted(f) = \frac{\lambda_\atom(f)}{\lambda_\atom(1)} = \frac{\EE_\nu[s_1(f)]}{\EE_\nu[\tau_1]}.
\]
Thus,
\[
\EE_\nu[s_1(f)] = \lifted[f]\EE_\nu\left[\tau_1\right].
\]
It follows that $\{s_k\}_{k\in\nats}$ is a collection of \iid random variables
(\cref{lemma:tour_sums_iid}) with common expectation given by the expression above.
In particular, using $f=\ind_\atom$, and noting that $s_1(f)=1$ $\Pr_\nu$-a.s.,
gives
\[
\EE_\nu\left[\tau\right] = \frac{1}{\lifted(\atom)} = \frac{2}{p_0}.
\]
Finally, the strong law of large numbers for \iid random variables shows that
\[
\lim_{k\to\infty} \frac{\sum_{n=0}^{T_k} f(\tdx_n)}{\sum_{n=0}^{T_k} g(\tdx_n)} = \lim_{k\to\infty} \frac{\frac{1}{k}\sum_{j=1}^k s_j(f)}{\frac{1}{k}\sum_{j=1}^k s_j(g)} = \frac{\EE_\nu[s(f)]}{\EE_\nu[s(g)]} = \frac{\lifted[f]}{\lifted[g]}, \;\; \Pr_{\nu}\text{-a.s.}
\]

\end{proof}

\begin{proof}[\textbf{Proof of \cref{thm:CLT}}]
The proof draws from the informal exposition in
\citet[][Ch. 6.4]{ripley1987stochastic}. To de-clutter the notation, let
\[
\sigma_d^2 := \EE_\nu[s(\ind\{i=N\}(h(x)-\target(h)))^2],
\]
and
\[
\hslash_k := s_k(h(x)\ind\{i=N\}), \qquad k\in\nats.
\]
By \cref{thm:LLN}, for all $k\in\nats$
\[
\EE_\nu[\hslash_k] = \EE_\nu[v] \target(h), \qquad \EE_\nu[v] = \frac{2p_N}{p_0}.
\]
On the other hand,
\[
\rest_k(h) - \target(h) &= \frac{\sum_{j=1}^{k} \hslash_j}{\sum_{j=1}^{k} v_j}  - \target(h) = \frac{\sum_{j=1}^{k} (\hslash_j - \target(h) v_j)}{\sum_{j=1}^{k} v_j}.
\]
Let 
\[
d_k := \hslash_k - \target(h) v_k = s_k(\ind\{i=N\}(h(x) - \target(h))), \qquad k\in\nats.
\]
By \cref{lemma:tour_sums_iid}, $\{d_k\}_{k\in\nats}$ is a collection of \iid 
random variables under $\Pr_\nu$, with zero mean and variance 
$\sigma_d^2$. Thus, by the Central Limit Theorem
\[
k^{-1/2}\sum_{j=1}^{k} d_j = k^{1/2}\left(\frac{1}{k}\sum_{j=1}^{k} d_j\right) \convd \distNorm(0,\sigma_d^2).
\]
Now let us write
\[
k^{1/2}(\rest_k(h) - \target(h)) = \frac{k^{-1/2}\sum_{j=1}^{k} d_j}{k^{-1}\sum_{j=1}^{k} v_j} = \frac{a_k}{b_k}.
\]
We showed that $a_k\convd Z$ where $Z\sim\distNorm(0,\sigma_d^2)$, 
while $b_k\convas\EE_\nu[v]$ by \cref{thm:LLN}. Using Theorem 2.7(v) in 
\citet{vandervaart1998asymptotic}, we have that $(a_k,b_k)\convd(Z,\EE_\nu[v])$.
Since $\EE_\nu[v]=2p_0/p_N>0$, by the continuous mapping theorem 
\citep[Thm.\ 2.3]{vandervaart1998asymptotic} we conclude that
\[
k^{1/2}(\rest_k(h) - \target(h)) \convd \distNorm\left(0,\left[\frac{\sigma_d}{\EE_\nu[v]}\right]^2\right).
\]
\end{proof}

\subsection{Proofs for \cref{sec:index_process}}

\begin{lemma}\label{lemma:kexp_V_marginal_is_target_marginal}
Under \cref{assu:ELE}, for all $i\in\indset$ and $\target$-almost every $x$,
\[
K_i(x,\dee v') = \targetv_i(\dee v'),
\]
where the right-hand side is the push-forward of $\target_i$ under $V$. 
\end{lemma}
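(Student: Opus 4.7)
The plan is to show that the common distribution $q_i$ guaranteed by \cref{assu:ELE} must coincide with the pushforward $\targetv_i$, after which the conclusion follows immediately from \cref{assu:ELE} itself. The key leverage is the $\pi_i$-invariance of the exploration kernel $K_i$ together with a marginalization argument.

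First, I would reconcile the ``almost everywhere'' qualifiers. \cref{assu:ELE} asserts $K_i(x,\dee v')=q_i(\dee v')$ for $\target$-a.e.\ $x$, whereas we will want to invoke $\pi_i$-invariance, which concerns $\pi_i$-a.e.\ behavior. Fortunately $\pi$ and $\pi_i$ share the base measure $\pi_0$ and have densities proportional to $e^{-V}$ and $e^{-\beta_i V}$, which are strictly positive wherever $V<\infty$; consequently $\pi$ and $\pi_i$ are mutually absolutely continuous, so the a.s.\ identification in \cref{assu:ELE} transfers from $\pi$ to $\pi_i$ without loss.

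Next, consider the joint law of $(x,x')$ obtained by drawing $x\sim\pi_i$ and then $x'\mid x$ from the exploration kernel on $\statespace$ (so that $K_i(x,\dee v')$ is, by the slight notational abuse established in \cref{sec:index_process}, the pushforward through $V$ of the conditional law of $x'$). By $\pi_i$-invariance of the exploration kernel, the marginal law of $x'$ is again $\pi_i$, so the marginal law of $V(x')$ is exactly $\targetv_i$. On the other hand, integrating the conditional law of $V(x')$ against $\pi_i$ and using the identification $K_i(x,\dee v')=q_i(\dee v')$ $\pi_i$-a.e., that same marginal equals
\[
\int_{\statespace} \pi_i(\dee x)\,q_i(\dee v') \;=\; q_i(\dee v').
\]
Equating the two expressions for the marginal gives $q_i=\targetv_i$, and plugging back into \cref{assu:ELE} yields $K_i(x,\dee v')=\targetv_i(\dee v')$ for $\target$-a.e.\ $x$.

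The argument is essentially a one-line marginalization, so there is no serious obstacle; the only place one must tread carefully is the bookkeeping between $\pi$-a.e.\ and $\pi_i$-a.e.\ statements, which is handled by the mutual absolute continuity observation above.
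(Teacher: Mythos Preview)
Your proposal is correct and follows essentially the same route as the paper: identify $q_i$ by integrating the conditional law of $V(x')$ against $\pi_i$ and invoking $\pi_i$-invariance to conclude $q_i=\targetv_i$. Your explicit treatment of the $\pi$-a.e.\ versus $\pi_i$-a.e.\ issue via mutual absolute continuity is a welcome clarification that the paper's proof leaves implicit.
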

\begin{proof}
For $i\in\indsetone$, let
\[
K_i(x, \dee v') = \int_{x'} K_i(x,\dee x')\delta_{V(x')}(\dee v'),
\]
be the push-forward of $K_i(x,\dee x')$ under $V$. By \cref{assu:ELE},
$K_i(x, \dee v') = q_i(\dee v')$ for $\target$-almost every $x$.
On the other hand, since 
$K_i$ is $\target_i$ invariant for every $i\in\indsetone$, it follows that
$x'\sim\target_i$ whenever $x\sim\target_i$. In particular, this means that
$v'\sim\targetv_i$ whenever $x\sim\target_i$. The only possible choice for $q_i$
that makes these statements consistent is $q_i=\targetv_i$, for all 
$i\in\indsetone$.
\end{proof}

\begin{proof}[\textbf{Proof of \cref{prop:marginal_chain}}]
We concentrate on NRST because the proof for ST is identical.
Since $\statespace\times\reals$ is a Polish space, for $\target$-almost every $x$, 
there exists a regular conditional probability $K_i((x,v'),\dee x')$ 
\citep[Theorem 10.2.2]{dudley2002real} such that
the joint distribution of $(x',v'=V(x'))$ conditional on $x$ can be written as
\[
K_i(x,\dee x')\delta_{V(x')}(\dee v') = K_i(x,\dee v')K_i((x,v'),\dee x'),
\]
where $K_i(x,\dee v')$ is the push-forward of $K_i(x,\dee x')$ under $V$. By
\cref{assu:ELE} and \cref{lemma:kexp_V_marginal_is_target_marginal},
\[\label{eq:RCD_kexp_and_ELE}
K_i(x,\dee x')\delta_{V(x')}(\dee v') = \targetv_i(\dee v')K_i((x,v'),\dee x').
\] 
Hence, for $\target$-almost every $x$ and all $(i,\eps)$, 
the marginal distribution of $(i',\eps')$ conditional on $(x,i,\eps)$ after a 
full NRST step is
\[
&\knrst((x,i,\eps), (\statespace, i', \eps')) \\
&= \int_{x^\star}\sum_{i^\star,\eps^\star} \kexp((x,i,\eps),(\dee x^\star, i^\star, \eps^\star)) \ktem((x^\star,i^\star,\eps^\star),(\statespace,i',\eps')) \\
&= \int_{x^\star} K_i(x,\dee x^\star) \ktem((x^\star,i,\eps),(\statespace,i',\eps')) \\
&(\kexp \text{ only affects the }x \text{ component}) \\
&= \int_{x^\star} K_i(x,\dee x^\star)[\acc{i,i+\eps}(V(x^\star))\ind\{i'=i+\eps, \eps'=\eps\} + \rej{i,i+\eps}(V(x^\star))\ind\{i'=i, \eps'=-\eps\}] \\
&= \int_{x^\star} K_i(x,\dee x^\star) \int_{v^\star}\delta_{V(x^\star)}(\dee v^\star)[\acc{i,i+\eps}(v^\star)\ind\{i'=i+\eps, \eps'=\eps\} + \rej{i,i+\eps}(v^\star)\ind\{i'=i, \eps'=-\eps\}] \\
&(\text{integrate with respect to point mass at } V(x^\star)) \\
&= \int_{v^\star} \targetv_i(\dee v^\star) [\acc{i,i+\eps}(v^\star)\ind\{i'=i+\eps, \eps'=\eps\} + \rej{i,i+\eps}(v^\star)\ind\{i'=i, \eps'=-\eps\}]\int_{x^\star} K_i((x,v^\star), \dee x^\star) \\
&(\text{\cref{eq:RCD_kexp_and_ELE} and Fubini's}) \\
&= \int_{v^\star} \targetv_i(\dee v^\star) [\acc{i,i+\eps}(v^\star)\ind\{i'=i+\eps, \eps'=\eps\} + \rej{i,i+\eps}(v^\star)\ind\{i'=i, \eps'=-\eps\}] \\
&= \acc{i,i+\eps}\ind\{i'=i+\eps, \eps'=\eps\} + \rej{i,i+\eps}\ind\{i'=i, \eps'=-\eps\} \\
&=\mknrst((i,\eps),(i',\eps')).
\]
Since the right-hand side does not depend on $x$, we conclude that, for any initial
distribution $\mu$ on $\prodmespace$, the index process of the NRST chain 
initialized with $\mu$ follows marginally a finite state Markov chain on 
$\indset\times\epsset$ with transition kernel $\mknrst$. Its initial distribution 
is given by the push-forward of $\mu$ under the canonical projection 
$(x,i,\eps)\mapsto(i,\eps)$. Moreover, the kernel $\mknrst$ is invariant with 
respect to $\lifted(i,\eps)$ because the NRST kernel $\knrst$ is
$\lifted$-invariant (\cref{prop:lifted_invariance}) and $\lifted(i,\eps)$ is its 
marginal.
\end{proof}

\subsection{Proofs for \cref{sec:toureff}}

Before we begin, let us introduce notation specific for the index process. We
identify the atom set $\atom$ in product space $\prodss$ with the index process
state $(0,-1)$. Thus, we overload the notation for the hitting times of the atom 
$\{T_k\}_{k\in\nats}$ to mean visits of the index process to the state $(0,-1)$.
Moreover, define
\[\label{eq:def_top_level_aiming_up}
\mcU:=(N,+1),
\]
which is the state at the top level of the index process with momentum aiming up.
Let us write $T_\mcU$ for the first hitting time 
(\cref{eq:def_first_hit_return_times}) of the singleton $\{\mcU\}$. Then, we can
define the \emph{roundtrip time} as
\[\label{eq:def_RTT}
\rtt := \inf\{n>T_\mcU: (i,\eps)=(0,-1)\} = T_1\circ\theta_{T_\mcU}.
\]

Recall that, under \cref{assu:ELE}, the index process is a finite state Markov 
chain on $\indset\times\epsset$ (\cref{prop:marginal_chain}). 
We use $\Pr_{(i,\eps)}$ to denote the law corresponding to initializing the index process at $(i,\eps)$. To simplify the notation further, we write $\Pr_{(i_0,+)}$
if $\eps_0=+1$, and $\Pr_{(i_0,-)}$ otherwise. Note
that initializing NRST from the regenerative measure $\nu$ implies starting
the index process from $(0,+1)$.

\begin{lemma}\label{lemma:unif_pprior_sym_rejs}
If $\{p_i\}_{i=0}^N$ is uniform, then $\rej{i-1,i}=\rej{i,i-1}$ for all
$i\in\{1,\dots,N\}$.
\end{lemma}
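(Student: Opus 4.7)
The plan is to reduce the identity to the well-known symmetry of the Metropolis--Hastings acceptance integral $\int \min\{\pi, \pi'\}\,\dee x$, which treats its two arguments symmetrically.

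First, I would rewrite the acceptance probability \cref{eq:def_comm_step_acc_prob} using the $\pi_i$-representation rather than the $(\beta_i, c_i)$-representation. That is, I would use the equivalent form
\[
\acc{i,i'}(x) = \min\left\{1,\frac{p_{i'}\pi_{i'}(x)}{p_i\pi_i(x)}\right\}.
\]
Under the uniformity assumption $p_i \equiv 1/(N+1)$, the ratio of $p$'s cancels, leaving
\[
\acc{i,i'}(x) = \min\left\{1,\frac{\pi_{i'}(x)}{\pi_i(x)}\right\}.
\]

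Next, I would compute $\acc{i-1,i}$ and $\acc{i,i-1}$ explicitly using the definition in \cref{eq:def_average_acc_rej_probs}. For the first,
\[
\acc{i-1,i} = \EEi{i-1}\!\left[\min\left\{1,\frac{\pi_i(x)}{\pi_{i-1}(x)}\right\}\right] = \int \pi_{i-1}(x)\min\left\{1,\frac{\pi_i(x)}{\pi_{i-1}(x)}\right\}\dee x = \int \min\{\pi_{i-1}(x),\pi_i(x)\}\dee x.
\]
An identical calculation gives
\[
\acc{i,i-1} = \int \min\{\pi_i(x),\pi_{i-1}(x)\}\dee x,
\]
which equals the expression above by symmetry of the $\min$. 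Hence $\acc{i-1,i}=\acc{i,i-1}$, and consequently $\rej{i-1,i}=\rej{i,i-1}$.

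There is essentially no obstacle here: the only subtlety is making sure one works with the normalized densities $\pi_i$ (rather than the unnormalized form $\pi_0 e^{-\beta_i V}$), so that the pull-in of $p_i$'s through \cref{eq:pseudo_prior_repar} cleanly cancels when the pseudo-prior is uniform. Once that reformulation is done, the result is immediate from the symmetry $\min\{a,b\}=\min\{b,a\}$.
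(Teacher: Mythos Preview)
Your proof is correct and takes a different route from the paper's. The paper argues via invariance of the index-process kernel $\mknrst$ with respect to $\lifted(i,\eps)=p_i/2$ (\cref{prop:marginal_chain}): writing out the balance equation for the state $(i,+1)$ under uniformity yields $1=\rej{i,i-1}+(1-\rej{i-1,i})$, hence the claim. You instead compute directly, using the standard Metropolis--Hastings identity $\pi_i(x)\min\{1,\pi_{i'}(x)/\pi_i(x)\}=\min\{\pi_i(x),\pi_{i'}(x)\}$, to show $\acc{i-1,i}=\acc{i,i-1}=\int\min\{\pi_{i-1},\pi_i\}\,\dee x$. Both arguments are short, and both ultimately rest on the same symmetry $p_i\acc{i,i'}=p_{i'}\acc{i',i}$; the paper accesses it through the invariance machinery, while you expose it with a one-line integral. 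Your approach has the advantage of being fully self-contained and not referencing the index process (whose Markov structure in \cref{prop:marginal_chain} is stated under \cref{assu:ELE}), so it makes transparent that the lemma holds without any idealized assumption on the exploration kernels.
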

\begin{proof}
Since $\lifted(i,\eps)$ is constant and non-zero, by the $\lifted$-invariance 
of the index process
\[
&\lifted(i,+1) = \lifted(i,-1)\rej{i,i-1} + \lifted(i-1,+1)(1-\rej{i-1,i}) \\
&\iff 1 = \rej{i,i-1} + 1-\rej{i-1,i} \\
&\iff \rej{i,i-1} = \rej{i-1,i}.
\]
\end{proof}

The following result analyzes the probabilities that the index process started
at either end of $\indset$ is able to reach the opposite end before returning to
the initial state. As we shall see in the proof of \cref{thm:TE_formula}, these are
key ingredients for the distribution of the number of visits to level $N$ within
a tour---which in turn determines TE (\cref{sec:toureff}).

\begin{lemma}\label{lemma:prob_visit_top_uniform}
For the index process with uniform $\{p_i\}_{i=0}^N$,
\[
\Pr_{(0,+)}(T_\mcU < T_1) = \Pr_{(N,-)}(T_\mcU > T_1) = \frac{\EE_{(0,+)}[\tau]}{\EE_{(0,-)}[\rtt]},
\]
where $\rtt$ is the roundtrip time (\cref{eq:def_RTT}) and $\tau$ is the length of
a typical tour (defined in \cref{sec:NRST}). Consequently, for reversible ST
\[
\Pr_{(0,+)}(T_\mcU < T_1) = \Pr_{(N,-)}(T_\mcU > T_1) = \frac{1}{2N + 2\sum_{i=1}^N \frac{\rho_i}{\alpha_i}},
\]
whereas for NRST
\[
\Pr_{(0,+)}(T_\mcU < T_1) = \Pr_{(N,-)}(T_\mcU > T_1) = \frac{1}{1 + \sum_{i=1}^N \frac{\rho_i}{\alpha_i}}.
\]
\end{lemma}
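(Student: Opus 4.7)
The plan is to establish the two equalities and the explicit formulas in turn.

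\textbf{Step 1 (ratio identity via Wald).} First I would prove $\Pr_{(0,+)}(T_\mcU < T_1) = \EE_{(0,+)}[\tau]/\EE_{(0,-)}[\rtt]$ by a Wald-type decomposition. Under $\Pr_{(0,-)}$ the first step produces $\tdx_1 \sim \nu$, so the excursions between consecutive atom visits are i.i.d.\ copies of $\tau$ by \cref{lemma:tour_sums_iid} and the strong Markov property. Let $p := \Pr_{(0,+)}(T_\mcU < T_1)$, let $V_k$ indicate that the $k$-th excursion visits $\mcU$, and let $N_v := \inf\{k \ge 1 : V_k = 1\}$; then $(\tau_k, V_k)$ are i.i.d., $N_v$ is a geometric$(p)$ stopping time for $\sigma(\{(\tau_j, V_j)\}_{j \le k})$, and $\rtt = \sum_{k=1}^{N_v}\tau_k$. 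Integrability of $\tau$ comes from \cref{thm:LLN}, so Wald's identity yields $\EE_{(0,-)}[\rtt] = \EE[N_v]\,\EE_{(0,+)}[\tau] = \EE_{(0,+)}[\tau]/p$.

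\textbf{Step 2 (symmetry via flow balance).} Next I would prove $p = q$ for $q := \Pr_{(N,-)}(T_\mcU > T_1)$ by an ergodic flow argument. By \cref{prop:marginal_chain} and uniformity of $\lifted$, the long-run visit rates to $(0,-)$ and to $\mcU$ coincide. Each atom visit initiates an i.i.d.\ up-tour that succeeds with probability $p$, and each $\mcU$ visit initiates an i.i.d.\ down-tour that succeeds with probability $q$. Because successful up-tours and down-tours must alternate (each full round-trip contains exactly one of each), their counts in any window of length $T$ differ by at most one; dividing by $T$ and passing to the limit via the ergodic theorem gives $p\cdot(\text{atom rate}) = q\cdot(\mcU\text{ rate})$, and hence $p = q$ by equality of rates.

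\textbf{Step 3 (explicit formulas).} Finally I would compute $p$ directly from the index-chain transitions. Setting $a_i = \Pr_{(i,+)}(T_\mcU < T_1)$ and $b_i = \Pr_{(i,-)}(T_\mcU < T_1)$ with $a_N = 1$ and $b_0 = 0$, the NRST kernel $\mknrst_\mathrm{NRST}$ gives
\[
a_i = \alpha_{i+1} a_{i+1} + \rho_{i+1} b_i, \qquad b_i = \alpha_i b_{i-1} + \rho_i a_i.
\]
Adding the two equations reveals the conservation $C := \alpha_i(a_i - b_{i-1}) = \alpha_{i+1}(a_{i+1} - b_i)$; substituting $b_{i-1} = a_i - C/\alpha_i$ back yields the telescoping relation $a_i - a_{i-1} = (\rho_i/\alpha_i)C$, which together with $a_N = 1$ and $C = \alpha_1 a_1 = a_0$ delivers $a_0 = 1/(1 + \sum_{i=1}^N \rho_i/\alpha_i)$. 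For ST the $\eps'$-resampling in $\mknrst_\mathrm{ST}$ makes the one-step law independent of $\eps$, so $a_i = b_i =: e_i$ at all interior indices, and the system collapses to the birth-death harmonic equation $\alpha_{i+1}(e_{i+1}-e_i) = \alpha_i(e_i - e_{i-1})$ with $e_0 := 0$ and $e_N := 1$; the standard solution is $e_i = \sum_{j \le i}\alpha_j^{-1}/\sum_{j=1}^N \alpha_j^{-1}$, and a single boundary step at $(0,+)$ (where the $\eps'=-1$ branch lands in the ST atom $\{i=0\}$ with probability $1/2$, producing $T_1 = 1$) gives $\Pr_{(0,+)}(T_\mcU < T_1) = (\alpha_1/2)\,e_1 = 1/(2N + 2\sum_i \rho_i/\alpha_i)$.

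\textbf{Anticipated obstacle.} The trickiest point is the ST boundary analysis: $T_1$ must be read as a first-return time because $(0,+)$ already sits inside the ST atom $\{i=0\}$, and the factor of $\alpha_1/2$ distinguishing the ST and NRST formulas comes entirely from the boundary transition at $(0,+)$. For NRST the key conceptual move is spotting the conservation law $\alpha_i(a_i - b_{i-1}) = \text{const}$ that collapses the coupled two-variable recursion to a single telescoping sum; a brute-force linear-algebra attack would work but would hide this structure.
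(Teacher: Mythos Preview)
Your argument is correct and takes a genuinely different route from the paper at every step. For the ratio identity, the paper expresses $p=\Pr_{(0,+)}(T_\mcU<T_1)$ as the almost-sure limit $\lim_k k^{-1}\sum_j m_j$ of round-trip indicators, factors this as $(n/k(n))\cdot(R(n)/n)$, and invokes the i.i.d.\ LLN for tours together with the renewal LLN for round-trips; your Wald decomposition $\rtt=\sum_{k=1}^{N_v}\tau_k$ reaches the same conclusion in one line. For the explicit values, the paper computes $\EE_{(0,+)}[\tau]$ from \cref{thm:LLN} and imports $\EE_{(0,-)}[\rtt]$ from Theorem~1 of \citet{syed2022nrpt}, whereas your coupled recursion on $(a_i,b_i)$ with the conservation law $\alpha_i(a_i-b_{i-1})=a_i-b_i=\text{const}$ is fully self-contained and makes the skip-free structure explicit. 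For the symmetry $p=q$, the paper derives the formula first and then relabels $i\gets N-i$ (and $\eps\gets-\eps$), exchanging the two probabilities while leaving $\sum_i\rho_i/\alpha_i$ invariant; your flow-balance argument obtains the symmetry without presupposing the formula, which is a cleaner logical order. One caution on the ST half of Step~2: the ST index kernel in \cref{prop:marginal_chain} does not in fact have $p_i/2$ as its $(i,\eps)$-stationary law, so the equal-rate claim for $(0,-)$ versus $\mcU=(N,+)$ fails as written; the argument is rescued by collapsing to the $i$-marginal and comparing $\{i=0\}$ with $\{i=N\}$, which is also how your Step~3 handles the ST boundary.
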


\begin{proof}
Define for all $k\in\nats$
\[
m_k := \ind\{0 < s_k(\ind\{(i_n,\eps_n)=\mcU\})\} = \ind\left\{0 < \sum_{n=T_{k-1}+1}^{T_k} \ind\{(i_n,\eps_n)=\mcU\}\right\},
\]
which indicates whether the process hit $\mcU$ at least once during the $k$-th 
tour or not. In other words, $m_k$ indicates if the $k$-th tour is a 
roundtrip. Note that
\[\label{eq:prob_visit_top_uniform_sum}
\Pr_{(0,+)}\left(T_\mcU < T_1\right) = \EE_{(0,+)}[m_1],
\]
Since the collection $\{m_k\}_{k\in\nats}$ is \iid, it follows from the strong law of 
large numbers that
\[
\lim_{k\to\infty} \frac{1}{k}\sum_{j=1}^k m_j =  \Pr_{(0,+)}(T_\mcU < T_1), \qquad \Pr_{(0,+)}-\text{a.s.}
\]
Now, let $k(n)$ be the index of the tour occurring at step $n$; i.e.,
\[
\forall n\in\posInts:\ k(n) := 1+\sum_{m=0}^n \ind\{(i_m,\eps_m)=(0,-1)\}.
\]
Then, \cref{thm:LLN} shows that
\[
\lim_{n\to\infty} \frac{n}{k(n)} = \lim_{k\to\infty} \frac{T_k}{k}= \lim_{k\to\infty} \frac{1}{k}\sum_{j=1}^k\tau_j = \EE_{(0,+)}[\tau], \quad \Pr_{(0,+)}-\text{a.s.}
\]
Under a uniform distribution over levels, \cref{eq:LLN_toursums_identity}
in \cref{thm:LLN} shows that $\EE_{(0,+)}[\tau]=2(N+1)$ for NRST. A similar
identity can be obtained for ST by using the fact that it admits the atom 
$\atom_\text{ST}:=\statespace\times\{0\}\times\epsset$. This means that for ST,
$\EE_{(0,+)}[\tau]=N+1$.

On the other hand, let $R(n)$ be the number of roundtrips up to step $n$; i.e.,
\[
R(n) := \sum_{j=1}^{k(n)}m_j.
\]
$\{R_n\}_{n\in\posInts}$ is a renewal process with \iid 
holding times that have expectation $\EE_{(0,-)}[\rtt]$. By the law of large 
numbers for renewal processes \citep[Prop. 9.1.25]{cinlar2013introduction},
\[
\lim_{n\to\infty} \frac{R(n)}{n} = \frac{1}{\EE_{(0,-)}[\rtt]},
\]
almost surely. Combining the above two facts gives
\[
\Pr_{(0,+)}(T_\mcU < T_1) &= \lim_{k\to\infty} \frac{1}{k}\sum_{j=1}^k m_j \\
&= \lim_{n\to\infty} \frac{n}{k(n)}\frac{R(n)}{n} \\
&= \left[\lim_{n\to\infty} \frac{n}{k(n)}\right] \left[\lim_{n\to\infty}\frac{R(n)}{n}\right] \\
&= \frac{\EE_{(0,+)}[\tau]}{\EE_{(0,-)}[\rtt]}.
\]
Using the expressions from Theorem 1 in \citet{syed2022nrpt} for the expected
roundtrip times, together with the values for the expected tour length discussed
above, we see that for ST,
\[
\Pr_{(0,+)}(T_\mcU < T_1) &= \frac{1}{2N + 2\sum_{i=1}^N \frac{\rho_i}{\alpha_i}},
\]
whereas for NRST
\[
\Pr_{(0,+)}(T_\mcU < T_1) &= \frac{1}{1 + \sum_{i=1}^N \frac{\rho_i}{\alpha_i}}.
\]

Finally, to see that
\[
\Pr_{(0,+)}(T_\mcU < T_1) = \Pr_{(N,-)}(T_\mcU > T_1),
\]
simply note that we can obtain the right-hand side by re-labelling $i\gets N-i$.
But since the quantity $\sum_{i=1}^N \frac{\rho_i}{\alpha_i}$ is invariant under
permutations of the indices, the result is identical.
\end{proof}

\begin{proof}[\textbf{Proof of \cref{thm:TE_formula}}]
Following the definition of $\TE$, we proceed by describing the distribution of 
the number of visits to level $N$ within a tour. Under \cref{assu:ELE}, 
\cref{prop:marginal_chain} lets us analyze this distribution using only the index
process. Now, the probability that the chain hits level $N$ before returning to 
the atom is $\Pr_{(0,+)}(T_\mcU < T_1)$.
From there, by the strong Markov property, a second visit will occur with probability 
$\Pr_{(N,-)}(T_\mcU < T_1)$. 
If that happens, then---again by the Markov property---a third visit will occur with 
the same probability. 
Iterating this argument shows that
\[\label{eq:visits_decomposition}
v \overset{d}{=} RUF
\]
where $R$ and $U$ are independent random variables with marginal distributions
\[
R&\sim \distBern\left(\Pr_{(0,+)}(T_\mcU < T_1)\right) \\
U&\sim \distGeom_1\left(\Pr_{(N,-)}(T_\mcU > T_1)\right).
\]
$\distGeom_1(p)$ refers to a geometric distribution supported on
$\nats=\{1,2,\dots\}$, while the constant $F$ is related to the delay in turning
back after hitting the boundary at $N$. For ST, $F=1$ because this process 
immediately forgets its current direction. On the other hand, NRST needs $F=2$ 
because, after hitting $(N,+)$, it takes an additional step to flip
(deterministically) to $(N,-)$.
The result now follows directly from \cref{lemma:prob_visit_top_uniform}. Indeed, let 
$p=\Pr_{(0,+)}(T_\mcU < T_1)$. Then, for NRST we have
\[
\EE_\nu[v^2] = 4p\left[\frac{1}{p^2} + \frac{1-p}{p^2}\right] = 4\frac{2-p}{p}.
\]
Using this together with $\EE_\nu[v]=2$ (via \cref{eq:visits_decomposition} or
\cref{eq:LLN_toursums_identity} with $f=\ind\{i=N\}$) yields
\[
\TE &= \frac{p}{2-p} = \frac{1}{2/p-1} = \frac{1}{1 + 2\sum_{i=1}^N \frac{\rho_i}{\alpha_i}}.
\]
Likewise for ST, \cref{eq:visits_decomposition} gives $\EE_\nu[v]=1$---we can
verify this by modifying \cref{eq:LLN_toursums_identity} for (reversible) ST
as discussed in the proof of \cref{lemma:prob_visit_top_uniform}. Finally,
\[
\TE = \frac{1}{2/p-1} = \frac{1}{4N-1 + 4\sum_{i=1}^N \frac{\rho_i}{\alpha_i}}.
\]
\end{proof}

\begin{proof}[\textbf{Proof of \cref{thm:CLT_bdd_h}}]
Applying the triangle inequality twice,
\[
|s(\ind\{i=N\}[h(x)-\pi(h)])| &\leq  \sum_{n=0}^{T_{1}} \ind\{i_n=N\}|h(x_n)-\pi(h)| \leq  2v.
\]
Therefore,
\[
\EE_\nu[s(\ind\{i=N\}[h(x)-\pi(h)])^2] \leq 4\EE_\nu[v^2].
\]
Using the definition of $\sigma(h)^2$ and applying the above bound gives
\[
\sigma(h)^2 \leq 4\frac{\EE_\nu[v^2]}{\EE_\nu[v]^2} = \frac{4}{\TE}.
\]
To check that $\TE>0$, first recall that $\EE_\nu[v]=2p_N/p_0>0$ (\cref{thm:LLN}).
Secondly, we know that under \cref{assu:ELE} the
index process follows a finite Markov chain on $\indset\times\epsset$ 
(\cref{prop:marginal_chain}). Therefore, the number of visits to $i=N$ before 
absorption into the atom $\atom$---i.e., $v$---has finite second moment
\citep[see e.g.][Theorem 3.3.3]{kemeny1960finite}.
\end{proof}

\subsection{Proofs for \cref{sec:limit_grid}}

\begin{lemma}\label{lemma:finite_norm_const}
$\mcZ(\beta) < \infty$ for all $\beta\in[0,1]$.
\end{lemma}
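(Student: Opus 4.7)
The plan is to produce a uniform pointwise bound on $e^{-\beta V(x)}$ by an integrable envelope and then integrate. The key assumptions to exploit are that $\pi_0$ is a probability density (so $\int \pi_0(x)\dee x = 1$) and that the endpoint case $\mcZ = \mcZ(1) = \int \pi_0(x) e^{-V(x)} \dee x$ is finite by \cref{eq:def_target_dist}.

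The core observation is a case split on the sign of $V(x)$. For $\beta \in [0,1]$: if $V(x) \geq 0$, then $-\beta V(x) \leq 0$, so $e^{-\beta V(x)} \leq 1$. If instead $V(x) < 0$, then $-\beta V(x) \leq -V(x)$ (since multiplying a negative number by something in $[0,1]$ makes it less negative), hence $e^{-\beta V(x)} \leq e^{-V(x)}$. In either case the pointwise bound
\[
e^{-\beta V(x)} \leq 1 + e^{-V(x)}
\]
holds for every $x \in \statespace$ and every $\beta \in [0,1]$.

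Integrating this bound against $\pi_0$ and using linearity together with the fact that $\pi_0$ integrates to $1$ gives
\[
\mcZ(\beta) = \int \pi_0(x) e^{-\beta V(x)} \dee x \leq 1 + \int \pi_0(x) e^{-V(x)} \dee x = 1 + \mcZ < \infty,
\]
which is the desired finiteness, uniformly in $\beta \in [0,1]$.

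There is no substantive obstacle here: the argument is a one-line pointwise bound followed by monotonicity of the integral. The only thing worth being careful about is not writing $(e^{-V(x)})^\beta \leq e^{-V(x)}$, which fails when $V(x) > 0$; handling the two signs of $V(x)$ separately (or, equivalently, using the envelope $1 + e^{-V(x)}$) sidesteps this.
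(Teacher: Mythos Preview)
Your proof is correct. The paper takes a slightly different route: instead of a pointwise envelope, it applies Jensen's inequality to the concave map $u\mapsto u^\beta$ on $[0,\infty)$, obtaining
\[
\mcZ(\beta) = \EEi{0}\!\left[(e^{-V})^\beta\right] \leq \left(\EEi{0}[e^{-V}]\right)^{\beta} = \mcZ^\beta < \infty.
\]
This works at the level of the expectation, so the sign-of-$V$ issue you flag never arises. Your envelope $e^{-\beta V(x)}\leq 1+e^{-V(x)}$ is equally elementary and gives a bound $\mcZ(\beta)\leq 1+\mcZ$ that is uniform in $\beta$; the paper's bound $\mcZ(\beta)\leq\mcZ^\beta$ is a little sharper and interpolates between the endpoint values, but that extra precision is not needed here. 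Either argument is a one-liner once the governing inequality is identified.
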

\begin{proof}
\[
\mcZ(\beta) =\EEi{0}[e^{-\beta V(x)}] \leq \EEi{0}[e^{-V(x)}]^{\beta} <\infty.
\]
The first bound follows from Jensen's since $u\mapsto u^\beta$ is concave in 
$[0,\infty)$ when $\beta\in[0,1]$, and the last one is due to the requirement
that $\mcZ=\mcZ(1)<\infty$ in \cref{eq:def_target_dist}.
\end{proof}

\begin{lemma}\label{lemma:MGF}
For all $\beta\in[0,1]$ and $\xi \in [-(1-\beta),\beta]$,
\[
\mcM^{(\beta)}(\xi) := \EEi{\beta}[e^{\xi V(x)}]<\infty.
\]
It follows that for all $\beta\in(0,1)$, $\mcM^{(\beta)}(\xi)$ is finite on an
open interval around $0$. In particular, this implies that for all $a>0$,
\[
\EEi{\beta}[|V(x)|^a]<\infty.
\]
\end{lemma}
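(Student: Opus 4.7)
The plan is to compute $\mcM^{(\beta)}(\xi)$ explicitly as a ratio of normalizing constants and then invoke \cref{lemma:finite_norm_const}. Starting from the definition of $\pibeta$,
\[
\mcM^{(\beta)}(\xi) = \int \frac{\pi_0(x)e^{-\beta V(x)}}{\mcZ(\beta)}e^{\xi V(x)}\dee x = \frac{1}{\mcZ(\beta)}\int \pi_0(x)e^{-(\beta-\xi)V(x)}\dee x = \frac{\mcZ(\beta-\xi)}{\mcZ(\beta)}.
\]
Here $\mcZ(\beta)>0$ because the integrand is strictly positive, so the ratio is well-defined. Under the hypothesis $\xi\in[-(1-\beta),\beta]$, the shifted argument satisfies $\beta-\xi\in[0,1]$, so \cref{lemma:finite_norm_const} gives $\mcZ(\beta-\xi)<\infty$. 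This yields the first claim.

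For the second claim, fix $\beta\in(0,1)$. Then $[-(1-\beta),\beta]$ contains an open neighborhood of $0$, so $\mcM^{(\beta)}$ is finite on such a neighborhood. It is a standard fact that finiteness of the MGF on any open interval around $0$ implies that all polynomial moments are finite; I would simply note that for fixed $\xi>0$ small enough that both $\xi$ and $-\xi$ lie in $[-(1-\beta),\beta]$, and any $a>0$, there is a constant $C_{a,\xi}$ with
\[
|v|^a \leq C_{a,\xi}\left(e^{\xi v}+e^{-\xi v}\right) \qquad \forall v\in\reals,
\]
since $|v|^a e^{-\xi|v|}\to 0$ as $|v|\to\infty$. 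Integrating against $\pibeta$ gives $\EEi{\beta}[|V|^a]\leq C_{a,\xi}(\mcM^{(\beta)}(\xi)+\mcM^{(\beta)}(-\xi))<\infty$.

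There is no substantive obstacle: the identity $\mcM^{(\beta)}(\xi)=\mcZ(\beta-\xi)/\mcZ(\beta)$ reduces everything to the previous lemma, and the moment bound is the textbook consequence of MGF finiteness on a neighborhood of the origin. The only minor care needed is to verify that the stated range of $\xi$ is precisely what keeps $\beta-\xi$ inside $[0,1]$, and to handle the boundary cases $\beta\in\{0,1\}$ for the first claim (where the interval degenerates to $\{0\}$ or $[0,1]$ respectively, and the identity still holds trivially).
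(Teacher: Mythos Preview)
Your proof is correct and follows essentially the same approach as the paper: both reduce the first claim to \cref{lemma:finite_norm_const} via the identity $\mcM^{(\beta)}(\xi)=\mcZ(\beta-\xi)/\mcZ(\beta)$, and both derive the moment bound from MGF finiteness on a neighborhood of the origin (the paper handles integer $a$ first via a standard MGF result and then extends by Jensen with $\lceil a\rceil$, whereas you bound $|v|^a\le C_{a,\xi}(e^{\xi v}+e^{-\xi v})$ directly for all $a>0$; both are routine). One inconsequential slip in your closing parenthetical: for $\beta=0$ the interval $[-(1-\beta),\beta]$ is $[-1,0]$, not $\{0\}$.
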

\begin{proof}
Note that
\[
\mcM^{(\beta)}(\xi) &= \EEi{\beta}[e^{\xi V(x)}] = \frac{1}{\mcZ(\beta)}\EEi{0}[e^{- (\beta-\xi) V(x)}] = \frac{\mcZ(\beta-\xi)}{\mcZ(\beta)}.
\]
By \cref{lemma:finite_norm_const}, the right-hand side is finite for all
$\beta\in[0,1]$ whenever
\[
0 \leq \beta-\xi \leq 1 \iff -(1-\beta) \leq \xi \leq \beta.
\]
For the second statement, note that for $\beta\in(0,1)$, $\mcM^{(\beta)}(\xi)$
is finite for all $\xi\in (-(1-\beta),\beta)$, which is an open interval 
around $0$. The result follows for $a\in\nats$ by standard results for 
moment-generating functions 
\citep[see e.g.][Theorem 9.3.3]{rosenthal2006first}. To extend to $a>0$, let
\[
\ceil{a} := \min\{n\in\nats:a\leq n\}.
\]
Then,
\[
\EEi{\beta}[|V(x)|^a] = \EEi{\beta}\left[ \left( |V(x)|^{\ceil{a}} \right)^{\frac{a}{\ceil{a}}} \right] \leq \EEi{\beta}\left[ |V(x)|^{\ceil{a}} \right]^{\frac{a}{\ceil{a}}} <\infty,
\]
by Jensen's inequality since $z\mapsto z^{\frac{a}{\ceil{a}}}$ is concave.
\end{proof}

\begin{lemma}\label{lemma:norm_const_derivs}
$\mcZ(\beta)\in\mcC^\infty((0,1))$, and for all $k\in\nats$ and $\beta\in(0,1)$,
\[
\frac{\mathrm{d}^k}{\mathrm{d}\beta^k}\mcZ(\beta) = (-1)^k \mcZ(\beta) \EEi{\beta}[V^k].
\]
\end{lemma}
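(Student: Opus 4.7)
The plan is to justify repeated differentiation under the integral sign in the definition $\mcZ(\beta) = \int \pi_0(x) e^{-\beta V(x)} \dee x$. Formally, the $k$-th partial derivative with respect to $\beta$ of the integrand is $(-V(x))^k \pi_0(x) e^{-\beta V(x)}$, so once we can swap the derivative and the integral we immediately obtain
\[
\frac{\mathrm{d}^k}{\mathrm{d}\beta^k}\mcZ(\beta) = \int (-V(x))^k \pi_0(x) e^{-\beta V(x)} \dee x = (-1)^k \mcZ(\beta)\,\EEi{\beta}[V^k],
\]
where the last equality uses the definition of $\pibeta$. The smoothness claim $\mcZ\in\mcC^\infty((0,1))$ follows from validity of this identity for every $k\in\nats$ at every $\beta\in(0,1)$, together with continuity of $\beta \mapsto \mcZ(\beta)\EEi{\beta}[V^k]$ (which itself follows by the same dominated convergence argument used for differentiation).

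To justify the swap, I would fix an arbitrary $\beta_0\in(0,1)$ and choose $\delta>0$ small enough that $[\beta_-, \beta_+] := [\beta_0-\delta,\beta_0+\delta]\subset(0,1)$. Splitting on the sign of $V(x)$ gives the pointwise bound
\[
e^{-\beta V(x)} \leq e^{-\beta_- V(x)} + e^{-\beta_+ V(x)}, \qquad \beta\in[\beta_-,\beta_+],
\]
since when $V(x)\geq 0$ the exponent is maximized at $\beta=\beta_-$, and when $V(x)\leq 0$ it is maximized at $\beta=\beta_+$. Multiplying by $|V(x)|^k \pi_0(x)$ produces a $\beta$-independent dominating function $g_k$ whose integral equals $\mcZ(\beta_-)\EEi{\beta_-}[|V|^k] + \mcZ(\beta_+)\EEi{\beta_+}[|V|^k]$. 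Both terms are finite: the normalizing constants are finite by \cref{lemma:finite_norm_const}, and the moments are finite by \cref{lemma:MGF} applied at $\beta_\pm\in(0,1)$, which guarantees $\EEi{\beta_\pm}[|V|^a]<\infty$ for every $a>0$.

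With the dominating function $g_k$ in hand, the standard Leibniz rule (iterated $k$ times, or alternatively proved by induction on $k$, using at each step the mean value theorem to bound the difference quotient of the $(k-1)$-st derivative of the integrand by $g_k$ and then invoking dominated convergence) yields differentiability to order $k$ at $\beta_0$ with the claimed formula. Since $k$ and $\beta_0\in(0,1)$ are arbitrary, this establishes both the formula and the $\mcC^\infty$ regularity. The only subtle step is producing a single dominating function valid on a neighborhood of $\beta_0$, and the convexity-based splitting above handles it cleanly; no other obstacle is anticipated.
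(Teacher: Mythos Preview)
Your proposal is correct. The paper's own proof consists of a single sentence citing Lemma~4.5 of \citet{vandervaart1998asymptotic}, which is precisely the differentiation-under-the-integral result for exponential-family normalizers; your argument is a self-contained version of the standard proof of that lemma, with \cref{lemma:MGF} supplying the finite-MGF hypothesis that makes the domination work.
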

\begin{proof}
This is a special case of Lemma 4.5 in \citet{vandervaart1998asymptotic}.
\end{proof}

\begin{corollary}\label{cor:mean_V_properties}
The function $\beta\mapsto\EEi{\beta}[V]$ is $C^\infty((0,1))$ and monotone 
non-increasing, with
\[
\der{\EEi{\beta}[V]}{\beta} = -\var_\beta(V).
\]
Additionally,
\[
\forall \beta\in[0,1]:\ \log(\mcZ(\beta)) = -\int_0^\beta \EEi{\beta'}[V] \dee \beta'.
\]
\end{corollary}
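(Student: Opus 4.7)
The strategy is to parlay Lemma~\ref{lemma:norm_const_derivs} into statements about $\log\mcZ$ and then interpret those statements in terms of $\EEi{\beta}[V]$. First I would observe that $\mcZ(\beta)>0$ for all $\beta\in[0,1]$: finiteness is Lemma~\ref{lemma:finite_norm_const}, and strict positivity follows because the integrand $\pi_0(x)e^{-\beta V(x)}$ is positive and $\pi_0$ is a probability density. Combined with $\mcZ\in\mcC^\infty((0,1))$ from Lemma~\ref{lemma:norm_const_derivs}, this gives $\log\mcZ\in\mcC^\infty((0,1))$. Applying Lemma~\ref{lemma:norm_const_derivs} with $k=1$ yields
\[
\der{\log\mcZ(\beta)}{\beta} = \frac{\mcZ'(\beta)}{\mcZ(\beta)} = -\EEi{\beta}[V],
\]
which in particular establishes that $\beta\mapsto\EEi{\beta}[V]$ is $\mcC^\infty((0,1))$.

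Next I would differentiate again and combine the $k=1$ and $k=2$ cases of Lemma~\ref{lemma:norm_const_derivs}:
\[
\der{\EEi{\beta}[V]}{\beta} = -\frac{\mathrm{d}^2\log\mcZ}{\mathrm{d}\beta^2}(\beta) = -\frac{\mcZ''(\beta)\mcZ(\beta)-\mcZ'(\beta)^2}{\mcZ(\beta)^2} = -\EEi{\beta}[V^2] + \EEi{\beta}[V]^2 = -\var_\beta(V),
\]
which is non-positive, yielding monotonicity.

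For the integral identity, I would first argue that $\mcZ$ is continuous on the closed interval $[0,1]$, not just on $(0,1)$. At $\beta=0$, dominated convergence with dominating integrand $\pi_0(x)(1+e^{-V(x)})$ (integrable because $\mcZ(1)<\infty$) gives $\lim_{\beta\downarrow 0}\mcZ(\beta)=\int\pi_0(x)\,\dee x=1=\mcZ(0)$; the argument at $\beta=1$ is analogous. Hence $\log\mcZ$ is continuous on $[0,1]$ with $\log\mcZ(0)=0$. For $0<\epsilon<\beta\leq 1-\epsilon$, the fundamental theorem of calculus applied to $\log\mcZ\in\mcC^1((0,1))$ gives
\[
\log\mcZ(\beta) - \log\mcZ(\epsilon) = -\int_\epsilon^\beta \EEi{\beta'}[V]\,\dee\beta'.
\]
Passing $\epsilon\downarrow 0$, the left-hand side converges to $\log\mcZ(\beta)$ by continuity; the improper integral on the right must therefore converge, yielding the claim on $[0,1)$, and then at $\beta=1$ by a further continuity argument.

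The main technical friction lies at the endpoints: $\EEi{\beta'}[V]$ may diverge as $\beta'\downarrow 0$ or $\beta'\uparrow 1$ (no integrability of $V$ is assumed under $\pi_0$ or $\pi$), so the integral in \cref{eq:mean_energy_aff_as_free_energy} is genuinely improper. The key observation that makes this harmless is monotonicity from step~2: since $\EEi{\beta'}[V]$ is monotone non-increasing, it is bounded on every compact subinterval of $(0,1)$, so the interior integral is ordinary Lebesgue, and the improper limit exists because the continuous extension of $\log\mcZ$ pins down the left-hand side value.
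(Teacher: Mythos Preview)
Your proof is correct and follows essentially the same route as the paper: both use Lemma~\ref{lemma:norm_const_derivs} with $k=1$ to identify $\EEi{\beta}[V]=-\mcZ'(\beta)/\mcZ(\beta)$, differentiate once more using the $k=2$ case to obtain $-\var_\beta(V)$, and then integrate the relation $\EEi{\beta}[V]=-\der{}{\beta}\log\mcZ(\beta)$ together with $\mcZ(0)=1$. Your treatment is in fact more careful than the paper's at the endpoints $\beta\in\{0,1\}$, where you justify continuity of $\mcZ$ via dominated convergence and handle the potentially improper integral explicitly; the paper simply writes ``integrating on both sides'' without addressing this.
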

\begin{proof}
The case $k=1$ in \cref{lemma:norm_const_derivs} shows that for all 
$\beta\in(0,1)$,
\[\label{eq:mean_V_as_Z_ratio}
\EEi{\beta}[V] = - \frac{\mcZ'(\beta)}{\mcZ(\beta)}.
\]
Since $\mcZ\in\mcC^\infty((0,1))$, $\EEi{\cdot}[V]$ is the ratio of two smooth,
strictly positive functions. Therefore, $\EEi{\cdot}[V]\in\mcC^\infty((0,1))$ too. 
Moreover, differentiating \cref{eq:mean_V_as_Z_ratio} and using 
\cref{lemma:norm_const_derivs} for $k\in\{1,2\}$ shows that
\[
\der{\EEi{\beta}[V]}{\beta} = -\frac{\mcZ''(\beta)}{\mcZ(\beta)} + \frac{\mcZ'(\beta)^2}{\mcZ(\beta)^2} = -\EEi{\beta}[V^2] + \EEi{\beta}[V]^2 = -\var_\beta(V) \leq 0.
\]
On the other hand, we can rewrite the right-hand side of 
\cref{eq:mean_V_as_Z_ratio} as
\[
\EEi{\beta}[V] = -\der{\log(\mcZ(\beta))}{\beta}.
\]
Integrating on both sides and using $\mcZ(0)=1$ gives the last result.
\end{proof}

\begin{lemma}\label{lemma:acc_prob_as_ratio}
If $c$ is differentiable, then for all $\beta,\beta'\in(0,1)$,
\[\label{eq:acc_prob_as_ratio}
\acc{\beta,\beta'} = \frac{\EEi{\bar\beta}[\exp(-|\dbeta||c'(\check\beta) - V|/2)]}{\EEi{\bar\beta}[\exp(-\dbeta (c'(\check\beta)- V)/2)]},
\]
where $\dbeta=\beta'-\beta$, $\bar\beta:=(\beta+\beta')/2$, and 
$\check\beta\in(\min\{\beta,\beta'\}, \max\{\beta,\beta'\})$ satisfies the mean
value theorem equation
\[\label{eq:def_c_check_beta}
c'(\check{\beta}) = \frac{1}{\dbeta}\left(c(\bar\beta+\dbeta/2) - c(\bar\beta-\dbeta/2)\right).
\]
\end{lemma}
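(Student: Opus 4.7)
The plan is a direct calculation starting from the definition of $\acc{\beta,\beta'}(x)$ in \cref{eq:def_extended_acc_prob}, then changing the integrating measure from $\pibeta$ to $\pibarbeta$. The key algebraic identity is $\max\{0,z\} = (z+|z|)/2$, which factorizes the acceptance probability into a ``tilt'' part and a ``symmetric'' part.

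First, I would use the mean value theorem (applicable since $c$ is differentiable) to write $c(\beta')-c(\beta) = c'(\check\beta)\dbeta$ for some $\check\beta$ strictly between $\beta$ and $\beta'$, which is exactly the condition in \cref{eq:def_c_check_beta}. Substituting into the definition gives
\[
\acc{\beta,\beta'}(x) = \exp\!\left(-\tfrac{1}{2}\dbeta(V-c'(\check\beta)) - \tfrac{1}{2}|\dbeta|\,|V-c'(\check\beta)|\right),
\]
using $\max\{0,z\}=(z+|z|)/2$ with $z=\dbeta(V-c'(\check\beta))$ and $|\dbeta z|=|\dbeta|\,|V-c'(\check\beta)|\cdot|\dbeta|/|\dbeta|$ handled carefully to keep the sign of $\dbeta$ explicit in the first exponent.

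Next, I would change measure from $\pibeta$ to $\pibarbeta$. Since $\pibeta(x)/\pibarbeta(x) = (\mcZ(\bar\beta)/\mcZ(\beta))\,e^{-(\beta-\bar\beta)V(x)} = (\mcZ(\bar\beta)/\mcZ(\beta))\,e^{\dbeta V(x)/2}$, taking expectations yields
\[
\acc{\beta,\beta'} = \tfrac{\mcZ(\bar\beta)}{\mcZ(\beta)}\,\EEi{\bar\beta}\!\left[e^{\dbeta V/2}\,e^{-\dbeta(V-c'(\check\beta))/2}\,e^{-|\dbeta|\,|V-c'(\check\beta)|/2}\right].
\]
The two $e^{\pm\dbeta V/2}$ factors cancel, leaving a deterministic factor $e^{\dbeta c'(\check\beta)/2}$ that comes out of the expectation:
\[
\acc{\beta,\beta'} = \tfrac{\mcZ(\bar\beta)}{\mcZ(\beta)}\,e^{\dbeta c'(\check\beta)/2}\,\EEi{\bar\beta}\!\left[e^{-|\dbeta|\,|V-c'(\check\beta)|/2}\right].
\]

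Finally, I would identify the prefactor with the reciprocal of the claimed denominator. A parallel computation, using $\EEi{\bar\beta}[e^{\dbeta V/2}] = \mcZ(\bar\beta-\dbeta/2)/\mcZ(\bar\beta) = \mcZ(\beta)/\mcZ(\bar\beta)$, gives
\[
\EEi{\bar\beta}\!\left[e^{-\dbeta(c'(\check\beta)-V)/2}\right] = e^{-\dbeta c'(\check\beta)/2}\,\tfrac{\mcZ(\beta)}{\mcZ(\bar\beta)},
\]
so its reciprocal is exactly $(\mcZ(\bar\beta)/\mcZ(\beta))\,e^{\dbeta c'(\check\beta)/2}$, and substituting yields \cref{eq:acc_prob_as_ratio}. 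I do not anticipate a serious obstacle; the only mildly delicate point is keeping track of the sign of $\dbeta$ in the ``tilt'' factor $e^{-\dbeta(V-c'(\check\beta))/2}$ versus the absolute value $|\dbeta|$ in the symmetric factor, so that the cancellation with the measure-change exponent is clean regardless of whether $\beta<\beta'$ or $\beta>\beta'$.
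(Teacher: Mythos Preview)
Your proposal is correct and follows essentially the same route as the paper: apply the mean value theorem to $c(\beta')-c(\beta)$, use $\max\{0,z\}=(z+|z|)/2$ to split the exponent, change measure from $\pi^{(\beta)}$ to $\pi^{(\bar\beta)}$, and recognize the resulting prefactor as the reciprocal of the claimed denominator via $\EEi{\bar\beta}[e^{\dbeta V/2}]=\mcZ(\beta)/\mcZ(\bar\beta)$. The only cosmetic difference is that the paper treats the cases $\dbeta\ge 0$ and $\dbeta<0$ separately, whereas you handle both at once by keeping $|\dbeta|$ explicit in the symmetric factor; your unified treatment is slightly cleaner but the underlying argument is identical.
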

\begin{proof}
The following is an adaptation of the argument in \citet{predescu2004incomplete} 
to the simulated tempering case. Consider first the case $\dbeta\geq0$, 
and note that
\[
\acc{\beta,\beta + \dbeta}(x) &= \exp(-\max\{0, \dbeta V(x) - [c(\beta+\dbeta)-c(\beta)]\}) \\
&= \exp(-\dbeta\max\{0, V(x) - [c(\beta+\dbeta)-c(\beta)]/\dbeta \}) \\
&= \exp(-\dbeta\max\{0, V(x) - c'(\check\beta)\}) && (\text{MVT: } \check\beta \in (\beta,\beta'))\\
&= \exp(\dbeta\min\{0, c'(\check\beta) - V(x)\}) \\
&= \exp(\dbeta[c'(\check\beta) - V(x)]/2) \exp(-\dbeta|c'(\check\beta) - V(x)|/2),
\]
where MVT stands for the mean value theorem. Integrating with respect to 
$\pibeta$,
\[
\acc{\beta,\beta+\dbeta} &= \EEi{\beta}[\acc{\beta,\beta+\dbeta}(x)] \\
&= \exp(\dbeta c'(\check\beta)/2)\EEi{\beta}[\exp(-\dbeta V/2) \exp(-\dbeta|c'(\check\beta) - V|/2)] \\
&= \frac{\exp(\dbeta c'(\check\beta)/2)}{\mcZ(\beta)}\EEi{0}[\exp(-\underbrace{(\beta + \dbeta/2)}_{\bar\beta} V) \exp(-\dbeta|c'(\check\beta) - V|/2)] \\
&= \frac{\mcZ(\bar\beta)\exp(\dbeta c'(\check\beta)/2)}{\mcZ(\beta)} \EEi{\bar\beta} [\exp(-\dbeta|c'(\check\beta) - V|/2)].
\]
On the other hand,
\[\label{eq:acc_prob_as_ratio_denom}
\mcZ(\beta) &= \EEi{0}[e^{-\beta V}] = \mcZ(\bar\beta)\EEi{\bar\beta}[e^{\dbeta V/2}].
\]
Hence,
\[
\acc{\beta,\beta+\dbeta} &= \frac{\EEi{\bar\beta} [\exp(-\dbeta|c'(\check\beta) - V|/2)]}{\EEi{\bar\beta}[\exp(-\dbeta (c'(\check\beta) - V)/2)]}.
\]

Similarly for $\dbeta<0$,
\[
\acc{\beta,\beta -|\dbeta|}(x) &= \exp(-\max\{0, -|\dbeta| V - [c(\beta-|\dbeta|)-c(\beta)]\}) \\
&= \exp(|\dbeta|\min\{0,  V - [c(\beta)-c(\beta-|\dbeta|)]/|\dbeta| \}) \\
&= \exp(|\dbeta|\min\{0,  V - c'(\check\beta) \}) && (\text{MVT: }\check\beta\in(\beta',\beta)) \\
&= \exp(-|\dbeta|[c'(\check\beta) - V]/2) \exp(-|\dbeta||c'(\check\beta) - V|/2)
\]
Now
\[
\acc{\beta,\beta-|\dbeta|} &= \EEi{\beta}[\acc{\beta,\beta-|\dbeta|}(x)] \\
&= \exp(-|\dbeta|c'(\check\beta)) \EEi{\beta}[\exp(|\dbeta|V/2) \exp(-|\dbeta||c'(\check\beta) - V|/2)] \\
&= \frac{\exp(-|\dbeta|c'(\check\beta))}{\mcZ(\beta)} \EEi{0}[\exp(-\underbrace{(\beta - |\dbeta|/2)}_{\bar\beta}V) \exp(-|\dbeta||c'(\check\beta) - V|/2)] \\
&= \frac{\mcZ(\bar\beta)\exp(-|\dbeta|c'(\check\beta))}{\mcZ(\beta)} \EEi{\bar\beta}[\exp(-|\dbeta||c'(\check\beta) - V|/2)].
\]
But since $\mcZ(\beta)=\mcZ(\bar\beta)\EEi{\bar\beta}[e^{-|\dbeta| V/2}]$,
we get
\[
\acc{\beta,\beta-|\dbeta|} = \frac{\EEi{\bar\beta}[\exp(-|\dbeta||c'(\check\beta) - V|/2)]}{\EEi{\bar\beta}[\exp(-\dbeta (c'(\check\beta)- V)/2)]}
\]
as necessary. Note that the right-hand side is also valid for $\dbeta\geq0$. 
Finally, we can verify that $\acc{\beta,\beta'}\leq 1$ because for any 
$u\in\reals$, $e^{-|u|} \leq e^{-u}$.
\end{proof}

In the last Lemma, the variable $\check{\beta}$ depends on $\beta$ and $\beta'$.
Ultimately, we will keep one of the end points, say $\beta$, fixed and vary the other end point $\beta'$. As a result, in the following we write $\check{\beta}(\dbeta)$ to emphasize this dependency.

\begin{lemma}\label{lemma:c_check_beta_C2}
Under \cref{assu:c_smooth}, for each $\bar\beta\in(0,1)$, the function
$\dbeta \mapsto c'(\check{\beta}(\dbeta))$ defined by \cref{eq:def_c_check_beta}
(\cref{lemma:acc_prob_as_ratio}) is $\mcC^2((-r(\bar\beta),r(\bar\beta))))$, with
\[\label{eq:def_rbarbeta}
r(\bar\beta) := 2\min\{\bar\beta, 1-\bar\beta\},
\]
and admits the following second-order Taylor expansion around $\dbeta=0$ with
Peano remainder
\[\label{eq:c_check_beta_C2_zero_deriv}
c'(\check{\beta}(\dbeta)) = c'(\bar\beta) + 0\cdot\dbeta + \frac{1}{24}c'''(\bar\beta)\dbeta^2 + o(\dbeta^2).
\]
\end{lemma}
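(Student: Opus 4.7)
The plan is to work with the auxiliary function
\[
g(\dbeta) := c(\bar\beta + \dbeta/2) - c(\bar\beta - \dbeta/2),
\]
which, by \cref{eq:def_c_check_beta}, satisfies $c'(\check\beta(\dbeta)) = g(\dbeta)/\dbeta$ for $\dbeta \neq 0$. The domain $(-r(\bar\beta), r(\bar\beta))$ is chosen precisely so that $\bar\beta \pm \dbeta/2 \in (0,1)$, which puts us inside the region where \cref{assu:c_smooth} applies; hence $g \in \mathcal{C}^3((-r(\bar\beta), r(\bar\beta)))$.

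First I would compute the low-order derivatives of $g$ at the origin by direct differentiation:
\[
g'(\dbeta) &= \tfrac{1}{2}c'(\bar\beta+\dbeta/2) + \tfrac{1}{2}c'(\bar\beta-\dbeta/2), \\
g''(\dbeta) &= \tfrac{1}{4}c''(\bar\beta+\dbeta/2) - \tfrac{1}{4}c''(\bar\beta-\dbeta/2), \\
g'''(\dbeta) &= \tfrac{1}{8}c'''(\bar\beta+\dbeta/2) + \tfrac{1}{8}c'''(\bar\beta-\dbeta/2),
\]
giving $g(0)=g''(0)=0$, $g'(0)=c'(\bar\beta)$, and $g'''(0) = c'''(\bar\beta)/4$. (The vanishing of $g$ and $g''$ at $0$ just reflects the fact that $g$ is odd.) Taylor's theorem with Peano remainder then yields
\[
g(\dbeta) = c'(\bar\beta)\,\dbeta + \tfrac{1}{24}c'''(\bar\beta)\,\dbeta^3 + o(\dbeta^3),
\]
and dividing through by $\dbeta$ gives the claimed expansion \cref{eq:c_check_beta_C2_zero_deriv}.

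To establish $\mathcal{C}^2$ regularity (and in particular to justify that the quotient $g(\dbeta)/\dbeta$ is genuinely smooth at $0$, not merely continuous), I would use the integral representation
\[
\frac{g(\dbeta)}{\dbeta} = \int_0^1 g'(\dbeta s)\,\dee s,
\]
which follows from the fundamental theorem of calculus applied to $s \mapsto g(\dbeta s)$, and which extends the function continuously to $\dbeta=0$ with value $g'(0)=c'(\bar\beta)$. Since $g' \in \mathcal{C}^2((-r(\bar\beta), r(\bar\beta)))$ and the dominated convergence theorem lets us differentiate under the integral twice, $\dbeta \mapsto c'(\check\beta(\dbeta))$ is $\mathcal{C}^2$, with
\[
\der{}{\dbeta}c'(\check\beta(\dbeta)) = \int_0^1 s\,g''(\dbeta s)\,\dee s, \qquad \derk{}{\dbeta}{2}c'(\check\beta(\dbeta)) = \int_0^1 s^2\,g'''(\dbeta s)\,\dee s.
\]

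The only step requiring real care is the smoothness at $\dbeta = 0$, since the formula $g(\dbeta)/\dbeta$ has a removable singularity there; the integral representation resolves this cleanly without needing to invoke the implicit behavior of $\check\beta(\dbeta)$ (which is determined only up to the MVT and need not itself be smooth). Everything else reduces to routine Taylor expansion and checks that one can evaluate $c$ and its derivatives on the symmetric interval $[\bar\beta - \dbeta/2, \bar\beta + \dbeta/2] \subset (0,1)$, which is exactly what the bound $|\dbeta| < r(\bar\beta) = 2\min\{\bar\beta, 1-\bar\beta\}$ guarantees.
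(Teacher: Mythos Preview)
Your proof is correct and takes a somewhat different, more self-contained route than the paper. Both arguments begin identically: write $c'(\check\beta(\dbeta)) = g(\dbeta)/\dbeta$ with $g(\dbeta) = c(\bar\beta+\dbeta/2)-c(\bar\beta-\dbeta/2)$, Taylor-expand $c$ (equivalently $g$) to third order, and read off the expansion \cref{eq:c_check_beta_C2_zero_deriv}. The divergence is in how $\mcC^2$ regularity at $\dbeta=0$ is established. The paper observes that the resulting second-order Peano expansion, together with smoothness away from $0$, implies $\mcC^2$ by invoking a converse to Taylor's theorem (Theorem~3 in \citet{oliver1954exact}). You instead use the Hadamard-type identity $g(\dbeta)/\dbeta = \int_0^1 g'(\dbeta s)\,\dee s$ and differentiate under the integral, which yields $\mcC^2$ directly from $g'\in\mcC^2$. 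Your argument is more elementary in that it avoids an external citation and gives explicit integral formulas for the derivatives (from which one can also recover the expansion coefficients: $\int_0^1 s\,g''(0)\,\dee s = 0$ and $\frac{1}{2}\int_0^1 s^2 g'''(0)\,\dee s = c'''(\bar\beta)/24$); the paper's approach is shorter once the cited result is granted.
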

\begin{proof}
Since $c'(\check{\beta}(\cdot))$ is a linear combination of $c$ evaluated at two 
points, the function is as smooth as $c$ for $\dbeta\neq0$. To understand its 
behavior at $\dbeta=0$, consider a third order expansion of $c$ around $\bar\beta$ 
with Peano remainder
\[
c(\bar\beta+\delta) = c(\bar\beta) + c'(\bar{\beta})\delta + \frac{1}{2}c''(\bar\beta)\delta^2 +\frac{1}{6}c'''(\bar\beta)\delta^3 + h(\delta)\delta^3,
\]
with $h$ a function such that $\lim_{\delta\to0}h(\delta)=0$. 
Substituting $\delta$ with $\pm\dbeta/2$ and using these approximations in 
\cref{eq:def_c_check_beta} gives
\[
&c'(\check{\beta}(\dbeta)) \\
&= \frac{1}{\dbeta}\left(c(\bar\beta+\dbeta/2) - c(\bar\beta-\dbeta/2)\right)\\
&= \frac{1}{\dbeta}\left(\frac{1}{2}c'(\bar\beta)[\dbeta+\dbeta] + \frac{1}{48}c'''(\bar\beta)[\dbeta^3+\dbeta^3] +  \underbrace{\frac{1}{8}[h(\dbeta/2)-h(-\dbeta/2)]}_{:=\tilde{h}(\dbeta)}\dbeta^3\right)\\
&= c'(\bar\beta) + 0\cdot\dbeta + \frac{1}{24}c'''(\bar\beta)\dbeta^2 + \tilde{h}(\dbeta)\dbeta^2.
\]
This is now a second-order Taylor expansion with Peano remainder for 
$c'(\check{\beta}(\cdot))$. Since $c\in\mcC^3((0,1))$, by Theorem 3 in
\citet{oliver1954exact} we conclude that
$c'(\check{\beta}(\cdot))\in\mcC^2((-r(\bar\beta),r(\bar\beta)))$, as required.
\end{proof}

\begin{lemma}\label{lemma:deriv_expectations}
Suppose \cref{assu:c_smooth} holds. Fix $\bar\beta\in(0,1)$ and let $r(\bar\beta)$
be as in \cref{eq:def_rbarbeta}. Then, for $\phi\in\{-1,1\}$ 
and $|\dbeta|<r(\bar\beta)$,
\[
&\der{}{\dbeta}\EEi{\bar\beta}[\exp(\phi\dbeta (c(\check{\beta}(\dbeta))-V)/2)] \\
&\phantom{M}= \frac{\phi}{2}\EEi{\bar\beta}[\exp(\phi \dbeta(\gamma(\dbeta)-V)/2)(\gamma(\dbeta)-V + \dbeta\gamma'(\dbeta))], \\
&\dder{}{\dbeta}\EEi{\bar\beta}[\exp(\phi\dbeta(c(\check{\beta}(\dbeta))-V)/2)] \\
&\phantom{M}=\frac{\phi}{2}\EEi{\bar\beta}[\exp(\phi \dbeta(\gamma(\dbeta)-V)/2)(2\gamma'(\dbeta) + \dbeta\gamma''(\dbeta))] \\
&\phantom{M='}+\frac{1}{4}\EEi{\bar\beta}[\exp(\phi \dbeta(\gamma(\dbeta)-V)/2)(\gamma(\dbeta)-V + \dbeta\gamma'(\dbeta))^2],
\]
where $\gamma(\dbeta):=c'(\check{\beta}(\dbeta))$.
Furthermore, if \cref{assu:nullset} holds too, then for all
$|\dbeta|<r(\bar\beta)$,
\[
&\der{}{\dbeta}\EEi{\bar\beta}[\exp(- \dbeta |c(\check{\beta}(\dbeta))-V|/2)] \\
&= -\frac{1}{2}\EEi{\bar\beta}[\exp(- \dbeta|\gamma(\dbeta)-V|/2)(|\gamma(\dbeta)-V| + \dbeta\gamma'(\dbeta)\sgn(\gamma(\dbeta)-V))] \\
&\dder{}{\dbeta}\EEi{\bar\beta}[\exp(- \dbeta|c(\check{\beta}(\dbeta))-V|/2)] \\
&= -\frac{1}{2}\EEi{\bar\beta}[\exp(- \dbeta|\gamma(\dbeta)-V|/2)\sgn(\gamma(\dbeta)-V)(2\gamma'(\dbeta) + \dbeta\gamma''(\dbeta))] \\
&\phantom{='}+\frac{1}{4}\EEi{\bar\beta}[\exp(- \dbeta|\gamma(\dbeta)-V|/2)(|\gamma(\dbeta)-V| + \dbeta\gamma'(\dbeta)\sgn(\gamma(\dbeta)-V))^2].
\]
\end{lemma}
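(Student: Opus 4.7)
The plan is to apply differentiation under the expectation (dominated convergence) in each case. Abbreviate $\gamma(\dbeta) := c'(\check\beta(\dbeta))$, $u_\phi(\dbeta, V) := \phi\dbeta(\gamma(\dbeta)-V)/2$, and $w(\dbeta, V) := -\dbeta|\gamma(\dbeta)-V|/2$. For each identity I will (i) establish pointwise differentiability of the integrand in $\dbeta$ for $\pi^{(\bar\beta)}$-a.e.\ $V$, (ii) carry out the chain/product rule computation to recover the claimed expression, and (iii) produce an integrable dominator on a neighborhood of the target point $\dbeta_0$ to justify exchanging $d/d\dbeta$ with $\EEi{\bar\beta}$.

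For the smooth integrand $e^{u_\phi}$, \cref{lemma:c_check_beta_C2} gives $\gamma \in \mcC^2((-r(\bar\beta), r(\bar\beta)))$, so pointwise the integrand is $\mcC^2$ in $\dbeta$. A direct computation yields $\partial_\dbeta u_\phi = (\phi/2)((\gamma-V) + \dbeta\gamma')$ and $\partial_\dbeta^2 u_\phi = (\phi/2)(2\gamma' + \dbeta\gamma'')$; substitution into $\partial_\dbeta e^{u_\phi} = e^{u_\phi}\partial_\dbeta u_\phi$ and $\partial_\dbeta^2 e^{u_\phi} = e^{u_\phi}(\partial_\dbeta^2 u_\phi + (\partial_\dbeta u_\phi)^2)$ reproduces the stated formulas. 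For the dominator, fix $\dbeta_0$ and pick $\epsilon > 0$ with $[\dbeta_0 - \epsilon, \dbeta_0 + \epsilon] \subset (-r(\bar\beta), r(\bar\beta))$ and with $\eta := (|\dbeta_0| + \epsilon)/2$ inside the open interval of finiteness of $\mcM^{(\bar\beta)}$ supplied by \cref{lemma:MGF}. On this neighborhood $\gamma, \gamma', \gamma''$ are uniformly bounded, so both derivatives are pointwise dominated by $P(|V|)\,e^{\eta|V|}$ for a polynomial $P$, which is $\pi^{(\bar\beta)}$-integrable by \cref{lemma:MGF}.

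For the absolute-value integrand $e^{w}$, the map $\dbeta \mapsto |\gamma(\dbeta) - V|$ is non-smooth exactly at $\gamma(\dbeta) = V$. Fix $\dbeta_0 \in (-r(\bar\beta), r(\bar\beta)) \setminus \{0\}$. Since $\gamma(\dbeta_0) = (c(\bar\beta + \dbeta_0/2) - c(\bar\beta - \dbeta_0/2))/\dbeta_0$ is exactly of the form $(c(\beta') - c(\beta))/(\beta' - \beta)$ with $\beta \neq \beta'$, \cref{assu:nullset} combined with $\pi^{(\bar\beta)} \ll \pi_0$ gives $\pi^{(\bar\beta)}\{V = \gamma(\dbeta_0)\} = 0$. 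For any $V$ outside this null set, continuity of $\gamma$ produces a neighborhood of $\dbeta_0$ on which $\sgn(\gamma(\dbeta) - V)$ is constant, so $|\gamma(\dbeta) - V| = \sgn(\gamma(\dbeta_0) - V)(\gamma(\dbeta) - V)$ is $\mcC^2$ in $\dbeta$ there. The chain rule applied to $e^w$, using $\partial_\dbeta |\gamma - V| = \gamma'(\dbeta)\sgn(\gamma(\dbeta) - V)$ inside this region, produces the claimed expressions. The same type of exponential-moment dominator applies, since $|e^w| \le e^{|\dbeta|(\|\gamma\|_\infty + |V|)/2}$ on the neighborhood; the case $\dbeta_0 = 0$ then follows by continuity of both sides at $\dbeta_0 = 0$ (or via a direct Taylor argument, using that the $\sgn$ factor is multiplied by $\dbeta$ in the first-derivative formula).

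The main obstacle is the second derivative in the absolute-value case: formally differentiating $\sgn(\gamma(\dbeta) - V)$ in $\dbeta$ would generate distributional contributions supported on $\{V = \gamma(\dbeta)\}$, yet no such terms appear in the stated formula. This is resolved precisely by \cref{assu:nullset}: because the set of exceptional $V$ carries zero $\pi^{(\bar\beta)}$-mass at every relevant $\dbeta_0$, the locally-constant-sign reduction is valid $\pi^{(\bar\beta)}$-almost surely, so the naive chain rule computation survives intact inside the expectation.
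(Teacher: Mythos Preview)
Your proposal is correct and follows essentially the same approach as the paper: differentiate under the expectation via dominated convergence, using the moment-generating-function bound from \cref{lemma:MGF} to build an integrable dominator of the form $P(|V|)e^{\eta|V|}$, and invoke \cref{assu:nullset} to dispose of the non-smoothness of $|\gamma(\dbeta)-V|$. The paper's write-up is more explicit about the dominator (it takes the supremum of the pointwise second-derivative bound over a compact $\dbeta$-interval and verifies integrability via \cref{eq:derivative_kernel_expectations}), whereas you treat the point $\dbeta_0=0$ separately by continuity; both are minor presentational variations on the same argument.
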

\begin{proof}
We focus on the case with absolute value and leave the other to the reader
since it requires a similar but simpler argument. To simplify the notation, 
in the following we write $z$ instead of $\dbeta$. Now, define
\[
F(v,z) := \exp(- z|\gamma(z)-v|/2),
\]
and note that $z\mapsto\EEi{\bar\beta}[F(V,z)]$ is our function of interest. From
\cref{lemma:c_check_beta_C2}, we know that
$\gamma\in\mcC^2((-r(\bar\beta),r(\bar\beta)))$ under \cref{assu:c_smooth}. It follows that for each $v$, the function 
$F(v,\cdot)$ is absolutely continuous and thus almost everywhere differentiable
on $(-r(\bar\beta),r(\bar\beta))$. Therefore, $F$ admits a well-defined 
partial derivative on the set
\[
A := \{(v,z): |z|< r(\bar\beta), \gamma(z)\neq v\}.
\]
Indeed, for $(v,z)\in A$,
\[
\D{F}{z}(v,z) = -\frac{1}{2}\exp(- z|\gamma(z)-v|/2)[|\gamma(z)-v| + z\gamma'(z)\sgn(\gamma(z)-v)].
\]
Furthermore, for each $v$, the function $z\mapsto\D{F}{z}$ is of bounded 
variation. Thus, it is almost everywhere differentiable on
$(-r(\bar\beta),r(\bar\beta))$. It follows that $\DD{F}{z}$ exists on $A$ too, with
\[
\DD{F}{z}(v,z) &= -\frac{1}{2}\exp(- z|\gamma(z)-v|/2)\D{}{z}[|\gamma(z)-v| + z\gamma'(z)\sgn(\gamma(z)-v)] \\
&\phantom{='}-[|\gamma(z)-v| + z\gamma'(z)\sgn(\gamma(z)-v)]\D{}{z}\exp(- z|\gamma(z)-v|/2) \\
&= -\frac{1}{2}\exp(- z|\gamma(z)-v|/2)\sgn(\gamma(z)-v)[2\gamma'(z) + z\gamma''(z)] \\
&\phantom{='}+\frac{1}{4}\exp(- z|\gamma(z)-v|/2)[|\gamma(z)-v| + z\gamma'(z)\sgn(\gamma(z)-v)]^2.
\]

Now, our aim is to show that for $i\in\{1,2\}$,
\[
\frac{\mathrm{d}^i}{\mathrm{d}z^i}\EEi{\bar\beta}[F(V,z)] = \EEi{\bar\beta}\left[\Di{F}{z}(V,z)\right].
\]
We will do this only for the second derivative since the first one requires 
a similar argument. Fix $(v,z)\in A$. Since 
$\gamma\in\mcC^2((-r(\bar\beta),r(\bar\beta)))$, there exists $\delta=\delta(v,z)$ 
with $0<\delta<r(\bar\beta)$ such that for all $z'$ with 
$|z'-z|< \delta$, we have $(v,z')\in A$. Hence, $z'\mapsto \D{F}{z}(v,z')$ is 
$\mcC^1((z-\delta,z+\delta))$ and the MVT implies that
\[\label{eq:MVT_second_deriv}
\frac{\D{F}{z}(v,z')-\D{F}{z}(v,z)}{z'-z} = \DD{F}{z}(v,\check z),
\]
for all $z'\in(z-\delta,z+\delta)\setminus\{z\}$ and some
$\check{z}\in(\min\{z,z'\},\max\{z,z'\})$. On the other hand, for all $(v,z)\in A$,
\[\label{eq:abs_val_second_deriv_bound}
\left|\DD{F}{z}(v,z)\right| &\leq \frac{1}{2}\exp(- z|\gamma(z)-v|/2)|2\gamma'(z) + z\gamma''(z)| \\
&\phantom{='}+\frac{1}{4}\exp(- z|\gamma(z)-v|/2)[|\gamma(z)-v| + z\gamma'(z)\sgn(\gamma(z)-v)]^2 \\
&\leq \frac{1}{2}\exp(- z|\gamma(z)-v|/2)|2\gamma'(z) + z\gamma''(z)| \\
&\phantom{='} + \frac{1}{2}\exp(- z|\gamma(z)-v|/2)[(\gamma(z)-v)^2 + (z\gamma'(z))^2].
\]
Let us call the right-hand side of the above inequality
\[
G(v,z) &:= \frac{1}{2}\exp(- z|\gamma(z)-v|/2)|2\gamma'(z) + z\gamma''(z)| \\
&\phantom{='} + \frac{1}{2}\exp(- z|\gamma(z)-v|/2)[(\gamma(z)-v)^2 + (z\gamma'(z))^2].
\]
Note that, for each $v$, the function $G(v,\cdot)$ is continuous for all 
$|z|<r(\bar\beta)$. Therefore, $G(v,\cdot)$ achieves its maximum on any compact
subset of $(-r(\bar\beta),r(\bar\beta))$. In particular, for any 
$0<\varepsilon<r(\bar\beta)$,
\[
z^\star(v) := \argmax_{z\in[-r(\bar\beta)+\varepsilon,r(\bar\beta)-\varepsilon]} G(v,z).
\]
Let $G(v):=G(v,z^\star(v))$. We want to show that this function is integrable. 
We shall see that $G(\cdot,z)$ is in fact integrable for every 
$|z|<r(\bar\beta)$. Indeed, using the bound $e^{|u|}\leq e^u+e^{-u}$, we see that
\[
\exp(-z|\gamma(z)-v|/2) &\leq e^{-z\gamma(z)/2}e^{zv/2} + e^{z\gamma(z)/2}e^{-zv/2}.
\]
It follows that for $p\geq 0$,
\[\label{eq:derivative_kernel_expectations}
&\EEi{\bar\beta}[|V|^p\exp(- z|\gamma(z)-V|/2)] \\
&\leq e^{z\gamma(z)/2}\EEi{\bar\beta}[|V|^pe^{-zV/2}] + e^{-z\gamma(z)/2}\EEi{\bar\beta}[|V|^pe^{zV/2}] \\
&= \frac{e^{z\gamma(z)/2}\mcZ(\bar\beta-z/2)}{\mcZ(\bar\beta)}\EEi{\bar\beta-z/2}[|V|^p] + \frac{e^{-z\gamma(z)/2}\mcZ(\bar\beta+z/2)}{\mcZ(\bar\beta)}\EEi{\bar\beta+z/2}[|V|^p],
\]
and the right-hand side is finite for $|z|<r(\bar\beta)$ by 
\cref{lemma:finite_norm_const,lemma:MGF}. Hence, for $|z|<r(\bar\beta)$,
\[
\EEi{\bar\beta}[G(V,z)] &\leq \frac{1}{2}(|2\gamma'(z) + z\gamma''(z)|+(z\gamma'(z))^2)\EEi{\bar\beta}[\exp(- z|\gamma(z)-V|/2)] \\
&\phantom{='} + \frac{1}{2}\EEi{\bar\beta}[\exp(- z|\gamma(z)-V|/2)(\gamma(z)-V)^2] \\
&\leq \frac{1}{2}(|2\gamma'(z) + z\gamma''(z)|+(z\gamma'(z))^2 + 2\gamma(z)^2)\EEi{\bar\beta}[\exp(- z|\gamma(z)-V|/2)] \\
&\phantom{='} + \EEi{\bar\beta}[V^2\exp(- z|\gamma(z)-V|/2)],
\]
and the right-hand side is finite by \cref{eq:derivative_kernel_expectations}.
Since $|z^*(v)|<r(\bar\beta)$, we conclude that $G(v):=G(v,z^\star(v))$ is 
integrable. And since this function maximizes the right-hand side of 
\cref{eq:MVT_second_deriv}, we see that for all $v$ and 
$|z|<r(\bar\beta)-\varepsilon$,
\[
\left|\frac{\ind\{\gamma(z')\neq v\}\D{F}{z}(v,z')-\ind\{\gamma(z)\neq v\}\D{F}{z}(v,z)}{z'-z} \right| \leq G(v),
\]
Furthermore, since $(v,z)\in A$ implies $(v,z')\in A$ when $|z'-z|<\delta$,
\[
&\lim_{z'\to z} \frac{\ind\{\gamma(z')\neq v\}\D{F}{z}(v,z')-\ind\{\gamma(z)\neq v\}\D{F}{z}(v,z)}{z'-z} \\
&=\lim_{z'\to z} \frac{\ind\{\gamma(z)\neq v\}\D{F}{z}(v,z')-\ind\{\gamma(z)\neq v\}\D{F}{z}(v,z)}{z'-z} \\
&= \ind\{\gamma(z)\neq v\}\lim_{z'\to z} \frac{\D{F}{z}(v,z')-\D{F}{z}(v,z)}{z'-z}\\
&= \ind\{\gamma(z)\neq v\}\DD{F}{z}(v,z).
\]
Finally, since for each $z$, the set $\{v:\gamma(z)=v\}$ is $\pi_0$-null by 
\cref{assu:nullset}---and therefore it is $\target^{(\bar\beta)}$-null---the
dominated convergence theorem shows that for all $|z|<r(\bar\beta)-\varepsilon$,
\[
\dder{}{z}\EEi{\bar\beta}\left[F(V,z)\right] &= \der{}{z}\EEi{\bar\beta}\left[\D{F}{z}(V,z)\right] \\
&= \lim_{z'\to z}\frac{\EEi{\bar\beta}\left[\D{F}{z}(V,z')-\D{F}{z}(V,z)\right]}{z'-z} \\
&= \lim_{z'\to z}\frac{\EEi{\bar\beta}\left[\ind\{\gamma(z')\neq V\}\D{F}{z}(V,z')-\ind\{\gamma(z)\neq V\}\D{F}{z}(V,z)\right]}{z'-z} \\
&= \lim_{z'\to z}\EEi{\bar\beta}\left[\frac{\ind\{\gamma(z')\neq V\}\D{F}{z}(V,z')-\ind\{\gamma(z)\neq V\}\D{F}{z}(V,z)}{z'-z}\right] \\
&= \EEi{\bar\beta}\left[\lim_{z'\to z}\frac{\ind\{\gamma(z')\neq V\}\D{F}{z}(V,z')-\ind\{\gamma(z)\neq V\}\D{F}{z}(V,z)}{z'-z}\right] \\
&= \EEi{\bar\beta}\left[\ind\{\gamma(z)\neq V\}\DD{F}{z}(V,z)\right] \\
&= \EEi{\bar\beta}\left[\DD{F}{z}(V,z)\right],
\]
as required. And since the result holds for any $\varepsilon>0$, we conclude that 
the derivative exists for all $|z|<r(\bar\beta)$.
\end{proof}

\begin{lemma}\label{lemma:ratio_expansion}
Let $x:\mcO\to\reals_+$, $y:\mcO\to(0,\infty)$, with $x,y\in\mcC^2(\mcO)$ for some
open set $\mcO\subset\reals$. Consider
\[
f(z) := \frac{x(z)}{y(z)}.
\]
Then,
\[
\der{f}{z}(z) &= \frac{1}{y(z)^2}(y(z) x'(z) - x(z) y'(z)) \\
\dder{f}{z}(z) &= \frac{1}{y(z)^3} (y(z)^2 x''(z) - y(z) (2 x'(z) y'(z) + x(z) y''(z)) + 2 x(z) y'(z)^2).
\]
Furthermore, if $z_0\in\mcO$ is such that $x(z_0)=y(z_0)=1$ and 
$x''(z_0)=y''(z_0)$, then
\[
f(z) &= 1 + (x'(z_0) - y'(z_0))(z - z_0) - y'(z_0) (x'(z_0)-y'(z_0))(z - z_0)^2 \\
&\phantom{=1'}+ o((z-z_0)^2).
\]
\end{lemma}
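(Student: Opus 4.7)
The proof is essentially a direct calculation, so my plan is to compute $f'$ and $f''$ via the quotient rule and then substitute into a second-order Taylor expansion around $z_0$. Since $x, y \in \mathcal{C}^2(\mathcal{O})$ and $y > 0$, the quotient $f = x/y$ is also in $\mathcal{C}^2(\mathcal{O})$, so Taylor's theorem with Peano remainder is applicable.

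For the first derivative, the quotient rule gives $f'(z) = (y(z) x'(z) - x(z) y'(z))/y(z)^2$ immediately. For the second derivative, I would differentiate the numerator and denominator of $f'$ separately. Let $u(z) := y(z) x'(z) - x(z) y'(z)$ and $v(z) := y(z)^2$. Then $u'(z) = y(z) x''(z) - x(z) y''(z)$ (the two cross terms cancel) and $v'(z) = 2 y(z) y'(z)$. Applying the quotient rule once more,
\[
f''(z) = \frac{u'(z) v(z) - u(z) v'(z)}{v(z)^2} = \frac{y(z)^2 x''(z) - y(z)(2 x'(z) y'(z) + x(z) y''(z)) + 2 x(z) y'(z)^2}{y(z)^3},
\]
after collecting terms and dividing out a factor of $y(z)$. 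This establishes the two formulas in the statement.

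For the expansion, I would evaluate $f$, $f'$, and $f''$ at $z_0$ using the assumptions $x(z_0) = y(z_0) = 1$ and $x''(z_0) = y''(z_0)$. Direct substitution yields $f(z_0) = 1$, $f'(z_0) = x'(z_0) - y'(z_0)$, and
\[
f''(z_0) = x''(z_0) - 2 x'(z_0) y'(z_0) - y''(z_0) + 2 y'(z_0)^2 = -2 y'(z_0) (x'(z_0) - y'(z_0)),
\]
where the second equality uses $x''(z_0) = y''(z_0)$ to cancel the pure second-derivative terms. Inserting these into the second-order Taylor expansion $f(z) = f(z_0) + f'(z_0)(z - z_0) + \tfrac{1}{2} f''(z_0)(z - z_0)^2 + o((z - z_0)^2)$ gives exactly the claimed identity. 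There is no real obstacle here beyond careful bookkeeping; the only substantive point is noticing that the hypothesis $x''(z_0) = y''(z_0)$ kills the purely quadratic contribution, leaving the mixed product $-y'(z_0)(x'(z_0) - y'(z_0))$ as the coefficient of $(z-z_0)^2$.
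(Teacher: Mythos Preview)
Your proposal is correct and follows essentially the same approach as the paper: compute $f'$ and $f''$ via the quotient rule, then substitute into a second-order Taylor expansion with Peano remainder and simplify using $x(z_0)=y(z_0)=1$ and $x''(z_0)=y''(z_0)$. The paper merely gestures at ``standard calculus'' for the derivative formulas, whereas you spell out the intermediate steps, but the argument is identical.
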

\begin{proof}
The expressions for the derivatives can be directly obtained via standard calculus.
Now, a second order Taylor expansion of $f$ around $z_0$ with Peano remainder gives
\[
f(z) &= \frac{x(z_0)}{y(z_0)} + \frac{1}{y(z_0)^2}(y(z_0) x'(z_0) - x(z_0) y'(z_0)) (z - z_0) \\
&\phantom{='}+ \frac{1}{2 y(z_0)^3} (y(z_0)^2 x''(z_0) - y(z_0) (2 x'(z_0) y'(z_0) + x(z_0) y''(z_0)) + 2 x(z_0) y'(z_0)^2)(z - z_0)^2 \\
&\phantom{='}+ o((z-z_0)^2) \\
&= 1 + (x'(z_0) - y'(z_0))(z - z_0) \\
&\phantom{='}+\frac{1}{2} (x''(z_0) - 2 x'(z_0) y'(z_0) - y''(z_0) + 2 y'(z_0)^2)(z - z_0)^2 + o((z-z_0)^2) \\
&= 1 + (x'(z_0) - y'(z_0))(z - z_0) - y'(z_0) (x'(z_0)-y'(z_0))(z - z_0)^2 + o((z-z_0)^2),
\]
as required.
\end{proof}

\begin{proof}[\textbf{Proof of \cref{thm:rejection_expansions}}]
The idea is to re-parametrize $(\beta,\beta')\mapsto (\bar\beta, \dbeta)$,
then fix $\bar\beta$ and analyze the asymptotics on $\dbeta$. In turn, we do this
by applying \cref{lemma:ratio_expansion} to the ratio expressions for the jump
probabilities in 
\cref{lemma:acc_prob_as_ratio}. When separating upwards and downwards jump, these
expressions become
\[\label{eq:acc_prob_as_ratio_reparam}
\acc{\beta,\beta'} &= \frac{\EEi{\bar\beta}[\exp(-\dbeta|c'(\check\beta) - V|/2)]}{\EEi{\bar\beta}[\exp(-\dbeta (c'(\check\beta)- V)/2)]} \\
\acc{\beta',\beta} &= \frac{\EEi{\bar\beta}[\exp(-\dbeta|c'(\check\beta) - V|/2)]}{\EEi{\bar\beta}[\exp(\dbeta (c'(\check\beta)- V)/2)]}.
\]
Now, applying \cref{lemma:ratio_expansion} requires obtaining the derivatives of 
numerators and denominators in \cref{eq:acc_prob_as_ratio_reparam} evaluated at
$\dbeta=0$. Under the assumptions considered, \cref{lemma:deriv_expectations}
shows that the common numerator satisfies,
\[
\der{}{\dbeta}\EEi{\bar\beta}[\exp(-\dbeta |c'(\check\beta) - V|/2)]_{\dbeta=0} &= -\frac{1}{2}\EEi{\bar\beta}[|c'(\bar\beta) - V|] \\
\dder{}{\dbeta}\EEi{\bar\beta}[\exp(-\dbeta |c'(\check\beta) - V|/2)]_{\dbeta=0} &= \frac{1}{4}\EEi{\bar\beta}[(c'(\bar\beta) - V)^2],
\]
where we used the fact that the derivative of $\dbeta\mapsto c'(\check\beta(\dbeta))$ is zero at $\dbeta=0$ (\cref{lemma:c_check_beta_C2}).
Similarly, for the denominator in the upward jump probability we get
\[
\der{}{\dbeta}\EEi{\bar\beta}[\exp(-\dbeta (c'(\check\beta) - V)/2)]_{\dbeta=0} &= -\frac{1}{2}\EEi{\bar\beta}[c'(\bar\beta) - V] \\
\dder{}{\dbeta}\EEi{\bar\beta}[\exp(-\dbeta (c'(\check\beta) - V)/2)]_{\dbeta=0} &= \frac{1}{4}\EEi{\bar\beta}[(c'(\bar\beta) - V)^2],
\]
while the denominator for the downward jump yields
\[
\der{}{\dbeta}\EEi{\bar\beta}[\exp(\dbeta (c'(\check\beta) - V)/2)]_{\dbeta=0} &= \frac{1}{2}\EEi{\bar\beta}[c'(\bar\beta) - V] \\
\dder{}{\dbeta}\EEi{\bar\beta}[\exp(\dbeta (c'(\check\beta) - V)/2)]_{\dbeta=0} &= \frac{1}{4}\EEi{\bar\beta}[(c'(\bar\beta) - V)^2].
\]
Since the second derivatives match for numerators and denominators on both 
expressions in \cref{eq:acc_prob_as_ratio_reparam}, we can now apply 
\cref{lemma:ratio_expansion}. The upward jump probability can be expanded as
\[
\acc{\beta,\beta'} &= 1 -\frac{1}{2}\EEi{\bar\beta}[|c'(\bar\beta) - V| - (c'(\bar\beta) - V)]\dbeta \\
&\phantom{='}-\frac{1}{4} \EEi{\bar\beta}[c'(\bar\beta) - V] \EEi{\bar\beta}[|c'(\bar\beta) - V| - (c'(\bar\beta) - V)]\dbeta^2 + O(\dbeta^3) \\
&= 1 - \EEi{\bar\beta}[\max\{0,V-c'(\bar\beta)\}]\dbeta \\
&\phantom{='} + \frac{1}{2} \EEi{\bar\beta}[V-c'(\bar\beta)] \EEi{\bar\beta}[\max\{0,V-c'(\bar\beta)\}]\dbeta^2 + o(\dbeta^2).
\]
The downward jump probability can be similarly expanded as
\[
\acc{\beta',\beta} &= 1 -\frac{1}{2}\EEi{\bar\beta}[|c'(\bar\beta) - V| + (c'(\bar\beta) - V)]\dbeta \\
&\phantom{='}+\frac{1}{4} \EEi{\bar\beta}[c'(\bar\beta) - V]\EEi{\bar\beta}[|c'(\bar\beta) - V| + (c'(\bar\beta) - V)]\dbeta^2 + o(\dbeta^2) \\
&= 1 +\EEi{\bar\beta}[\min\{0,V-c'(\bar\beta)\}]\dbeta \\
&\phantom{='}+\frac{1}{2} \EEi{\bar\beta}[V-c'(\bar\beta)]\EEi{\bar\beta}[\min\{0,V-c'(\bar\beta)\} ] \dbeta^2 + o(\dbeta^2).
\]
The expressions for the rejection probabilities then follow immediately.
\end{proof}

\begin{proof}[\textbf{Proof of \cref{cor:rhoprime_eqs}}]
This is direct from \cref{thm:rejection_expansions}, since
\[
\frac{\rej{\beta,\beta'}}{\dbeta} &= \EEi{\bar\beta}[\max\{0,V-c'(\bar\beta)\}] + O(\dbeta) \\
\frac{\rej{\beta',\beta}}{\dbeta} &= -\EEi{\bar\beta}[\min\{0,V-c'(\bar\beta)\}] + O(\dbeta).
\]
Taking the limit $\dbeta\to0$ gives the expressions for the directional rates,
since $\beta=\bar\beta=\beta'$ in the limit. The expressions in
\cref{eq:rhoprime_diff_abs_vals,eq:rhoprime_sum_abs_vals} can be obtained by 
applying the identities
\[
\max\{0,v\} = \frac{v + |v|}{2}, \qquad \min\{0,v\} = \frac{v - |v|}{2}.
\]
\end{proof}

\begin{lemma}\label{lemma:mad_bounds}
For all $\beta\in[0,1]$,
\[
\EEi{\beta}[|V - \EEi{\beta}[V]|] \leq \EEi{\beta}[|V - V'|] \leq 2\EEi{\beta}[|V - \EEi{\beta}[V]|].
\]
\end{lemma}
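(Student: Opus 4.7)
The plan is to prove both inequalities using only elementary facts about absolute values: Jensen's inequality for the lower bound, and the triangle inequality for the upper bound. Since $V$ and $V'$ are independent with common distribution (the push-forward of $\pibeta$ under $V$), and $\mu := \EEi{\beta}[V]$ is a finite constant (finiteness follows from \cref{lemma:MGF} applied to $a=1$, at least for $\beta\in(0,1)$; the endpoint cases can be handled via \cref{assu:integrability_at_extremes} when needed, although the bound is trivial if either side is infinite).

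For the left inequality, I would condition on $V$ and apply the conditional Jensen's inequality to the convex map $u \mapsto |u|$:
\[
\EEi{\beta}[|V - V'|] = \EEi{\beta}\bigl[\EE[|V - V'| \mid V]\bigr] \geq \EEi{\beta}\bigl[|V - \EE[V' \mid V]|\bigr] = \EEi{\beta}[|V - \mu|],
\]
where the last equality uses independence of $V$ and $V'$ together with $\EE[V'] = \mu$.

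For the right inequality, I would add and subtract $\mu$ and apply the triangle inequality pointwise, then take expectations and use that $V'$ has the same distribution as $V$:
\[
\EEi{\beta}[|V - V'|] \leq \EEi{\beta}[|V - \mu|] + \EEi{\beta}[|\mu - V'|] = 2\,\EEi{\beta}[|V - \mu|].
\]

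There is no serious obstacle here; the only mildly delicate point is ensuring $\mu$ is finite so that the intermediate quantities make sense, which reduces to integrability of $V$ under $\pibeta$ and is already established in the paper.
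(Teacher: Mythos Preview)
Your proposal is correct and essentially matches the paper's proof. The paper writes the lower bound as an explicit double integral and moves the absolute value inside the inner integral (labeling it ``triangle inequality''), which is exactly your conditional Jensen step; the upper bound in the paper is identical to yours, adding and subtracting the mean and using the triangle inequality together with $V'\eqd V$.
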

\begin{proof}
Since $V'\eqd V$, it holds that
\[
\EEi{\beta}[|V - \EEi{\beta}[V]|] &= \int \pi^{(\beta)}(\dee v) \left| v - \int\pi^{(\beta)}(\dee v') v' \right| \\
&= \int \pi^{(\beta)}(\dee v) \left| \int\pi^{(\beta)}(\dee v')(v - v') \right| \\
&\leq \int \pi^{(\beta)}(\dee v)  \int\pi^{(\beta)}(\dee v')|v - v'| && (\text{Triangle ineq.})\\
&= \int \int\pi^{(\beta)}(\dee v)\pi^{(\beta)}(\dee v')|v - v'| && (\text{Fubini's})\\
&=\EEi{\beta}[|V - V'|].
\]
On the other hand,
\[
\EEi{\beta}[|V - V'|] &= \EEi{\beta}[|(V - \EEi{\beta}[V]) - (V'-\EEi{\beta}[V])|] \\
&\leq 2\EEi{\beta}[|V - \EEi{\beta}[V]|].
\]
\end{proof}

\begin{lemma}\label{lemma:unif_bound_V_moments_with_extremes}
Let $f:\reals\to\reals_+$ be a non-negative function such that $f(V)$ is integrable
under both $\pi_0$ and $\target$. Then, for all $\beta\in[0,1]$,
\[\label{eq:unif_upper_bound_functions}
\mcZ(\beta)\EEi{\beta}[f(V)] \leq \beta\mcZ(1) \EEi{1}[f(V)] + (1-\beta)\EEi{0}[f(V)].
\]
Consequently, under \cref{assu:integrability_at_extremes},
\bitems
\item There exist constants $\mcZ_{\text{min}},\mcZ_{\text{max}}$ such that for all $\beta\in[0,1]$,
\[
0<\mcZ_{\text{min}}\leq\mcZ(\beta)\leq\mcZ_{\text{max}}<\infty.
\]
\item For $p\in[0,2]$ and all $\beta\in[0,1]$,
\[
\EEi{\beta}[|V|^p] \leq M_p := \frac{1}{\mcZ_\text{min}}\max\left\{\mcZ(1)\EEi{1}[|V|^p], \EEi{0}[|V|^p] \right\}.
\]
\eitems
\end{lemma}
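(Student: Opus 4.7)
The proof proceeds in three steps. First, I establish the master inequality \cref{eq:unif_upper_bound_functions} via a pointwise convexity argument combined with the standard change-of-measure identity
$$\mcZ(\beta)\EEi{\beta}[f(V)] = \int f(V(x))\pi_0(x) e^{-\beta V(x)}\dee x = \EEi{0}[f(V)e^{-\beta V}],$$
valid for any nonnegative measurable $f$. For each fixed $x$, the map $\beta \mapsto e^{-\beta V(x)}$ is convex on $[0,1]$, since its second derivative $V(x)^2 e^{-\beta V(x)}$ is nonnegative. Writing $\beta = \beta\cdot 1 + (1-\beta)\cdot 0$ then gives the pointwise bound $e^{-\beta V(x)} \leq \beta e^{-V(x)} + (1-\beta)$. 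Multiplying by $f(V(x)) \geq 0$, integrating against $\pi_0$, and applying the identity above once more at $\beta = 1$ to identify $\EEi{0}[f(V)e^{-V}] = \mcZ(1)\EEi{1}[f(V)]$ yields \cref{eq:unif_upper_bound_functions}.

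Second, I derive uniform bounds on $\mcZ$. The upper bound $\mcZ_\text{max}$ follows by setting $f \equiv 1$ in \cref{eq:unif_upper_bound_functions}, which gives $\mcZ(\beta) \leq \beta\mcZ(1) + (1-\beta) \leq \max\{1,\mcZ(1)\}$. For the lower bound, \cref{assu:integrability_at_extremes} together with the Cauchy--Schwarz inequality implies $\EEi{0}[|V|] < \infty$, so Jensen's inequality applied to the convex function $u \mapsto e^{-\beta u}$ gives
$$\mcZ(\beta) = \EEi{0}[e^{-\beta V}] \geq e^{-\beta\EEi{0}[V]} \geq e^{-\max\{0,\EEi{0}[V]\}} =: \mcZ_\text{min} > 0,$$
uniformly in $\beta \in [0,1]$.

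Third, for the moment bound, I verify that $f(v) = |v|^p$ with $p \in [0,2]$ satisfies the hypothesis of the master inequality. Since $|v|^p \leq 1 + v^2$ on $\reals$, \cref{assu:integrability_at_extremes} ensures both $\EEi{0}[|V|^p]$ and $\EEi{1}[|V|^p]$ are finite. Applying \cref{eq:unif_upper_bound_functions} with $f(v) = |v|^p$ and bounding the convex combination on the right by the maximum of its two terms gives
$$\mcZ(\beta)\EEi{\beta}[|V|^p] \leq \max\{\mcZ(1)\EEi{1}[|V|^p],\EEi{0}[|V|^p]\}.$$
Dividing both sides by $\mcZ(\beta) \geq \mcZ_\text{min}$ yields the claimed bound by $M_p$.

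The only delicate point is the lower bound on $\mcZ$; in principle one could alternatively use continuity of $\mcZ$ on $[0,1]$ (by dominated convergence, using the integrable envelope $1 + e^{-V}$) together with compactness to extract a positive minimum, but the Jensen-based route is cleaner and makes the constant $\mcZ_\text{min}$ explicit in terms of the reference $\pi_0$.
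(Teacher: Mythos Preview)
Your proof is correct and follows essentially the same approach as the paper: the convexity bound $e^{-\beta V}\leq \beta e^{-V}+(1-\beta)$ for the master inequality, $f\equiv 1$ for $\mcZ_{\text{max}}$, Jensen's inequality for $\mcZ_{\text{min}}$, and bounding the convex combination by the maximum for the moment bound. Your version is slightly more explicit in justifying intermediate steps (e.g., invoking $|v|^p\leq 1+v^2$ and Cauchy--Schwarz to secure finiteness of $\EEi{0}[|V|]$), but the argument is otherwise identical.
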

\begin{proof}
Note that we can trivially write $\beta V$ as a convex combination
\[
\beta V = \beta V + (1-\beta)\cdot 0.
\]
And because $x\mapsto e^{-x}$ is convex,
\[
e^{-\beta V} \leq \beta e^{-V} + (1-\beta)
\]
Hence,
\[
\mcZ(\beta)\EEi{\beta}[f(V)] &= \EEi{0}[f(V) e^{-\beta V}] \\
&\leq \beta \EEi{0}[f(V)e^{-V}] + (1-\beta)\EEi{0}[f(V)] \\
&= \beta\mcZ(1) \EEi{1}[f(V)] + (1-\beta)\EEi{0}[f(V)].
\]

We can now tackle the second set of statements. 
The existence of $\mcZ_{\text{max}}$ follows from \cref{lemma:finite_norm_const}
and \cref{eq:unif_upper_bound_functions} with $f\equiv1$.
Let us now suppose \cref{assu:integrability_at_extremes} is in place. By the
convexity of $x\mapsto e^{-\beta x}$,
\[
\mcZ(\beta)=\EEi{0}[e^{-\beta V}] \geq e^{-\beta\EEi{0}[V]} \geq e^{-\EEi{0}[V]\ind\{\EEi{0}[V]>0\}} =: \mcZ_{\text{min}} > 0.
\]
Now, for $f_p(v)=|v|^p$ with $p\in[0,2]$, \cref{assu:integrability_at_extremes}
means $f_p$ is integrable under both $\pi_0$ and $\target$. Hence,
\[
\EEi{\beta}[|V|^p] &\leq \frac{1}{\mcZ(\beta)}\left(\beta\mcZ(1) \EEi{1}[|V|^p] + (1-\beta)\EEi{0}[|V|^p]\right) \\
&\leq \frac{1}{\mcZ_\text{min}}\left(\beta\mcZ(1) \EEi{1}[|V|^p] + (1-\beta)\EEi{0}[|V|^p]\right) \\
&\leq \frac{1}{\mcZ_\text{min}}\max\left\{\mcZ(1)\EEi{1}[|V|^p], \EEi{0}[|V|^p] \right\},
\]
as required.
\end{proof}

\begin{lemma}\label{lemma:unif_bound_dder_rejection}
Under
\cref{assu:c_smooth,assu:nullset,assu:integrability_at_extremes,assu:bounded_c},
there exist functions $b_1$ and $b_2$ such that for any partition $\mcP$, 
\[
\left|\D{\rho_i}{\dbeta}(\bar\beta_i, \dbeta_i)\right| \leq b_1(\|\mcP\|) && 
\left|\DD{\rho_i}{\dbeta}(\bar\beta_i, \dbeta_i)\right| \leq b_2(\|\mcP\|)
\]
uniformly on $i\in\indsetone$, and $b_1,b_2$ have finite limits for $\|\mcP\|\to0$.
\end{lemma}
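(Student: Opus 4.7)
The plan is to combine three ingredients: (i) explicit formulas for $\D{\rho_i}{\dbeta}$ and $\DD{\rho_i}{\dbeta}$ obtained by applying \cref{lemma:ratio_expansion} to the ratio representations in \cref{lemma:acc_prob_as_ratio} together with the derivative identities of \cref{lemma:deriv_expectations}; (ii) the uniform moment bounds on $V$ given by \cref{lemma:unif_bound_V_moments_with_extremes}; and (iii) the uniform bounds on $c$, $c'$, $c''$ provided by \cref{assu:bounded_c}. Since the symmetrized $\rho_i$ is an average of the two directional rejection probabilities, it suffices to bound derivatives of each direction separately.

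First, fix $i\in\indsetone$, write $\bar\beta=\bar\beta_i$, $z=\dbeta_i$, $\gamma(z)=c'(\check\beta(z))$, and express
\[
1-\rho_i(\bar\beta,z) = \frac{N(z)}{D(z)}, \qquad N(z):=\EEi{\bar\beta}[e^{-z|\gamma(z)-V|/2}], \qquad D(z):=\EEi{\bar\beta}[e^{-z(\gamma(z)-V)/2}],
\]
and the analogous ratio for the reverse direction. \cref{lemma:c_check_beta_C2} combined with \cref{assu:bounded_c} gives a uniform bound $|\gamma^{(k)}(z)|\leq C_k$ for $k\in\{0,1,2\}$ and $|z|\leq \|\mcP\|$. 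By \cref{lemma:deriv_expectations}, both $N'(z),N''(z),D'(z),D''(z)$ are expectations under $\pibar{\beta}$ of exponentials $e^{-z|\gamma(z)-V|/2}$ or $e^{-z(\gamma(z)-V)/2}$ multiplied by polynomials in $(V-\gamma(z))$ and $(\gamma'(z),\gamma''(z))$ of degree at most two.

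Next, I uniformly bound every such expectation. Absorb the constant factor $e^{\pm z\gamma(z)/2}$ (bounded by $e^{C_0\|\mcP\|/2}$) and expand the resulting polynomial in powers of $V$; the typical term becomes
\[
\EEi{\bar\beta}\!\bigl[|V|^p e^{\pm zV/2}\bigr] = \frac{\mcZ(\bar\beta\mp z/2)}{\mcZ(\bar\beta)}\,\EEi{\bar\beta\mp z/2}[|V|^p],\qquad p\in\{0,1,2\}.
\]
By \cref{lemma:unif_bound_V_moments_with_extremes}, $\EEi{\beta'}[|V|^p]\leq M_p$ for all $\beta'\in[0,1]$, and $\mcZ_\text{min}\leq\mcZ(\beta')\leq\mcZ_\text{max}$; hence each such expectation is at most $(\mcZ_\text{max}/\mcZ_\text{min})M_p$. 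Combining over finitely many terms, $|N'(z)|,|N''(z)|,|D'(z)|,|D''(z)|$ are bounded by a function $A(\|\mcP\|)$ depending only on $\|\mcP\|,\bar c_0,\bar c_1,\bar c_2,\mcZ_\text{min},\mcZ_\text{max}$, and the $M_p$.

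For the lower bound on $D(z)$ needed by \cref{lemma:ratio_expansion}, a direct computation using $\mcZ(\beta)=\mcZ(\bar\beta)\EEi{\bar\beta}[e^{(\bar\beta-\beta)V}]$ gives $D(z)=e^{-z\gamma(z)/2}\,\mcZ(\bar\beta+z/2)/\mcZ(\bar\beta)\geq e^{-C_0\|\mcP\|/2}\mcZ_\text{min}/\mcZ_\text{max}=:D_\star(\|\mcP\|)>0$, uniformly in $i$ and in $\mcP$ as long as $\|\mcP\|$ is bounded; an identical argument handles the denominator for the reverse direction. Feeding these into \cref{lemma:ratio_expansion} yields
\[
\left|\D{\rho_i}{\dbeta}\right|\leq \frac{2A(\|\mcP\|)}{D_\star(\|\mcP\|)^2}=: b_1(\|\mcP\|),\qquad \left|\DD{\rho_i}{\dbeta}\right|\leq \frac{6A(\|\mcP\|)(A(\|\mcP\|)+1)}{D_\star(\|\mcP\|)^3}=: b_2(\|\mcP\|).
\]
Both bounds are uniform in $i$, and since every constituent $A(\|\mcP\|)$ and $D_\star(\|\mcP\|)$ admits a finite, strictly positive limit as $\|\mcP\|\downarrow 0$ (one obtains $A(0)=(\mcZ_\text{max}/\mcZ_\text{min})\sum_p M_p$ up to constants depending on $\bar c_k$, and $D_\star(0)=\mcZ_\text{min}/\mcZ_\text{max}$), so do $b_1$ and $b_2$.

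The main obstacle is purely bookkeeping: \cref{lemma:deriv_expectations} yields several distinct exponential expectations, and each must be reduced to the canonical form $\EEi{\bar\beta}[|V|^p e^{\pm zV/2}]$ via the change-of-measure identity $e^{-z\gamma(z)/2}\mcZ(\bar\beta\pm z/2)/\mcZ(\bar\beta)$ before \cref{lemma:unif_bound_V_moments_with_extremes} can be applied. A secondary subtlety is ensuring the moment bound is invoked only for $p\leq 2$, which is exactly the degree ceiling of the polynomials produced by the second derivative of the ratio in \cref{lemma:ratio_expansion}---matching the integrability hypothesis in \cref{assu:integrability_at_extremes} without requiring higher moments of $V$.
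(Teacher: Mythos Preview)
Your proposal is correct and follows essentially the same route as the paper: write each acceptance probability as a ratio via \cref{lemma:acc_prob_as_ratio}, differentiate with \cref{lemma:ratio_expansion} and \cref{lemma:deriv_expectations}, lower-bound the denominator through the $\mcZ$-ratio identity, and then reduce every remaining expectation to $\EEi{\bar\beta\mp z/2}[|V|^p]$ with $p\le 2$ so that \cref{lemma:unif_bound_V_moments_with_extremes} applies. Two minor corrections: the change-of-measure identity gives $D(z)=e^{-z\gamma(z)/2}\mcZ(\bar\beta-z/2)/\mcZ(\bar\beta)$ (sign in the shift; harmless for the bound), and a uniform bound on $\gamma''(z)$ as $z\to 0$ genuinely requires control of $c'''$---this is why \cref{assu:bounded_c} lists constants up to $\bar c_3$ and why the paper's bound involves $c_3$---so your dependence list ``$\bar c_0,\bar c_1,\bar c_2$'' is missing one entry.
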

\begin{proof}
From the definition of $\bar\rho$ (\cref{eq:def_sym_rej_arbitrary_pair}) 
follows that for $i\in\{1,2\}$,
\[
\left|\Di{}{\dbeta}\bar\rho_{\beta,\beta'}\right| \leq \frac{1}{2}\left(\left|\Di{}{\dbeta}\acc{\beta,\beta'}\right| + \left|\Di{}{\dbeta}\acc{\beta',\beta} \right|\right).
\]
Thus, it suffices to bound both terms on the right. By symmetry, we can focus only 
on the ones for the forward jump. Now, recall from \cref{lemma:acc_prob_as_ratio} 
that the acceptance probabilities can be expressed as ratios of functions,
\[
\acc{\beta,\beta'} = \frac{\EEi{\bar\beta}[\exp(-\dbeta|c'(\check\beta) - V|/2)]}{\EEi{\bar\beta}[\exp(-\dbeta (c'(\check\beta)- V)/2)]} = \frac{x(z)}{y(z)},
\]
where, for clarity of exposition, we have written $z=\dbeta$ and made the 
dependence on $\bar\beta$ implicit. By \cref{lemma:ratio_expansion},
\[\label{eq:derivatives_acc_prob}
\D{\acc{\beta,\beta'}}{\dbeta} &= \frac{1}{y(z)^2}(y(z) x'(z) - x(z) y'(z)) \\
\DD{\acc{\beta,\beta'}}{\dbeta} &= \frac{1}{y(z)^3} (y(z)^2 x''(z) - y(z) (2 x'(z) y'(z) + x(z) y''(z)) + 2 x(z) y'(z)^2).
\]
To prove that these derivatives are uniformly bounded, we will first show that the
denominator $y(z)$ is uniformly bounded away from zero. Indeed, from
\cref{eq:acc_prob_as_ratio_denom} in \cref{lemma:acc_prob_as_ratio}, we know that
\[\label{eq:lower_bound_denominator}
y(z) = \EEi{\bar\beta}[\exp(-\dbeta (c'(\check\beta)- V)/2)] &=  \frac{\mcZ(\bar\beta-\dbeta/2)}{\mcZ(\bar\beta)}e^{-c'(\check\beta)\dbeta /2} \\
&\geq \frac{\mcZ_{\text{min}}}{\mcZ_{\text{max}}}e^{-|c'(\check\beta)|\dbeta /2} \\
&\geq \frac{\mcZ_{\text{min}}}{\mcZ_{\text{max}}}e^{-c_1\|\mcP\| /2},
\]
where the first bound follows from \cref{lemma:unif_bound_V_moments_with_extremes},
while the second uses \cref{assu:bounded_c} and the definition of $\|\mcP\|$.

To continue, note that the numerators on the right-hand side of 
\cref{eq:derivatives_acc_prob} are, respectively, first and second order 
polynomials on the functions $(x,y)$ and their derivatives $(x',x'',y',y'')$.
Hence, we proceed by showing that the absolute value of each of these functions
is uniformly bounded under the assumptions given. Indeed, note that
\[
0 < x(z) &= \EEi{\bar\beta}[\exp(-\dbeta |c'(\check\beta)- V|/2)] \\
&\leq \EEi{\bar\beta}[\exp(-\dbeta (c'(\check\beta)- V)/2)] \\
&= y(z) \\
&\leq \frac{\mcZ_\text{max}}{\mcZ_\text{min}}e^{c_1\|\mcP\| /2},
\]
where we used an argument similar to the one for \cref{eq:lower_bound_denominator}.
To continue with the derivatives, recall the expressions given by 
\cref{lemma:deriv_expectations}
\[
x'(z) &=-\frac{1}{2}\EEi{\bar\beta}[\exp(- \dbeta|\gamma(\dbeta)-V|/2)(|\gamma(\dbeta)-V| + \dbeta\gamma'(\dbeta)\sgn(\gamma(\dbeta)-V))] \\
y'(z) &= -\frac{1}{2}\EEi{\bar\beta}[\exp(- \dbeta(\gamma(\dbeta)-V)/2)(\gamma(\dbeta)-V + \dbeta\gamma'(\dbeta))],
\]
where $\gamma(\dbeta):=c'(\check{\beta}(\dbeta))$. Now,
\[
|x'(z)| &\leq \frac{1}{2}\EEi{\bar\beta}[\exp(- \dbeta|\gamma(\dbeta)-V|/2)||\gamma(\dbeta)-V| + \dbeta\gamma'(\dbeta)\sgn(\gamma(\dbeta)-V) |] \\
&\leq  \frac{1}{2}\EEi{\bar\beta}[\exp(- \dbeta(\gamma(\dbeta)-V)/2)||\gamma(\dbeta)-V| + \dbeta\gamma'(\dbeta)\sgn(\gamma(\dbeta)-V) |] \\
&\leq  \frac{1}{2}\EEi{\bar\beta}[\exp(- \dbeta(\gamma(\dbeta)-V)/2)(|\gamma(\dbeta)-V| + \dbeta|\gamma'(\dbeta)|)].
\]
Note that $|y'|$ admits the same bound,
\[
|y'(z)| &\leq \frac{1}{2}\EEi{\bar\beta}[\exp(- \dbeta(\gamma(\dbeta)-V)/2)|\gamma(\dbeta)-V + \dbeta\gamma'(\dbeta)|] \\
&\leq \frac{1}{2}\EEi{\bar\beta}[\exp(- \dbeta(\gamma(\dbeta)-V)/2)(|\gamma(\dbeta)-V| + \dbeta|\gamma'(\dbeta)|)].
\]
Moreover,
\[\label{eq:first_deriv_bound}
&\EEi{\bar\beta}[\exp(- \dbeta(\gamma(\dbeta)-V)/2)(|\gamma(\dbeta)-V| + \dbeta|\gamma'(\dbeta)|)] \\
&\leq (|\gamma(\dbeta)| + \dbeta |\gamma'(\dbeta)|)y(\dbeta) + \EEi{\bar\beta}[|V|\exp(- \dbeta(\gamma(\dbeta)-V)/2)]\\
&\leq (c_1 + c_2\|\mcP\|)\frac{\mcZ_\text{max}}{\mcZ_\text{min}}e^{c_1\|\mcP\| /2} + \EEi{\bar\beta}[|V|\exp(- \dbeta(\gamma(\dbeta)-V)/2)].
\]
Before searching for a uniform bound for the second term on the right-hand side,
let us show that $|x''|$ and $|y''|$ allow for a bound of the same form. Indeed,
from the expressions in \cref{lemma:deriv_expectations}
\[
|x''(z)| &= \frac{1}{2}\left|\EEi{\bar\beta}[\exp(- \dbeta|\gamma(\dbeta)-V|/2)\sgn(\gamma(\dbeta)-V)(2\gamma'(\dbeta) + \dbeta\gamma''(\dbeta))] \right. \\
&\phantom{='}+\left.\frac{1}{4}\EEi{\bar\beta}[\exp(- \dbeta|\gamma(\dbeta)-V|/2)(|\gamma(\dbeta)-V| + \dbeta\gamma'(\dbeta)\sgn(\gamma(\dbeta)-V))^2]\right|\\
&\leq \frac{1}{2}(2|\gamma'(\dbeta)| + \dbeta|\gamma''(\dbeta)|)\EEi{\bar\beta}[\exp(- \dbeta|\gamma(\dbeta)-V|/2)]  \\
&\phantom{='}+\frac{1}{2}\EEi{\bar\beta}[\exp(- \dbeta|\gamma(\dbeta)-V|/2)((\gamma(\dbeta)-V)^2 + \dbeta^2\gamma'(\dbeta)^2)]\\
&= \frac{1}{2}(2|\gamma'(\dbeta)| + \dbeta|\gamma''(\dbeta)|+\dbeta^2\gamma'(\dbeta)^2)\EEi{\bar\beta}[\exp(- \dbeta|\gamma(\dbeta)-V|/2)]  \\
&\phantom{='}+\frac{1}{2}\EEi{\bar\beta}[\exp(- \dbeta|\gamma(\dbeta)-V|/2)(\gamma(\dbeta)-V)^2] \\
&\leq \frac{1}{2}(2|\gamma'(\dbeta)| + \dbeta|\gamma''(\dbeta)|+\dbeta^2\gamma'(\dbeta)^2)\EEi{\bar\beta}[\exp(- \dbeta(\gamma(\dbeta)-V)/2)]  \\
&\phantom{='}+\frac{1}{2}\EEi{\bar\beta}[\exp(- \dbeta(\gamma(\dbeta)-V)/2)(\gamma(\dbeta)-V)^2].
\]
Similarly,
\[\label{eq:second_deriv_bound}
|y''(z)| &\leq \frac{1}{2}(2|\gamma'(\dbeta)| + \dbeta|\gamma''(\dbeta)|)\EEi{\bar\beta}[\exp(- \dbeta(\gamma(\dbeta)-V)/2)] \\
&\phantom{M='}+\frac{1}{4}\EEi{\bar\beta}[\exp(- \dbeta(\gamma(\dbeta)-V)/2)(\gamma(\dbeta)-V + \dbeta\gamma'(\dbeta))^2] \\
&\leq \frac{1}{2}(2|\gamma'(\dbeta)| + \dbeta|\gamma''(\dbeta)| + \dbeta^2\gamma'(\dbeta)^2)\EEi{\bar\beta}[\exp(- \dbeta(\gamma(\dbeta)-V)/2)] \\
&\phantom{M='}+\frac{1}{2}\EEi{\bar\beta}[\exp(- \dbeta(\gamma(\dbeta)-V)/2)(\gamma(\dbeta)-V)^2] \\
&\leq \frac{1}{2}(2|\gamma'(\dbeta)| + \dbeta|\gamma''(\dbeta)| + \dbeta^2\gamma'(\dbeta)^2 + 2\gamma(\dbeta)^2)\EEi{\bar\beta}[\exp(- \dbeta(\gamma(\dbeta)-V)/2)] \\
&\phantom{M='}+\EEi{\bar\beta}[V^2\exp(- \dbeta(\gamma(\dbeta)-V)/2)].
\]
Thus, both functions admit the same bound. Furthermore,
\[
&(2|\gamma'(\dbeta)| + \dbeta|\gamma''(\dbeta)| + \dbeta^2\gamma'(\dbeta)^2 + 2\gamma(\dbeta)^2)\EEi{\bar\beta}[\exp(- \dbeta(\gamma(\dbeta)-V)/2)] \\
&\leq (2c_2 + \|\mcP\|c_3 + \|\mcP\|^2[c_2^2+2c_1^2])\frac{\mcZ_\text{max}}{\mcZ_\text{min}}e^{c_1\|\mcP\| /2}.
\]
It remains to bound the second term on the right-hand sides of \cref{eq:first_deriv_bound,eq:second_deriv_bound}. Note that for $p\geq 0$,
\[
&\EEi{\bar\beta}[|V|^p\exp(- \dbeta(\gamma(\dbeta)-V)/2)]\\
&= \frac{e^{\dbeta\gamma(\dbeta)/2}\mcZ(\bar\beta-\dbeta/2)}{\mcZ(\bar\beta)}\EEi{\bar\beta-\dbeta/2}[|V|^p] \\
&\leq \frac{\mcZ_\text{max}}{\mcZ_\text{min}}e^{c_1\|\mcP\| /2} M_p.
\]
where $M_p$ is the uniform $p$-moment bound from
\cref{lemma:unif_bound_V_moments_with_extremes}. We conclude by noting that
all the bounds have finite limits for $\|\mcP\|\to0$, as required.
\end{proof}

\begin{proof}[\textbf{Proof of \cref{prop:rejs_Riemann_sums}}]
Let us begin by showing that
\[
\lim_{\|\mcP\|\to 0}\sum_{i=1}^{\ell(\beta)} \rho_i = \Lambda(\beta).
\]
To this end, let us extend the definition of symmetrized rejection probability 
$\rho_i$ to arbitrary pairs $\beta<\beta'$ via
\[\label{eq:def_sym_rej_arbitrary_pair}
\bar\rho_{\beta,\beta'} := 1-\frac{\acc{\beta,\beta'}+\acc{\beta',\beta}}{2},
\]
so that $\rho_i=\bar\rho_{\beta_{i-1},\beta_i}$ for all $i\in\indsetone$. 
Now, consider a first-order Taylor expansion of $\bar\rho_{\beta,\beta'}$ around 
$\dbeta=0$ and fixed $\bar\beta=(\beta+\beta')/2$, with Lagrange remainder
\[
\bar\rho_{\beta,\beta'} &= \rho'(\bar\beta)\dbeta + \frac{1}{2}\DD{\bar\rho_{\beta,\beta'}}{\dbeta}(z)\dbeta^2,
\]
for some $z\in[0,\dbeta]$ and all $\dbeta\leq r(\bar\beta)$, with 
$r(\bar\beta)$ as in \cref{eq:def_rbarbeta}. Let us write 
$\bar\beta_i := (\beta_{i-1}+\beta_i)/2$ and $\dbeta_i:=\beta_i-\beta_{i-1}$. Since
$\beta_i\in[0,1]$, we see that
\[
\dbeta_i = \beta_i - \beta_{i-1} = 2\bar\beta_i - 2\beta_{i-1} \leq 2\bar\beta_i,
\]
and
\[
\dbeta_i = 2(\beta_i - \bar\beta_i) \leq 2(1 - \bar\beta_i).
\]
So the restriction $\dbeta_i\leq r(\bar\beta_i)$ is satisfied for all 
$i\in\indsetone$ and for any partition $\mcP\subset[0,1]$. Hence,
\[
\rho_i &= \rho'(\bar\beta_i)\dbeta_i  + \frac{1}{2}\DD{\bar\rho_{\beta,\beta'}}{\dbeta}(z_i)\dbeta_i^2.
\]
for some $z_i\in[0,\dbeta_i]$. Summing these expansions over
$i\in\{1,\dots,\ell(\beta)\}$ yields
\[
\sum_{i=1}^{\ell(\beta)} \rho_i &= \sum_{i=1}^{\ell(\beta)} \rho'(\bar\beta_i)\dbeta_i  + \frac{1}{2}\sum_{i=1}^{\ell(\beta)}\DD{\bar\rho_{\beta,\beta'}}{\dbeta}(z_i)\dbeta_i^2
\]
The first term on the right-hand side is a Riemann sum that converges to 
$\Lambda(\beta)$. We must show that the second sum vanishes in the limit. Indeed,
by \cref{lemma:unif_bound_dder_rejection},
\[
\left| \sum_{i=1}^{\ell(\beta)}\DD{\bar\rho_{\beta,\beta'}}{\dbeta}(z_i)\dbeta_i^2 \right| &\leq b_2(\|\mcP\|)\sum_{i=1}^{\ell(\beta)}\dbeta_i^2 \leq b_2(\|\mcP\|) \|\mcP\| \sum_{i=1}^{\ell(\beta)}\dbeta_i \leq b_2(\|\mcP\|) \|\mcP\| \beta,
\]
and the right-hand side goes to $0$ as $\|\mcP\|\to0$.

We now turn to proving that
\[
\lim_{\|\mcP\|\to 0}\sum_{i=1}^{\ell(\beta)} \frac{\rho_i}{\alpha_i}=\lim_{\|\mcP\|\to 0}\sum_{i=1}^{\ell(\beta)} \frac{\rho_i}{1-\rho_i} = \Lambda(\beta).
\]
To this end, consider the function
\[
f(z) &= \frac{x(z)}{1-x(z)}.
\]
Its first and second derivatives are
\[
f'(z) = \frac{x'(z)}{(1-x(z))^2}
\]
and
\[
f''(z) = \frac{x''(z)}{(1-x(z))^2} + 2\frac{(x'(z))^2}{(1-x(z))^3}.
\]
We want to use these expressions with $z\gets \dbeta$ and 
$x(z)\gets \bar\rho_{\beta,\beta'}$. Note that if we can bound uniformly (on $i$) 
the second derivative $f''(\dbeta)$, then the same proof technique used for the
first series yields the result. Indeed, $\rho_i$ and $\rho_i/(1-\rho_i)$ are 
first-order identical, since $f(0)=0$ and $f'(0)=\rho'(\bar\beta)$. 
Now, recall that
\[
\bar\rho_{\beta,\beta'} &= \rho'(\bar\beta)\dbeta + \frac{1}{2}\DD{\bar\rho_{\beta,\beta'}}{\dbeta}(z)\dbeta^2.
\]
Since $\rho'(\bar\beta)=\D{\bar\rho_{\beta,\beta'}}{\dbeta}(0)$, 
\cref{lemma:unif_bound_dder_rejection} shows that both derivatives in the above
display are uniformly bounded. Hence, $\bar\rho_{\beta,\beta'}=O(\|\mcP\|)$. And
since we are concerned with the limit $\|\mcP\|\to0$, for any
$\varepsilon\in(0,1)$, we may only consider fine enough partitions $\mcP$ such 
that $\rho_i\leq \varepsilon$ for all $i\in\indsetone$. Under this scenario, we 
have
\[
|f''(\dbeta)| \leq \frac{\left|\DD{\bar\rho_{\beta,\beta'}}{\dbeta}\right|}{(1-\varepsilon)^2} + 2\frac{\left(\D{\bar\rho_{\beta,\beta'}}{\dbeta}\right)^2}{(1-\varepsilon)^3}.
\]
Again, the two derivatives appearing in the right-hand side are uniformly bounded
by \cref{lemma:unif_bound_dder_rejection}, which concludes the proof.
\end{proof}

\subsection{Proofs for \cref{sec:adaptation}}

\begin{lemma}\label{lemma:opt_constrained_sum_convex}
For convex $f:\reals\to\reals$ and $c\in\reals$, the optimization problem
\[
\min_{x\in\reals^N}\sum_{i=1}^N f(x_i) \qquad \mathrm{s.t. }\; \sum_{i=1}^N x_i = c
\]
has a unique solution given by $x_i^\star = \frac{c}{N}$ for all
$i\in\{1,\dots,N\}$.
\end{lemma}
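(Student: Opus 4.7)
The plan is to prove optimality by a direct application of Jensen's inequality, then recover uniqueness from the strict-convexity case (which is what is actually needed in the paper's application $f(u)=u/(1-u)$ on $(0,1)$).

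First, I would note that the feasibility constraint rewrites the average of the $x_i$ as $\frac{1}{N}\sum_{i=1}^N x_i = \frac{c}{N}$. For any convex $f$, Jensen's inequality applied to the uniform probability measure on $\{x_1,\dots,x_N\}$ gives
\[
\frac{1}{N}\sum_{i=1}^N f(x_i) \geq f\!\left(\frac{1}{N}\sum_{i=1}^N x_i\right) = f\!\left(\frac{c}{N}\right),
\]
so $\sum_{i=1}^N f(x_i) \geq N f(c/N) = \sum_{i=1}^N f(x_i^\star)$, which shows that $x^\star := (c/N,\dots,c/N)$ attains the infimum. Hence $x^\star$ is a global minimizer.

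For the uniqueness claim, the key observation is that Jensen's inequality is strict whenever $f$ is strictly convex and the $x_i$ are not all equal. So if $f$ is strictly convex, any feasible $x\neq x^\star$ yields a strictly larger objective value, and $x^\star$ is the unique minimizer. The only obstacle is that the statement says merely "convex"; in the paper's use (minimizing $\sum \rho_i/(1-\rho_i)$ subject to $\sum \rho_i$ fixed, with $\rho_i\in(0,1)$) the relevant $f$ is strictly convex, so either the lemma should be read as requiring strict convexity for uniqueness, or one simply remarks that in the non-strict case $x^\star$ is a minimizer but not necessarily the unique one. I expect the cleanest write-up is to state the proof under convexity (giving optimality of $x^\star$), then add a one-line remark that strict convexity upgrades this to uniqueness.

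No step here is technically deep; the only thing to be careful about is flagging the (strict) convexity issue so that the uniqueness conclusion is justified for the actual function to which the lemma is applied downstream.
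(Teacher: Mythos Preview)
Your approach is essentially identical to the paper's: both apply Jensen's inequality to get $\sum_i f(x_i)\ge Nf(c/N)$ and note the bound is attained at $x_i=c/N$. Your added remark that uniqueness requires strict convexity is correct and in fact more careful than the paper's own proof, which states the same Jensen argument but does not justify the uniqueness claim beyond noting that the equal-coordinate point attains the bound.
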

\begin{proof}
By Jensen's inequality,
\[
\sum_{i=1}^N f(x_i) = N \frac{1}{N}\sum_{i=1}^N f(x_i) \geq Nf\left(\frac{1}{N}\sum_{i=1}^N x_i\right) = Nf\left(\frac{c}{N}\right),
\]
and the bound on the right-hand side is achieved whenever 
$x_i^\star = \frac{c}{N}$ for all $i\in\{1,\dots,N\}$.
\end{proof}

\begin{lemma}\label{lemma:maximize_rational_fun}
Let $\alpha,\beta,\gamma\in\reals$ such that $\alpha\leq\gamma$ and 
$\beta\leq \gamma$. Consider the function $\psi:[\gamma,\infty)\to(0,\infty)$ 
given by
\[
\psi(t) := \frac{t-\gamma}{(t-\alpha)(t-\beta)}.
\]
Then $\psi$ has a unique maximizer at
\[
t^\star := \gamma + \sqrt{(\gamma-\alpha)(\gamma-\beta)}.
\]
\end{lemma}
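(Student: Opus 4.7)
The plan is the standard one for a single-variable optimization: differentiate $\psi$, locate the critical points inside $(\gamma,\infty)$, and use boundary behavior to certify that the unique interior critical point is the global maximum.

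First I would record the boundary behavior. Under the hypotheses $\alpha,\beta\leq\gamma$, for any $t>\gamma$ the three factors $t-\gamma$, $t-\alpha$, $t-\beta$ are all positive, so $\psi(t)>0$ on $(\gamma,\infty)$. At the left endpoint, $\psi(\gamma)=0$ (interpreting by continuous extension in the degenerate case $\alpha=\gamma$ or $\beta=\gamma$), and $\psi(t)=O(1/t)\to 0$ as $t\to\infty$. Consequently any maximizer on $[\gamma,\infty)$ must be an interior critical point at which $\psi$ is strictly positive.

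Next I would compute $\psi'$ via the quotient rule applied to $f(t)=t-\gamma$ and $g(t)=(t-\alpha)(t-\beta)$, with $g'(t)=2t-(\alpha+\beta)$. Setting the numerator of $\psi'$ to zero yields the equation $(t-\alpha)(t-\beta)=(t-\gamma)(2t-\alpha-\beta)$, which after routine expansion reduces to the quadratic
\[
t^2-2\gamma t+\gamma(\alpha+\beta)-\alpha\beta=0.
\]
Its discriminant is $4[\gamma^2-\gamma(\alpha+\beta)+\alpha\beta]=4(\gamma-\alpha)(\gamma-\beta)\geq 0$, so the roots are $t_\pm=\gamma\pm\sqrt{(\gamma-\alpha)(\gamma-\beta)}$. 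Since $t_-\leq\gamma\leq t_+$, only $t_+$ lies in $[\gamma,\infty)$, and it is the unique critical point there.

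Combining the two steps, $\psi$ is continuous and nonnegative on $[\gamma,\infty)$, vanishes at both boundaries, and has exactly one critical point in the interior; hence this critical point is the unique global maximizer, giving $t^\star=\gamma+\sqrt{(\gamma-\alpha)(\gamma-\beta)}$. The only mild obstacle is the degenerate subcase where $\alpha=\gamma$ or $\beta=\gamma$: here the square root is zero and $\psi$ simplifies (for instance to $1/(t-\beta)$ when $\alpha=\gamma<\beta$), and a direct check confirms that the formula $t^\star=\gamma$ remains correct as a boundary maximizer via continuous extension.
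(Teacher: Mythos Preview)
Your proof is correct and takes a genuinely different route from the paper's. The paper uses the substitution $t(s)=\gamma+e^{s}\sqrt{(\gamma-\alpha)(\gamma-\beta)}$, which after some algebra transforms $\psi(t(s))$ into $\phi(\cosh(s))$ for a strictly decreasing function $\phi$; since $\cosh$ is minimized at $s=0$, the result follows without differentiation. Your approach is the standard calculus one: check boundary behavior, differentiate, and solve the resulting quadratic for the unique interior critical point. The paper's argument is slick and calculus-free but essentially requires knowing the answer in advance to guess the substitution; yours is more pedestrian but self-contained and \emph{discovers} $t^\star$ rather than verifying it.

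One small slip: in your degenerate-case discussion you wrote ``$\alpha=\gamma<\beta$'', which contradicts the hypothesis $\beta\leq\gamma$. You presumably meant $\alpha=\gamma$ with $\beta<\gamma$, in which case $\psi(t)=1/(t-\beta)$ is strictly decreasing on $(\gamma,\infty)$ and the continuous extension $\psi(\gamma)=1/(\gamma-\beta)$ is indeed the maximum, matching $t^\star=\gamma$. This edge case is not needed for the paper's downstream application (where both inequalities are strict), so the slip is inconsequential.
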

\begin{proof}
Consider the change of variable
\[
t(s) := \gamma + e^{s}\sqrt{(\gamma-\alpha)(\gamma-\beta)}.
\]
Then,
\[
\psi(t(s)) &= \frac{e^{s}\sqrt{(\gamma-\alpha)(\gamma-\beta)}}{(\gamma-\alpha)(\gamma-\beta) + e^{2s}(\gamma-\alpha)(\gamma-\beta) + e^{s}(\gamma-\alpha+\gamma-\beta)\sqrt{(\gamma-\alpha)(\gamma-\beta)}} \\
&= \frac{1}{e^{-s}\sqrt{(\gamma-\alpha)(\gamma-\beta)} + e^{s}\sqrt{(\gamma-\alpha)(\gamma-\beta)} + (2\gamma-\alpha-\beta)} \\
&= \frac{1}{2\cosh(s)\sqrt{(\gamma-\alpha)(\gamma-\beta)} + (2\gamma-\alpha-\beta)} \\
&= \phi(\cosh(s)),
\]
where 
\[
\phi(u) := \frac{1}{2u\sqrt{(\gamma-\alpha)(\gamma-\beta)} + (2\gamma-\alpha-\beta)}
\]
is strictly decreasing. It follows that 
$\psi(t(s))$ is maximized when $\cosh(s)$ is minimized; i.e., for $s^\star=0$.
Translating this back to the original scale gives
\[
t^\star = \gamma + \sqrt{(\gamma-\alpha)(\gamma-\beta)},
\]
as required.
\end{proof}

\begin{lemma}\label{lemma:TE_over_tourlength_maximizer}
Let $\Lambda>0$. The function $F:(\Lambda,\infty)\to\reals_+$ given by
\[
F(t) = \frac{t-\Lambda}{(t+1)(t(1+2\Lambda)-\Lambda)},
\]
has a unique maximizer
\[
t^\star = \Lambda\left(1+\sqrt{1+\frac{1}{1+2\Lambda}}\right).
\]
\end{lemma}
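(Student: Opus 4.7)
The plan is to match $F$ to the functional form analyzed in \cref{lemma:maximize_rational_fun} and then just cite that lemma. The first step is to factor out $1+2\Lambda$ from the second factor in the denominator, writing
\[
F(t) = \frac{1}{1+2\Lambda}\cdot\frac{t-\Lambda}{(t-(-1))\left(t-\frac{\Lambda}{1+2\Lambda}\right)}.
\]
This is of the form $\psi(t)/(1+2\Lambda)$ with $\gamma=\Lambda$, $\alpha=-1$, and $\beta=\Lambda/(1+2\Lambda)$. Since $\Lambda>0$, both $\alpha=-1\leq\Lambda=\gamma$ and $\beta=\Lambda/(1+2\Lambda)\leq\Lambda=\gamma$ hold, so the hypotheses of \cref{lemma:maximize_rational_fun} are satisfied. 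Note also that the domain $(\Lambda,\infty)$ matches $(\gamma,\infty)$.

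Applying \cref{lemma:maximize_rational_fun}, the unique maximizer is
\[
t^\star = \gamma+\sqrt{(\gamma-\alpha)(\gamma-\beta)} = \Lambda + \sqrt{(\Lambda+1)\left(\Lambda-\frac{\Lambda}{1+2\Lambda}\right)}.
\]
The remaining step is algebraic simplification of the square root: writing the difference inside as $\Lambda\cdot\frac{2\Lambda}{1+2\Lambda}$ yields
\[
\sqrt{(\Lambda+1)\cdot\Lambda\cdot\frac{2\Lambda}{1+2\Lambda}} = \Lambda\sqrt{\frac{2\Lambda+2}{1+2\Lambda}} = \Lambda\sqrt{1+\frac{1}{1+2\Lambda}},
\]
so $t^\star = \Lambda\bigl(1+\sqrt{1+1/(1+2\Lambda)}\bigr)$, as claimed.

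There is really no obstacle here beyond pattern-matching and arithmetic: the only thing to be careful about is verifying the sign/ordering hypotheses $\alpha\leq\gamma$ and $\beta\leq\gamma$ needed to invoke \cref{lemma:maximize_rational_fun}, which both follow trivially from $\Lambda>0$. Uniqueness and the fact that $t^\star\in(\Lambda,\infty)$ (needed since $F$ is only defined there) come for free from the lemma since $\sqrt{(\gamma-\alpha)(\gamma-\beta)}>0$ whenever the inequalities $\alpha\leq\gamma$ and $\beta\leq\gamma$ are strict, which they are here.
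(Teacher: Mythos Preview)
Your proposal is correct and follows essentially the same approach as the paper: factor out $1+2\Lambda$, apply \cref{lemma:maximize_rational_fun} with $\gamma=\Lambda$, $\alpha=-1$, $\beta=\Lambda/(1+2\Lambda)$, and simplify the resulting square root. You add a bit more care than the paper in explicitly checking the ordering hypotheses and confirming $t^\star\in(\Lambda,\infty)$, but the argument is otherwise identical.
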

\begin{proof}
Note that
\[
F(t) = \frac{1}{1+2\Lambda}\left(\frac{t-\Lambda}{(t+1)(t-\frac{\Lambda}{1+2\Lambda})}\right).
\]
Therefore, we can apply \cref{lemma:maximize_rational_fun} with $\gamma:=\Lambda$, $\alpha:=-1<\gamma$, and $\beta:=\frac{\Lambda}{1+2\Lambda}<
\gamma$ to obtain
\[
t^\star &= \Lambda + \sqrt{(\Lambda+1)\left(\Lambda-\frac{\Lambda}{1+2\Lambda}\right)} 
= \Lambda\left(1 + \sqrt{1+\frac{1}{1+2\Lambda}} \right),
\]
as required.
\end{proof}

\end{document}